\crefname{subsection}{subsection}{subsections}
\numberwithin{equation}{section}
\newtheorem{theorem}{Theorem}[section]
\newtheorem{proposition}[theorem]{Proposition}
\newtheorem{lemma}[theorem]{Lemma}
\theoremstyle{definition}
\newtheorem{definition}{Definition}
\newtheorem{remark}[theorem]{Remark}
\newcommand{\scr}{\mathcal}
\newcommand{\mb}{\mathbb}
\newcommand{\E}{\mathbb E}
\newcommand{\eps}{\varepsilon}
\newcommand{\Var}{\mathrm{Var}}
\newcommand{\Cov}{\mathrm{Cov}}
\newcommand{\Ber}{\mathrm{Ber}}
\newcommand{\T}{\scr{T}}
\newcommand{\q}{q}
\begin{document}
\title{Online Matching and Contention Resolution for Edge Arrivals with Vanishing Probabilities}

% Anonymized submission.

\author{Will Ma
\thanks{Graduate School of Business and Data Science Institute, Columbia University, \texttt{wm2428@gsb.columbia.edu}}
\and
Calum MacRury
\thanks{Graduate School of Business, Columbia University, \texttt{cm4379@.columbia.edu}}
\and
Pranav Nuti
\thanks{University of Chicago, Booth School of Business, \texttt{pranavn@stanford.edu}}
}
\date{}

\maketitle

% Abstract. Note that this must come before \maketitle.
\begin{abstract}
We study the performance of sequential contention resolution and matching algorithms on random graphs with vanishing edge probabilities.
When the edges of the graph are processed in an adversarially-chosen order, we derive a new OCRS that is $0.382$-selectable, attaining the "independence benchmark" from the literature under the vanishing edge probabilities assumption.
Complementary to this positive result, we show that no OCRS can be more than $0.390$-selectable, significantly improving upon the upper bound of $0.428$ from the literature.
We also derive negative results that are specialized to bipartite graphs or subfamilies of OCRS's. Meanwhile, when the edges of the graph are processed in a uniformly random order, we show that the simple greedy contention resolution scheme which accepts all active and feasible edges is $1/2$-selectable.
This result is tight due to a known upper bound.
Finally, when the algorithm can choose the processing order, we show that a slight tweak to the random order---give each vertex a random priority and process edges in lexicographic order---results in a strictly better contention resolution scheme that is $1-\ln(2-1/e)\approx0.510$-selectable.
Our positive results also apply to online matching on $1$-uniform random graphs with vanishing (non-identical) edge probabilities, extending and unifying some results from the random graphs literature.
\end{abstract}

% Title page for title and abstract only.

% Paper body
\section{Introduction}

We consider the online Bayesian selection of edges in a graph $G=(V,E)$.
In particular, each edge $e\in E$ has a random state $X_e \in \{0,1\}$ that is unknown a priori, but known to be \textit{active} (i.e., $X_e =1$) independently with probability $x_e$.
The edges arrive one-by-one, at which point their random state is revealed.
If the edge is not active, then it is discarded;
if the edge is active, then an immediate decision must be made about whether to select the edge.
The edge is only feasible to select if it does not share a vertex with any previously-selected edge. That is, the set of edges selected must form a \textit{matching} in the graph.

In the \textit{contention resolution} version of this problem, the probabilities $(x_e)_{e\in E}$ form a fractional matching in the graph.
That is,
\begin{align} \label{eqn:introFracMatch}
\sum_{e\in\partial(v)} x_e &\le1   &\forall v\in V,
\end{align}
which says that the expected number of active edges incident to any vertex $v$ is at most 1.
The selection problem can be interpreted as "rounding" a fractional matching into an integer one, where the rounding occurs online, and only active edges can be rounded up.
The goal is to provide a uniform guarantee that every edge $e\in E$ is selected with probability at least $cx_e$, for a constant $c$ as large as possible.  We note that to provide this type of guarantee, the algorithm will generally need to be randomized, and is called a contention resolution scheme.

The contention resolution problem arises naturally in applications, such as posted-price mechanism design.
Here, edges represent agents, and $x_e$ is the probability that an optimal mechanism would accept agent $e$.
The optimal mechanism is also bound by selecting a set of agents that form a matching in the graph, and hence~\eqref{eqn:introFracMatch} is satisfied.
The above rounding procedure with a uniform guarantee of $c$ would translate into a posted-price mechanism that obtains at least $c$ times the optimal social welfare, demonstrating the applicability of contention resolution.
Similar reductions exist for other problems such as prophet inequalities and stochastic probing; we defer to \citet{feldman2021online} for further details.
The feasibility constraint of selecting matchings in graphs is also well-motivated, e.g.\ with edges representing the offering of jobs to gig workers \citep{pollner2022improved}.

Aside from contention resolution, this online Bayesian edge selection problem is also related to matching in random graphs.
There, $x_e$ can be interpreted as the probability that edge $e$ exists, and of particular interest is the \textit{greedy} algorithm, which processes the edges in some order and adds any feasible existent edge to the matching.
Questions of interest include the expected number of edges matched in Erdős–Rényi random graphs as the number of vertices approaches $\infty$.  If the graph is sufficiently sparse such that~\eqref{eqn:introFracMatch} is satisfied, then contention resolution can be applied and its uniform guarantee will imply a bound on the expected number of edges matched.
% , where $G$ is a complete graph and all of the probabilities $x_e$ are equal to some value $p$.
% \footnote{
% Technically \citet{dyer1993average} analyzes an alternate model of Erdős–Rényi random graphs, but the two models behave similarly as $n\to\infty$ (see \citet{dyer1993average,mastin2013greedy}).
% }
% If $p\le1/|V|$, then~\eqref{eqn:introFracMatch} is satisfied, and the uniform guarantee provided by contention resolution implies a bound on the expected number of edges matched.

In both contention resolution and greedy matching, the order in which edges are processed is important.  This leads to the following problem variants.
\begin{itemize}
\item \textbf{Adversarial Order}:
the edges are processed in a worst-case\footnote{
We assume that this order is chosen by an adversary who knows the description of the algorithm. They cannot however change the order after the algorithm begins its execution.} order.
An online selection algorithm in this setting is called an Online Contention Resolution Scheme (OCRS).
\item \textbf{Random Order}:
the edges are processed in a uniformly random order.
An online selection algorithm in this setting is called a Random-order Contention Resolution Scheme (RCRS).
\item \textbf{Free Order}:
the edges are processed in an order chosen
% \footnote{Our FO-CRS can fix this order up-front and does not need to change .}
by the algorithm.
An online selection algorithm in this setting is called a Free-order Contention Resolution Scheme (FO-CRS).
\end{itemize}
We note that contention resolution was originally studied in the offline setting \citep{chekuri2014submodular}, where all random states are revealed before making any selections.

In this paper we consider OCRS, RCRS, and FO-CRS in the \textit{vanishing regime}, which was previously resolved by \citet{nuti2023towards} in the offline setting.
The idea of this regime is to assume every edge is active with a probability less than some small $\eps$, and provide limiting guarantees as $\eps\to0$.
This is an important regime for the following reasons. 
\begin{itemize}
\item
\textbf{Precedent}: $x_e$ being small has also been a special case of interest in other models of online matching:
\begin{itemize}
    \item Literature on the online matching with stochastic rewards model \citep{MehtaP12,mehta2014online,huang2020online,goyal2023online,HuangJSSWZ23} generally focuses on this regime. 
    \item Motivated by online edge-coloring, rounding fractional matchings online with small $x_e$ has also been studied extensively. \citet{cohen2019} showed that for bipartite one-sided vertex arrivals, a $1-o(1)$ guarantee is attainable, where $o(1)$ is a function that tends to $0$ as $\max_{e \in E} x_e \rightarrow 0$. For edge arrivals, \citet{kulkarni2022} proved a rounding guarantee of $1-1/e -o(1)$, and this was recently improved to $1-o(1)$ by \citet{blikstad2024}.
    \item The random graphs literature analyzes what happens when $|V|\to\infty$, with $x_e=p\to0$ for all $e$ if the graph is sparse, which is a special case of the vanishing regime with equal probabilities.  We relate our results to this literature at the end of \Cref{sec:introOutline}.
\end{itemize}

\item \textbf{Tightness}: The vanishing or "Poisson" regime is often where worst-case examples occur in online Bayesian selection, especially in problems like contention resolution where one is competing against a fractional benchmark.  The simplest example arises in RCRS for rank-1 matroids \citep{Lee2018}, where the guarantee $c$ cannot be larger than $1-1/e$ due to the vanishing regime. %which is also referred to as the correlation gap \citep{yan2011mechanism}.
More specifically for matchings in graphs, three settings where optimal contention resolution schemes are known have tight examples that lie in the vanishing regime: offline monotone bipartite contention resolution \citep{BruggmannZ22}, OCRS for two-sided vertex arrivals \citep{Ezra_2020}, and, RCRS for two-sided bipartite vertex arrivals \citep{macrury2023randomorder}.

Our results suggest that a plausible strategy for optimal contention resolution is to start with an algorithm for the vanishing regime, and then add adjustments as necessary to "force" the worst case to lie here (see \citet{brubach2021offline}), even though we acknowledge the second step may not be easy.
\citet{BruggmannZ22} already show how to reduce to the vanishing regime if the reduced graph can have parallel edges; however, our results are for graphs \textit{without parallel edges}.
\end{itemize}
We are therefore interested in the following question:

\begin{displayquote}
   \emph{What is the optimal OCRS, RCRS, and FO-CRS in the vanishing regime?}
\end{displayquote}

\subsection{Outline of Results} \label{sec:introOutline}

We say that an OCRS (resp.\ RCRS, FO-CRS) is \textit{$c$-selectable} if it selects every edge $e\in E$ with probability at least $cx_e$, under an adversarial (resp.\ random, free) arrival order, for all graphs $G=(V,E)$ and vectors $(x_e)_{e\in E}$ satisfying~\eqref{eqn:introFracMatch}.
We say that it is \textit{$c$-selectable in the vanishing regime} if it is $(c-f(\max_{e\in E}x_e))$-selectable for all $G$ and $(x_e)_{e\in E}$ satisfying~\eqref{eqn:introFracMatch}, where $f$ is a function satisfying $\lim_{\eps\to0}f(\eps)=0$.
We call constant $c$ the \textit{selection guarantee} or just guarantee, which lies in [0,1].
We use \textit{selectability} to refer to largest $c$ for which an OCRS, RCRS, or FO-CRS can be $c$-selectable, possibly in the vanishing regime.
By \textit{optimal} contention resolution scheme, we refer to one that achieves the selectability for a particular setting (it does not have to be instance-optimal).

\begin{theorem} \label{thm:OCRS_positive}
There is a polytime OCRS that is $\frac{3 -\sqrt{5}}{2} \approx 0.382$-selectable
in the vanishing regime.
\end{theorem}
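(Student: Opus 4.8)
The plan is to analyze an ``attenuated greedy'' OCRS. Let $c=\frac{3-\sqrt5}{2}$, i.e.\ the smaller root of $c^{2}-3c+1=0$, equivalently the number with $(1-c)^{2}=c$. For every vertex $v$ the scheme keeps a running load $\ell_v=\sum x_{e'}$ over edges incident to $v$ already processed. When an active edge $e=\{u,v\}$ arrives with both endpoints still unmatched, it reads the loads $a=\ell_u$ and $b=\ell_v$ present \emph{immediately before} $e$, and selects $e$ with probability
\[
q(a,b)=\frac{c}{(1-ca)(1-cb)}.
\]
Since $\ell_v\le 1$ for every $v$ by~\eqref{eqn:introFracMatch}, we have $(1-ca)(1-cb)\ge(1-c)^{2}=c$, so $q(a,b)\in(0,1]$ and the rule is well defined; it is clearly polynomial time.

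First I would establish the clean ``independence'' identity that pins down the constant. Pretend for the moment that, whenever $e=\{u,v\}$ is processed, the events ``$u$ unmatched'' and ``$v$ unmatched'' are independent, and let $g(\ell)$ be the probability a vertex at load $\ell$ is unmatched. Processing mass $\mathrm d\ell$ at a vertex of load $\ell$ whose partner is at load $m$ decreases $g$ by $g(\ell)\,g(m)\,q(\ell,m)\,\mathrm d\ell$, which by the definition of $q$ equals $c\,\mathrm d\ell$ \emph{for every} $m$; hence $g(\ell)=1-c\ell$ on $[0,1]$, \emph{uniformly over the adversary's order}, and then each edge $e$ with endpoint loads $a,b$ is selected with probability exactly $x_e\,g(a)\,g(b)\,q(a,b)=cx_e$. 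This is exactly the ``independence benchmark''. The value $c=\frac{3-\sqrt5}{2}$ is forced: it is the largest constant for which $q\le1$ survives at the saturated loads $a=b=1$. Moreover the same computation is rigorous without any pretending whenever the instance is a \emph{forest}: there, deleting $e$ separates $u$'s side from $v$'s side, so ``$u$ unmatched'' and ``$v$ unmatched'' genuinely depend on disjoint edge sets, and the per-edge guarantee $\Pr[e\text{ selected}]=cx_e$ follows by a short induction on arrival order (with no error terms).

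It remains to transfer this to arbitrary graphs in the vanishing regime, and this is where I expect the real difficulty. Two approximations must be controlled. The first is a Poissonization step: using $\Pr[e'\text{ selected}]=O(\eps)$ with $\eps:=\max_e x_e$, together with the exact identity $\Pr[v\text{ unmatched}]=1-\sum_{e'\ni v}\Pr[e'\text{ selected}]$ (edges at $v$ are selected mutually exclusively) and $\sum_{e'\ni v}x_{e'}\le1$, one wants an induction maintaining $\Pr[v\text{ unmatched at load }\ell]=1-c\ell\pm f(\eps)$ and $\Pr[e\text{ selected}]=(c\pm f(\eps))x_e$ for some $f(\eps)\to0$. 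The second, and substantive, issue is that endpoint-independence genuinely fails: for a common neighbor $w$ of $u$ and $v$, the events ``$\{u,w\}$ selected'' and ``$\{v,w\}$ selected'' are mutually exclusive, so ``$u$ unmatched'' and ``$v$ unmatched'' are \emph{negatively} correlated, and a naive union bound over complements degrades the guarantee all the way to $\sqrt5-2\approx0.236$. The target statement is that this correlation is $o(1)$: $\Pr[u,v\text{ both unmatched}]\ge(1-ca)(1-cb)-f(\eps)$, uniformly over all graphs and all adversarial orders. Intuitively this should hold because every channel of dependence between the two events is carried along a $u$--$v$ path avoiding $e$ (there are no parallel edges), whose first-order weight is $\sum_w x_{\{u,w\}}x_{\{v,w\}}\le\eps$, with longer paths contributing a lower-order remainder; equivalently, the relevant neighborhood of $e$ fails to be a tree only with probability $o(x_e)$, since any offending cycle must use $e$ itself, and on a tree the forest analysis above is exact. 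The delicate point -- and the one I expect to require the bulk of the argument -- is to make this ``locally tree-like'' heuristic into an explicit, order-uniform error bound: a crude error-propagation induction does \emph{not} obviously close (the per-step amplification of errors can exceed $1$ because both endpoints of an edge contribute), so one needs either a careful coupling with the local tree or a direct, sign-aware covariance estimate that prevents errors from accumulating along the arrival sequence.
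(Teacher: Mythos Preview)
Your algorithm is exactly the paper's, and your forest analysis is correct; the gap is in the transfer to general vanishing inputs, where you stop at precisely the hard step. Two concrete ideas are missing.

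First, your ``locally tree-like'' heuristic as stated does not close. The natural reading --- the \emph{active} component $S_e$ of $e$ is cycle-free w.h.p., hence conditionally the endpoints decouple --- is false: conditioning on $S_e$ being a tree itself correlates ``$u$ unmatched'' with ``$v$ unmatched'' (take a triangle $u,v,w$; then $\Pr[u,v\text{ both matched before }e]=0$, yet each is matched with positive probability even conditional on $S_e$ cycle-free). What the paper actually proves is a deterministic correlation-decay statement: if the $\ell$-neighborhood of $e$ \emph{in the underlying graph $G$} is cycle-free, then running your rule gives $\Pr[e\text{ selected}\mid X_e=1]\ge\alpha_\ell$ with $\alpha_\ell\to\alpha$. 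The mechanism is an error recursion along tree \emph{levels} (not arrival steps): writing $q_r=(1-\alpha x_r)(1-\eps_r)$ for the conditional unmatched probability at a node $r$, one shows $|\eps_r|\le\frac{\alpha}{1-\alpha}\max_i|\eps_{r_i}|$ over $r$'s children, and $\frac{\alpha}{1-\alpha}=\frac{\sqrt5-1}{2}<1$, so errors contract geometrically from the boundary up. This is exactly the ``error-propagation induction'' you worried would not close --- it does, because the contraction is per level and the factor is strictly below $1$.

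Second, for a general vanishing input the $\ell$-neighborhood in $G$ need not be cycle-free for any $\ell$ (think of the complete graph with $x_e=1/n$), so the above does not apply directly. The paper bridges this with a two-round exposure: write $X_e\ge Y_eZ_e$ with $Y_e\sim\mathrm{Ber}\bigl((1-o(1))\,x_e\log(1/\eps)\bigr)$ and $Z_e\sim\mathrm{Ber}(1/\log(1/\eps))$ independent; first reveal the $Y_e$'s to get a random subgraph $G'$, show that in $G'$ each fixed edge has a cycle-free $\ell$-neighborhood w.h.p.\ for $\ell=\log\log(1/\eps)$, greedily trim $(z_e)$ to a valid fractional matching on $G'$ (possible w.h.p.\ without zeroing any fixed edge, via a Chernoff bound on degrees in $G'$), and then run the OCRS on $(G',z')$ using the $Z_e$'s as edge states. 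The correlation-decay guarantee now applies inside $G'$, and $y_ez_e=(1-o(1))x_e$ recovers the selection probability on $G$. The point you are missing is that one must \emph{manufacture} local tree structure by this subsampling --- it is not present in $G$ itself, and it cannot be salvaged by conditioning on the active set or by bounding $\sum_w x_{uw}x_{vw}$ alone.
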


We let $\alpha$ denote the constant $\frac{3 -\sqrt{5}}{2}$, which is the smaller real number satisfying $\alpha=(1-\alpha)^2$.
This is the guarantee that \textit{would} be achieved, under a (false) assumption about the independence of edges being selected \citep[\S4.2]{Ezra_2020}.
We show that this guarantee is achieved in the vanishing regime.
Formalizing the way in which small activeness probabilities $x_e$ imply approximate independence is the main technical challenge in this work, as we explain in \Cref{sec:technical}.
We also note that we are using a different OCRS for selecting edges than \citet{Ezra_2020}, because their algorithm has a sophisticated behavior whose independence properties are difficult to analyze. 
\begin{theorem} \label{thm:OCRS_negative}
\leavevmode
\begin{enumerate}
\item No OCRS can be better than $\beta\approx0.390$-selectable, even in the vanishing regime. 
\label{enum:neg_gen}

\item No OCRS can be better than $\alpha=\frac{3 -\sqrt{5}}{2}$-selectable, even in the vanishing regime over tree graphs, if it satisfies a concentration property (see \Cref{sec:OCRS_neg_trees}). \label{enum:neg_conc}

\item No OCRS can be better than $\gamma\approx0.393$-selectable, even in the vanishing regime over tree graphs.
\label{enum:neg_tree}

\end{enumerate}
\end{theorem}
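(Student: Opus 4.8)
The plan is to construct an explicit family of tree instances in the vanishing regime, together with a fixed worst-case arrival order, on which every OCRS must under-select some edge by a factor worse than $\gamma$. A natural way to organize the instance is recursively: fix a distinguished edge $e^\star$, let each vertex have a large number $k$ of ``child'' edges of vanishing probability (with the probability profile at each level chosen to optimize the bound, always respecting $\sum_{e\in\partial(v)}x_e\le 1$), and hang a scaled copy of the same gadget, down to some bounded depth $d$, off the far endpoint of each child edge. Letting $k\to\infty$ and $d\to\infty$ puts us in the Poisson limit: the number of active edges incident to any vertex is asymptotically Poisson, and the instance is governed by a branching recursion. Edges are revealed bottom-up --- roughly, in post-order on the rooted tree, with $e^\star$ processed last --- so that when the edge joining a vertex $v$ to its parent is processed, the entire subtree below $v$ has already played out. (An alternative route is to take the hard instance behind the first part of \Cref{thm:OCRS_negative}, replace its cyclic gadgets by acyclic ones, and re-optimize; one expects to lose only the small gap between $\gamma$ and $\beta$.)

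The analytic core is to show that the constraints any $c$-selectable OCRS must satisfy on this instance become mutually inconsistent once $c>\gamma$. Feasibility forces $\sum_{e\in\partial(v)}\Pr[e\text{ selected}]\le 1$ at every vertex $v$; $c$-selectability forces $\Pr[e\text{ selected}]\ge c\,x_e$ for every edge $e$; and since the child edges at $v$ carry total probability $1-o(1)$, these together imply that by the time $v$'s parent edge is processed, $v$ has been matched within its own subtree with probability at least $c-o(1)$ --- equivalently, it is still free with probability at most $1-c+o(1)$. One then propagates this through the layers: the selection probability of a parent edge is controlled by the (small) free-probability of its lower endpoint together with the selection probabilities of the edges competing at its upper endpoint, and chaining these inequalities level by level bounds $\Pr[e^\star\text{ selected}]$ by $(\gamma+o(1))\,x_{e^\star}$. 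The constant $\gamma$ then emerges as the optimum of the resulting finite-dimensional recursion (or LP), which I would solve and then verify in closed form, identifying the equation satisfied by $\gamma\approx 0.393$.

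The main obstacle is that one cannot appeal to independence. For the distinguished edge $e^\star=ab$, the events ``$a$ free'' and ``$b$ free'' are supported on vertex-disjoint subtrees, but a general OCRS may share randomness and correlate its behavior across them (indeed across all the disjoint sub-gadgets); in general one can only say $\Pr[a,b\text{ both free at }e^\star\text{'s arrival}]\le\min(\Pr[a\text{ free}],\Pr[b\text{ free}])\le 1-c+o(1)$, which by itself yields merely the trivial $c\le 1/2$. Getting strictly below $1/2$ --- but not all the way down to $\alpha=\frac{3-\sqrt{5}}{2}$ --- is precisely the difficulty, and is the reason the tree bound $\gamma$ is strictly weaker than the value $\alpha$ that one obtains in the concentration regime of \Cref{thm:OCRS_negative}, where behavior on disjoint subtrees decouples and the ``independence benchmark'' $\alpha$ is recovered. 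So the construction must be engineered so that feasibility and $c$-selectability alone --- propagated through enough layers, and aggregated over the many parallel children so that correlations wash out in expectation --- force the bound from $1/2$ down to $\gamma$ without ever invoking independence. A secondary, more routine obstacle is to make the Poisson limit rigorous, i.e.\ to show the $o(1)$ error terms vanish uniformly over all OCRS as $k\to\infty$ and $\max_{e}x_e\to 0$; I would handle this with Poissonization and coupling estimates of the kind already needed for \Cref{thm:OCRS_positive}.
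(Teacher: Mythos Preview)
Your proposal correctly identifies the central obstacle --- an OCRS may correlate its behaviour across the two disjoint subtrees hanging from $e^\star=(a,b)$, so from $\Pr[a\text{ free}],\Pr[b\text{ free}]\le 1-c$ alone one only gets $c\le 1/2$ --- but it does not actually supply a mechanism to get below $1/2$. The phrases ``chaining these inequalities level by level'' and ``aggregated over the many parallel children so that correlations wash out in expectation'' are not arguments; they are hopes. In fact deep recursion does not help here: at every level the same obstruction recurs, and without a new idea you will keep getting $c\le 1/2$.

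The paper's proof does not use a deep tree at all, but a simple two-level star: a root $u$ with children $v_1,\dots,v_n$ (edge values $1/n$), each $v_i$ with its own $n-1$ private leaves (also $1/n$). Leaves-to-$v_i$ edges come first (phase~1), then the $(u,v_i)$ edges. The missing idea in your plan is this: writing $A_{v_i}$ for the indicator that $v_i$ survives phase~1, one can always find, by pigeonhole on $\mathrm{Var}(\sum_i A_{v_i})\ge 0$, an index $j$ with $\mathrm{Cov}(A_{v_j},\sum_i A_{v_i})\ge 0$, and the adversary places $(u,v_j)$ \emph{last}. Conditioning on $A_{v_j}=0$ then only \emph{decreases} $\mathbb{E}[\sum_{i\neq j}A_{v_i}]$. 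The second missing idea is that one does invoke independence --- just not of ``$a$ free'' and ``$b$ free''. The activations $X_{u,v_i}$ are independent of everything in phase~1, so
\[
\Pr[u\text{ matched before }(u,v_j)\mid A_{v_j}=0]\ \le\ 1-\exp\Bigl(-\tfrac{1}{n}\,\mathbb{E}\bigl[\textstyle\sum_{i\neq j}A_{v_i}\bigm|A_{v_j}=0\bigr]\Bigr)\ \le\ 1-e^{-(1-c)}
\]
via Jensen on $z\mapsto 1-e^{-z}$. Combined with $\Pr[A_{v_j}=0]\approx c$ and $\Pr[u\text{ matched before }(u,v_j)]\approx c$, inclusion--exclusion gives $c\le 1-2c+c\bigl(1-e^{c-1}\bigr)$, i.e.\ $2c+ce^{c-1}\le 1$, which is exactly the defining equation for $\gamma$. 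None of this --- the covariance pigeonhole that lets the adversary pick the last edge, the use of independence of \emph{activations} (not of free-events), and the Jensen step that produces the $e^{c-1}$ --- appears in your plan, and without them there is no route from $1/2$ down to $\gamma$.
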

In parts \eqref{enum:neg_gen} and \eqref{enum:neg_tree} of \Cref{thm:OCRS_negative}, we establish upper bounds of $0.390$ and $0.393$ over vanishing general graphs and vanishing tree graphs (which are bipartite graphs), respectively.
Part \eqref{enum:neg_conc} of \Cref{thm:OCRS_negative} says that the `independence benchmark' of $\alpha$ is tight over OCRS's that satisfy a certain concentration property.\footnote{Optimal strategies for various other kinds of matching problems do satisfy this concentration property, so it seems intuitive that an optimal OCRS would also want to have this concentration property, even though we cannot formally prove it.}  We believe it is intuitive to posit that an optimal OCRS should satisfy this property, which is also satisfied by all OCRS's in the literature (we discuss this in greater detail in \Cref{sec:technical}).
As we explain in \Cref{sec:technical}, these upper bounds use different arguments than existing ones in the literature, and also offer a significant improvement.
Indeed, the previous upper bounds were $4/9\approx0.444$ \citep{gravin2019prophet} followed by $3/7\approx0.428$ \citep{correa2022optimal} for bipartite graphs, and $0.4$ \citep{macrury2023random} for general graphs (this final result did not use vanishing edge values).

Turning to RCRS and FO-CRS, we will assume that $G$ is \textit{1-regular}, which makes it easier to describe optimal algorithms.
This is the assumption that $\sum_{e\in\partial(v)}x_e=1$ for all $v$, i.e.~\eqref{eqn:introFracMatch} is tight, which has been shown to be the worst case via reductions (see \Cref{sec:rcrs_pos}).
In the random order setting, the optimal algorithm will then be the greedy\footnote{This is not to be confused with the notion of a greedy OCRS from \citet{feldman2021online}.} CRS, which accepts every arriving edge that is active and feasible.
% , under different orderings.

\begin{theorem} \label{thm:RCRS_positive}
The greedy CRS is a $\frac{1}{2}$-selectable RCRS in the vanishing regime.
\end{theorem}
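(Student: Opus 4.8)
The plan is to pass to continuous time, reduce to a per-edge statement on $1$-regular instances, and then pin down the probability that both endpoints of a tagged edge are free at its arrival time; the key point is that in the vanishing regime the greedy dynamics are governed, up to $o(1)$, by a single scalar ODE whose value — remarkably, independent of the graph — integrates to exactly $\tfrac12$.

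First, by the reductions in \Cref{sec:rcrs_pos} we may assume $G$ is $1$-regular, i.e.\ $\sum_{f\in\partial(v)}x_f=1$ for all $v$, and since $\tfrac12$-selectability is a per-edge condition it suffices to fix one edge $e=(u,v)$ and show $\Pr[e\text{ selected}]\ge(\tfrac12-o(1))\,x_e$. Realize the random order by independent uniform arrival times $\theta_f\in[0,1]$ together with independent activeness bits. Since $e$ is irrelevant to the greedy run before time $\theta_e$, writing $\tau_u,\tau_v\in[0,1]$ for the times at which $u,v$ become matched by greedy run on $G\setminus\{e\}$ (set to $1$ if that never happens), we get
\[
\Pr[e\text{ selected}\mid e\text{ active}]=\Pr\!\big[\theta_e\le\min(\tau_u,\tau_v)\big]=\E\big[\min(\tau_u,\tau_v,1)\big]=\int_0^1\Pr\!\big[u,v\text{ both free just before }t\big]\,dt,
\]
the last probability being with respect to greedy run on $G\setminus\{e\}$. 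So it suffices to show this integral is at least $\tfrac12-o(1)$.

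Second, I would argue the integrand equals $\tfrac1{(1+t)^2}+o(1)$. Let $q_w(t):=\Pr[w\text{ free just before }t]$. In the vanishing regime active edges incident to $w$ arrive essentially as a rate-$1$ Poisson process (since $\sum_{f\ni w}x_f=1$ with all $x_f$ tiny), and a standard infinitesimal computation gives, up to $o(1)$,
\[
-q_w'(t)=q_w(t)\sum_{f=(w,z)\in\partial(w)}x_f\,\Pr\!\big[z\text{ free at }t^-\,\big|\,w\text{ free at }t^-\big].
\]
The crucial input — the approximate-independence phenomenon flagged in \Cref{sec:technical} for the OCRS — is that in the vanishing regime, conditioning on the freeness of $O(1)$ fixed vertices perturbs another vertex's freeness probability by only $o(1)$; in particular $\Pr[z\text{ free}\mid w\text{ free}]=q_z(t)+o(1)$. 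Substituting, each $q_w$ solves the idealized system $q_w'=-q_w\sum_{f=(w,z)}x_fq_z$ up to $o(1)$ error, and since $\sum_{f\ni w}x_f=1$ this system has the \emph{graph-independent} unique solution $q_w(t)=\tfrac1{1+t}$ (one checks $-\tfrac1{(1+t)^2}=-\tfrac1{1+t}\sum_f x_f\cdot\tfrac1{1+t}$). A Grönwall estimate — using that the limiting ODE only contracts perturbations — then gives $q_w(t)=\tfrac1{1+t}+o(1)$ uniformly over all (essentially-)$1$-regular vanishing instances, $G\setminus\{e\}$ included. Applying approximate independence once more to the pair $u,v$, which are non-adjacent in $G\setminus\{e\}$ as $G$ has no parallel edges, gives $\Pr[u,v\text{ free at }t^-]=q_u(t)q_v(t)+o(1)=\tfrac1{(1+t)^2}+o(1)$, so $\int_0^1\Pr[u,v\text{ free at }t^-]\,dt=\int_0^1\tfrac{dt}{(1+t)^2}+o(1)=\tfrac12+o(1)$, which is $\ge\tfrac12-o(1)$ as needed. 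Tightness is by the known upper bound; the $d$-regular tree with $x_e=1/d$ (where $q(t)=\tfrac1{1+t}$ and the two endpoints of $e$ are exactly independent) is already an extremal instance.

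The main obstacle is the approximate-independence step. Correlations between the freeness of two vertices propagate along every path joining them, and a $1$-regular vanishing instance can be dense with short cycles and short $u$–$v$ walks (e.g.\ $K_{d,d}$ with $x_e=1/d$), so one cannot simply invoke tree-like local structure; rather, each low-weight edge transmits only a vanishing amount of influence, and the point is that the aggregate influence — over all walks and all depths — is still $o(1)$. I would establish this using the machinery developed for \Cref{thm:OCRS_positive} (coupling greedy to an idealized process in which the relevant blocking events are forced independent, or a depth-truncated exploration with geometrically decaying influence from deeper structure). After that the rest is routine bookkeeping: the $o(1)$ error terms in the ODE derivation (no two incident edges arriving in an infinitesimal window, the effect of conditioning on one arrival time), the Grönwall estimate, and absorbing the $x_e$-sized slack in the vertex sums of $G\setminus\{e\}$.
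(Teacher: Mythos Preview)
Your calculus is right — the ODE $q'=-q^2$, the solution $q(t)=\tfrac1{1+t}$, and $\int_0^1 q^2=\tfrac12$ all match the paper — but the route you take to get there has a real gap at exactly the point you flag, and the fix you propose does not work.

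The paper's argument is structurally different and sidesteps the difficulty entirely. It uses the fact that the greedy CRS depends \emph{only on the active edges}: whether $e$ is selected is a function of its active component $S_e$ and the induced arrival order there. One then invokes the lemma of \citet{nuti2023towards} that, in the vanishing $1$-regular regime, $S_e$ converges in distribution to the two-rooted Poisson$(1)$ Galton--Watson tree $\mathcal T$. This reduces the whole problem to computing $\Pr[(u,v)\in\mathcal M_{\mathcal T}]$ on $\mathcal T$, where the two subtrees below $u$ and $v$ are \emph{literally} independent, so $q(t)^2$ is exact and the integral equation $q(t)=\exp(-\int_0^t q)$ follows from the branching recursion. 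No approximate-independence statement on $G$ is ever needed.

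By contrast, you work directly on $G\setminus\{e\}$ and need two approximate-independence claims: $\Pr[z\text{ free}\mid w\text{ free}]=q_z(t)+o(1)$ for every adjacent pair (to derive the ODE), and the same for the specific pair $u,v$ (to close the argument). You propose to justify these with ``the machinery developed for \Cref{thm:OCRS_positive}'', but that machinery does not transfer. The correlation-decay lemma there (\Cref{lem:error_decrease}) is tied to the specific acceptance rule of \Cref{alg:tree_aom}: the recursion \eqref{eqn:matched_tree_recurrence_game} has the multiplicative product form that makes errors contract by a factor $\alpha/(1-\alpha)<1$ per level, and crucially it is a recursion over a \emph{deterministic} tree neighborhood in the full graph. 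Greedy has no such contraction structure, and moreover its behavior does not factor through the full-graph neighborhood (only through the active subgraph). The two-round exposure of \Cref{thm:vanishing_to_local} likewise produces a locally-tree-like \emph{deterministic} graph on which one runs a fresh CRS; it does not give you correlation bounds for greedy on the original instance. Your alternative suggestion — that each low-weight edge transmits vanishing influence and the aggregate over all walks is still $o(1)$ — is exactly the hard statement, and on instances like $K_{d,d}$ with $x_e=1/d$ there are $\Theta(d)$ length-$2$ walks between any pair, so the ``per-walk vanishing'' heuristic does not sum trivially.

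The missing idea is the one the paper uses: pass to the active component first. Once you know $S_e$ is (up to $o(1)$ in total variation) a Galton--Watson tree, independence of the two sides is free, and your ODE computation becomes the exact recursive analysis on $\mathcal T$ that the paper carries out.
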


\Cref{thm:RCRS_positive} is tight because no RCRS can be better than 1/2-selectable, even in the vanishing regime \citep{macrury2023random}.
It is worth noting that the 0.474-selectable RCRS of \citet{macrury2023random}, which is state-of-the-art for non-vanishing edges, behaves identically to the greedy algorithm in the vanishing regime, and hence is 1/2-selectable.
This contrasts with the OCRS variant, where the state-of-the-art algorithm originating from \citet{Ezra_2020} cannot attain the selection guarantee of $\alpha$ from the vanishing regime (its guarantee is upper-bounded by 0.361, as shown by \citet{macrury2023random}).

\begin{theorem}\label{thm:free_order_positive}
There exist random orderings of edges under which the greedy CRS is a $1 - \ln \left(2 - \frac{1}{e}\right) \approx 0.510$-selectable FO-CRS in the vanishing regime.
\end{theorem}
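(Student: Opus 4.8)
The free order I would analyze is the one from the introduction: draw independent priorities $Y_v \sim \mathrm{Uniform}[0,1]$ for the vertices $v$, and process the edges in lexicographic order of the ordered pair (lower-priority endpoint, higher-priority endpoint). As in the RCRS analysis (\Cref{sec:rcrs_pos}), a reduction lets me assume $G$ is $1$-regular, i.e.\ $\sum_{e\in\partial(v)}x_e=1$ for all $v$. The one combinatorial fact I need is that this global order processes the edges incident to any fixed vertex $u$ in increasing order of the \emph{other} endpoint's priority; hence a fixed edge $e=(u,v)$ is processed exactly after the edges $(u,w)$ with $Y_w<Y_v$ (from $u$'s side) and after the edges $(v,z)$ with $Y_z<Y_u$ (from $v$'s side). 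So, conditioning on $e$ being active and on $Y_u=p$, $Y_v=q$, greedy selects $e$ iff $u$ survives its edges to neighbors of priority $<q$ \emph{and} $v$ survives its edges to neighbors of priority $<p$, and in the vanishing regime these two survival events become independent.

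The plan is then to reduce the question to a scalar ODE coming from the local branching-process limit. In the vanishing regime the bounded-depth neighborhood of $e$ is, up to $o(1)$ error, a two-sided tree rooted at the endpoints of $e$, with i.i.d.\ uniform priorities and active degrees that are asymptotically $\Pois(1)$ (using $\max_e x_e\to0$ and $1$-regularity). Let $\psi(s,q)$ be the limiting probability that a vertex of priority $s$ is still unmatched once it has processed all of its edges to neighbors of priority $<q$. By the above, the selection probability conditioned on $e$ active equals $\int_0^1\!\int_0^1\psi(p,q)\,\psi(q,p)\,dp\,dq+o(1)$. Tracking the hazard rate at which a single vertex $u$ of priority $s$ gets matched as it processes the edge to a neighbor of priority $r$ --- this rate is (total active degree of $u$) $\times$ (probability that neighbor is free when the edge is processed) $\to 1\cdot\psi(r,s)$, since from the neighbor's side that edge sits at position $Y_u=s$ --- gives the fixed-point relation
\[
\psi(s,q)=\exp\!\Bigl(-\int_0^q\psi(r,s)\,dr\Bigr),\qquad\psi(s,0)=1 .
\]

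Given this, the computation is short. Set $R(q):=\int_0^1\psi(s,q)\,ds$, so $R(0)=1$. Writing $\psi(q,s)=e^{-g(s)}$ with $g(s):=\int_0^s\psi(r,q)\,dr$ (so that $g'(s)=\psi(s,q)$, $g(0)=0$ and $g(1)=R(q)$), a substitution gives $\int_0^1\psi(s,q)\,\psi(q,s)\,ds=\int_0^1 g'(s)e^{-g(s)}\,ds=1-e^{-R(q)}$. Differentiating $R$ under the integral sign with the fixed-point relation, $R'(q)=-\int_0^1\psi(s,q)\psi(q,s)\,ds=-\bigl(1-e^{-R(q)}\bigr)$; solving this separable ODE with $R(0)=1$ yields $R(q)=\ln\!\bigl(1+(e-1)e^{-q}\bigr)$, whence
\[
\int_0^1\!\int_0^1\psi(p,q)\,\psi(q,p)\,dp\,dq=\int_0^1\bigl(1-e^{-R(q)}\bigr)\,dq=R(0)-R(1)=1-\ln\!\Bigl(2-\tfrac1e\Bigr),
\]
which is $>1/2$, so this FO-CRS strictly beats the random order.

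The hard part is not this calculation but making the limiting recursion rigorous: one must show that in the vanishing regime the depth-$O(1)$ neighborhood of a fixed edge is tree-like with probability $1-o(1)$ and well approximated by the branching process above, so that the two sides of $e$ --- and, recursively, a vertex's survival status and its neighbors' subtrees --- are asymptotically independent, with the $o(1)$ errors controlled uniformly. This is the approximate-independence issue flagged in \Cref{sec:technical}, but it is considerably more tractable here than for the OCRS result because greedy is memoryless: I would formalize it by truncating the recursion at a large constant depth, union-bounding the $o(1)$-probability events that a relevant neighborhood fails to be a tree or that some vertex has an atypically large active degree, and coupling the truncated process with the branching-process recursion to bound the per-step error.
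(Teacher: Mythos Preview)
Your proposal is correct and follows essentially the same approach as the paper: the same lexicographic vertex-priority order, the same reduction to a Galton--Watson tree (the paper packages this as \Cref{thm:reductiontogw}, which is exactly the truncation-and-coupling argument you sketch), the same two-variable recursion $\psi(s,q)=\exp\bigl(-\int_0^q\psi(r,s)\,dr\bigr)$, and the same use of $\partial_q\psi(s,q)=-\psi(s,q)\psi(q,s)$ to collapse the double integral.

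The one difference is purely computational. The paper solves the PDE directly, obtaining the closed form $\psi(s,q)=e^s/(e^s+e^q-1)$ via the additional identity $\partial_s\psi(s,q)=\psi(s,q)(1-\psi(s,q))$, and then evaluates $\int_0^1(1-\psi(s,1))\,ds$. You instead integrate out $s$ first, reducing to a scalar ODE for $R(q)=\int_0^1\psi(s,q)\,ds$ without ever writing down $\psi$ explicitly. Your route is a bit slicker for getting just the constant; the paper's route has the side benefit of the explicit formula for $\psi$. Both are short and either would be fine.
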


\Cref{thm:free_order_positive} establishes a strict separation between free and random order, for edge arrivals in graphs in the vanishing regime.
To our knowledge, results that show how to improve upon random order are generally rare. %\citep[see][]{giambartolomei2023prophet}
In stochastic probing and price of information problems on graphs in which the algorithm can choose the order, many results \citep{Gamlath2019,Fu2021} still defer to analyzing the random order setting at some point.

\textbf{Relation to oblivious matching and query-commit.} In the edge-weighted \textit{oblivious matching problem}, the edges have known weights, however whether they exist or not is determined by an unknown (arbitrary) distribution.  The algorithm probes edges in an order of its choosing, subject to the constraint that if an edge is found to exist, it must be added to the matching. By randomly perturbing the edge weights and executing greedily, \citet{tang2023} prove a $0.501$-approximation ratio against the matching of largest expected weight. The oblivious matching setting generalizes the edge-weighted \textit{query-commit matching problem}, the latter of which has been studied extensively, beginning with the work of \citet{chen,BansalGLMNR12} (see \citet{borodin2023} for a detailed overview of this problem and its variants). In this problem, the edges are realized independently according to known probabilities. \citet{Fu2021} design a $0.533$-selectable random-order \textit{vertex arrival} OCRS which they use with the linear program of  \citet{costello2012matching, Gamlath2019} to get an approximation ratio of $0.533$ for this problem. (This selection guarantee, and thus approximation ratio, was recently improved to $0.535$ by \citet{macrury2023randomorder}). Assuming the edge probabilities are vanishing, we can apply our free-order OCRS to get a $0.510$ approximation ratio for this problem, and thus beat $1/2$ in a different way.

We summarize our results and how they position in the literature in \Cref{tab:my_label}.

\begin{table}[H]
\centering
\begin{tabular}{c|c|c} \hline
Selectability Bounds & General Edge Values $x_e$ & Vanishing Edge Values $x_e\to0$ \\ 
\hline
OCRS & $\ge 0.344$ [3] & $\to\ \ge \textbf{0.382}$ [\S\ref{sec:ocrs_positive}] \\ & $\le 0.4$ [3] $\to\ \le\textbf{0.390}$  [\S\ref{sec:OCRS_neg}]  & $\le 0.428$ [2] $\to\ \le\textbf{0.390}$, $\le\textbf{0.393}$ (bipartite) [\S\ref{sec:OCRS_neg}]  \\
\hline
RCRS & $ \ge 0.474$ [3]& $\to\ \ge\textbf{0.5}$ [\S\ref{sec:rcrs_pos}] \\
& $\le 0.5$ [3]  &  $\le 0.5$ [3] \\
\hline
FO-CRS & $\ge 0.474\ $ [3]  & $\to\ \ge\textbf{0.510}$ [\S\ref{sec:rcrs_pos}] \\
& $\le 0.544\ $[1]& $\le 0.544\ $ [1] \\
\hline
\end{tabular}
\caption{Summary of known results, with new results \textbf{bolded}.  "$\ge$" refers to lower bounds on $c$ (algorithmic results), "$\le$" refers to upper bounds (impossibility results), and arrows indicate improvement from state of the art. [1] is \citet{karp1981maximum}, [2] is \citet{correa2022optimal}, and [3] is \citet{macrury2023random}.
% Some of our new negative results about OCRS are not summarized in this table.
}
\label{tab:my_label}
\end{table}

\textbf{Relation to matching in random graphs.}
\citet{dyer1993average} show, among other results, that the greedy algorithm leaves 50\% of the vertices matched in expectation when it is executed on an Erdős–Rényi random graph with $n$ vertices and $n/2$ edges (i.e.\ average degree $1$), as $n\to\infty$, if it processes the edges in a uniformly random order.
They show that a modified order, in which a vertex is first chosen uniformly at random and then an edge incident to it (if any) is chosen uniformly at random, leaves $1 - \ln \left(2 - \frac{1}{e}\right)\approx51\%$ of the vertices matched.  Their arguments are combinatorial.
\citet{mastin2013greedy} show that these answers (under either order) are unchanged if one considers a complete bipartite Erdős–Rényi graph instead, using the differential equation method \citep{de}.
Our \Cref{thm:RCRS_positive,thm:free_order_positive} can be viewed as a stronger version of these results (in the special case with average degree 1), because:
\begin{itemize}
\item We allow for an \textit{arbitrary} graph with vanishing probabilities and average degree 1, whereas both the results for complete and complete bipartite Erdős–Rényi graphs assume that edge probabilities are equal;
\item Despite the asymmetry in our probabilities, we provide \textit{uniform} guarantees to every edge for being selected, and show that the guarantee does not get worse.  Put another way, our guarantee holds for max-weight matching instead of just max-cardinality matching.
\end{itemize}
As explained in \Cref{sec:technical}, we also use a differential equation method based on the Galton-Watson trees from \citet{nuti2023towards}.
Interestingly, the random edge orderings we consider in \Cref{thm:free_order_positive} are different\footnote{We draw a random priority for each vertex. The property we require in our analysis for the ordering of the edges is that an edge $(u, w)$ is processed before $(u, v)$ exactly when $w$'s priority is less than $v$'s priority.} from the modified order described above.

We note here that some proofs in the following sections are omitted, but can be found in the appendices.

%\Willdelete{\subsection{Additional Terminology and Notation}}

\subsection{Techniques}
% \subsection{Technical Comparison with Previous Work}
\label{sec:technical}

%\Willcomment{In OR, the Introduction usually shouldn't exceed 8 pages after 1.5x spacing.  I suggest we either call this "Section 2", or move the technical overview for each part into its respective section.}

Before explaining the techniques used in proving our results, we formalize what we mean for an input $(G,\bm{x})$ to have vanishing edges values. Our definition includes the Erdős–Rényi random graph with vanishing edge probabilities case, but allows us to work with a sequence of inputs $(G(n), \bm{x}(n))_{n \ge 1}$ dependent on $n$ in an arbitrary way. 
\begin{definition} \label{def:vanishing_edge_values}
Given an input $(G,\bm{x}) =(G(n), \bm{x}(n))$ indexed by a parameter $n \in \mb{N}$, we say it has \textit{vanishing edge values}, provided there exists $(\eps(n))_{n \ge 1}$, such that $x_{e}(n) \le \eps(n)$ for each edge $e$ of $G(n)$, where
$\eps(n) \rightarrow 0$ as $n \rightarrow \infty$. In this case, we say that
$(G(n), \bm{x}(n))$ has \textit{vanishing edge values with respect to} $\eps(n)$. When the context
is clear, we drop the dependence on $n$ in our notation.

In addition, we say a contention resolution scheme is $c$-selectable on an input $(G,\bm{x})$ with vanishing edge values,
provided it is $c_n$-selectable on $(G(n), \bm{x}(n))$, where $c_n \rightarrow c$ as $n \rightarrow \infty$. 

Finally, we say that a sequence of events $(\scr{E}(n))_{n \ge 1}$ occurs \textit{with high probability} (w.h.p.), provided $\Pr[\scr{E}(n)] \rightarrow 1$ as $n \rightarrow \infty$.
\end{definition}

In the rest of this section, we will summarize the techniques used in proving our results. We begin with our positive results for
RCRS and FO-CRS, as the algorithms are very simple to describe, and the analysis is based on the Galton-Watson approach developed by \citet{nuti2023towards} for an offline CRS with vanishing edge values.
We then discuss our positive result for OCRS, where we first explain why the Galton-Watson approach does not seem
applicable in this setting. Finally, we discuss the constructions used in proving our negative results for OCRS.
% and \red{expand upon why we conjecture $\alpha \approx 0.382$ is the optimal selectability for vanishing edge values.} \PN{I haven't really discussed this, and I don't know how to either. Maybe we should drop this here.}

% and follow the approach in \citet{nuti2023towards}  which establishes a similar result in the offline setting. Their approach, in turn, is similar in spirit to approaches using the concept of local weak convergence to formalize various statistical mechanical heuristics \citet{BLS2013}. 

\textbf{RCRS and FO-CRS positive results.}
Let us assume that $(G,\bm{x})$, with $G=(V,E)$ is a $1$-regular input, i.e., $\sum_{e \in \partial(v)} x_e =1$
for each $v \in V$. (This is without loss, as we explain in \Cref{sec:rcrs_pos}). Both positive results then analyze the \textit{greedy CRS} on $G=(V,E)$, which accepts any active edge whenever possible, where the only difference between the settings is the order in which the edges are processed. For RCRS, we assume each
edge $e$ has an \textit{arrival time} drawn independently and uniformly at random (u.a.r.) from $[0,1]$. For FO-CRS, we first draw a \textit{seed} from $[0,1]$ independently and u.a.r for each vertex. We then define $e=(u,v)$ to arrive at time $\min\{t_u,t_v\} + \eps \max\{t_u,t_v\}$, where $t_u$ and $t_v$ are the seeds of $u$ and $v$, and $\eps$ is any number with $\eps < \min_{u \neq v} | t_u - t_v|$. This ensures that the ordering is primarily decided by  $\min\{t_u,t_v\},$ and in case there is a tie, the ordering is then decided by $\max\{t_u,t_v\}$. We refer to this order as a \textit{lexicographical} random ordering. 

%\CMc{Is this correct? Maybe good to motivate it better, its dense.} \PN{Yeah, it's correct. Maybe we can add a little motivation.}

% The proof of \Cref{thm:OCRS_positive} is divided into two parts. We
% first design an OCRS (\Cref{alg:tree_aom}) for graphs with \textit{arbitrary} edge values, yet with girth\footnote{The girth of $(G,\bm{x})$ is the length of the shortest cycle in $G$ restricted to
% edges $e$ with $x_e > 0$.} $g \ge 3$. For such inputs,
% we prove \Cref{alg:tree_aom} is $\alpha_{\lfloor (g-1)/2 \rfloor}$-selectable, where $\alpha_\ell \rightarrow \alpha$ as $\ell \rightarrow \infty$ (see \Cref{thm:OCRS_locally_tree_positive} and \eqref{eqn:alpha_k}
% for the precise definition of $\alpha_\ell$). This
% convergence occurs rapidly; in particular, $\alpha_7 \ge 0.3658$, and so \Cref{alg:tree_aom} beats the state-of-the-art for inputs
% with girth $g \ge 15$. 
%removed the word free because I think it has more to do with have the freedom to order things

For general edge values, the state-of-the-art is the RCRS of \citet{macrury2023random}. This algorithm
is very similar to the greedy CRS, except that edges with large $x_e$ are necessarily \textit{dropped} or \textit{attenuated} with probability $a(x_e)$ for a carefully engineered function $a$. Since $a(0)=1$, this algorithm is asymptotically identical to the greedy CRS for vanishing edge values. Due to this connection, a natural approach to prove \Cref{thm:RCRS_positive} would be to improve the analysis of \citet{macrury2023random} for vanishing edge values.
Unfortunately, their analysis is inherently lossy---when lower bounding the probability $e$ is matched, it only considers vertices at distance \textit{at most} $3$ from $u$ or $v$.

We follow instead the approach of \citet{nuti2023towards}
and \textit{reduce} the problem to analyzing a greedy matching algorithm on a certain randomly generated tree. To explain this approach, fix $e =(u,v) \in E$ which is active, and define $S_e$ to be the connected component of $e$ induced by the \textit{active} edges of $G$. %(That is, a vertex $w \in V$ is in $S_e$ if and only if there exists a path $P$ between $w$ and an endpoint of $e$, where each edge of $P$ is active). 
There are three essential observations underlying our analysis:
\begin{enumerate}
    \item As $n \rightarrow \infty$, $S_e$ looks like a (finite) random tree $\T$, constructed by drawing independent Galton-Watson trees $\T_u$ (at $u$), $\T_v$ (at $v$) with Poisson parameter $1$, and setting $\T = \T_u \cup \T_v \cup \{e\}$.
    \item Whether the greedy CRS selects a particular active edge $e$ depends entirely upon on $S_e$, and the order in which the edges of $S_e$ are presented to the algorithm. \label{enum:property_active}
    \item The edges of $S_e$ are themselves presented to the greedy CRS in a random order.
\end{enumerate}
Using these observations, we equate the probability the greedy CRS selects $e$ to the probability the greedy matching algorithm on $\scr{T}$ selects $e$, when presented the edges of $\scr{T}$ in a random order
(see \Cref{thm:reductiontogw}). Assuming this reduction, let us now sketch how the analysis on $\T$ proceeds for u.a.r. edge arrivals. 

We define $q(t)$  to be the probability the vertex $u$ in edge $e$ is not matched when $e=(u,v)$ arrives, given it arrives at time $t$. Note that due to the symmetry between $u$ and $v$, this is also the conditional probability $v$ is not matched. Then, 
since $\T_u$ and $\T_v$ are drawn independently, and $e$ is matched if only if $u$ and $v$ are \textit{not} matched when $e$ arrives,
$e$ is matched w.p. $\int_{0}^{1} \q(t)^2 dt$ after integrating over the arrival time of
$e$. By considering the children of $u$ in $\T_u$, and using the recursive nature of the Galton-Watson tree process, we then prove that $\q(t)$ satisfies a
certain integral equation. Using the fact that $q(0)=1$, we show that the integral equation has the unique solution $q(t) = 1/(t+1)$, and so $\int_{0}^{1} \q(t)^2 dt =1/2$ as desired.

For a lexicographical random ordering, while the analysis is similar in broad strokes, it is more involved,
as we have to consider the function $q(t_1,t_2),$ which is the probability vertex $u$ of $e$ is unmatched, given the seed of $u$ is $t_1$
and the seed of $v$ is $t_2$.
 % Together these observations tell us that to analyse the behavior of the greedy algorithm on an input with vanishing edge values, it suffices to analyse its behavior on a random tree, where we assume the edges of the random tree arrive in a random order. Our proof will thus consist of a reduction to the random tree case (closely following \citet{nuti2023towards}), and then an analysis of the random tree case involving a recursive-differential equation arising out of our construction of $T$.

% \wnote{I moved all this from 1.1.  I agree with Pranav that it doesn't make sense to discuss it there, unless we really want to mention the $g\ge15$ result}
% \begin{quote}

% \CMc{I realize that we should maybe wait on discussing this reduction, as its not stated precisely below, and is misleading. We can't just use a CRS which works
% well for large girth. We need a CRS which works any time a particular edge is locally-tree-like. }
% The next step of the proof involves proving a black-box reduction from the vanishing edge value regime to the large
% girth regime (see \Cref{thm:vanishing_to_local}). Roughly speaking, we show how to use a CRS that is $c$-selectable for inputs
% with girth $g \rightarrow \infty$ to design a $c$-selectable CRS for inputs with vanishing edge values.

% \end{quote}

\textbf{OCRS positive result.}
The techniques used in the proof of \Cref{thm:OCRS_positive} differ substantially from the RCRS/FO-CRS setting. This is because
the decisions of our OCRS depend on the active \textit{and} inactive edges of $G$. As such, \eqref{enum:property_active} does \textit{not}
hold, and so we cannot reduce the problem to analyzing a matching algorithm on $\T$. The OCRS of \citet{Ezra_2020}, which provides
the state-of-the-art lower bound on OCRS selectability \citep{macrury2023random}, also has this limitation.
% This is because
% in order to analyze a CRS via the Galton-Watson approach of \citet{nuti2023towards}, the CRS's decisions can only depend on
% the \textit{active} edges of the input. As discussed, this property is satisfied by the greedy CRS. 

Before we describe our OCRS, we first review the OCRS of \citet{Ezra_2020}, and explain why it is challenging to
prove it is $\alpha \approx 0.382$-selectable on inputs with vanishing edge values (even if we abandon the Galton-Watson approach).
This OCRS is given a graph $G=(V,E)$ with an arbitrary fractional matching $\bm{x} = (x_e)_{e \in E}$. It then selects arriving edge $e$ with probability $c/\Pr[\text{$u$ and $v$ unmatched before $e$}]$, when $e$ is active and its endpoints are unmatched. This ensures it selects each edge $e \in E$ with probability \textit{exactly} $cx_e$, and the goal is to make $c$ as large as possible.

%It is defined recursively in terms of the arriving edges of $E$, and uses the idea of \textit{exact selection} in terms of a constant $c \in [0,1]$. 

%Trying to reduce the length a little bit here
In order to prove this is well-defined, one must show $c \le \Pr[\text{$u$ and $v$ unmatched before $e$}]$, while assuming the induction hypothesis that every previous edge $f$ before $e$ was selected
w.p. $c x_f$. Now, if $G$ is a tree, then the events ``$u$ is matched before $e$'' and ``$v$ is matched before $e$''
are independent, and so the induction hypothesis
yields $\Pr[\text{$u$ unmatched before $e$}] \cdot \Pr[\text{$v$ unmatched before $e$}] \ge (1-c)^2$, so one can set $c = \alpha$, as $\alpha = (1- \alpha)^2$.

Suppose now that $(G,\bm{x}) = (G(n), \bm{x}(n))$ has vanishing edges values. \citet{nuti2023towards}
proved that $S_e$ is cycle-free w.h.p (here $S_e$ is defined as before). Thus, restricted to the active edges of $G$, the neighborhood of $e$ looks \textit{tree-like}. Suppose that conditional on $S_e$
being cycle-free, the events ``$u$ is unmatched before $e$'' and ``$v$ is unmatched before $e$'' are independent. Since $S_e$ occurs w.h.p., this would be enough to prove $\Pr[\text{$u$ unmatched before $e$}] \cdot \Pr[\text{$v$ unmatched before $e$}] \ge (1 -o(1)) (1-c)^2$,
and thus prove \Cref{thm:OCRS_positive}. However, this conditional
independence does \textit{not} hold for arbitrary inputs\footnote{Consider a graph on $3$ vertices $u,v,w$ with uniform edge values, where $(u,v)$ arrives last. Then $\Pr[\text{$u$ and $v$ matched before $(u,v)$}]=0$, yet $\Pr[\text{$S_{u,v}$ is cycle-free}] > 0$, and $\Pr[\text{$u$ matched before $e$} | \text{$S_e$ cycle-free}]$, $\Pr[\text{$v$ matched before $e$} | \text{$S_e$ cycle-free}] >0$.}, and so we do not analyze this OCRS.

% and so we do not analyze the OCRS of \citet{Ezra_2020}.

We instead prove \Cref{thm:OCRS_positive} in two self-contained parts: \Cref{thm:OCRS_locally_tree_positive} and \Cref{thm:vanishing_to_local}. 
In
\Cref{thm:OCRS_locally_tree_positive}, we allow $\bm{x}$
of $G=(V,E)$ to be an arbitrary fractional matching. We say that an edge
$e$ is $\ell$-\textit{locally-tree-like}, provided there are no cycles in the $\ell$-neighborhood
of $e$. We design an OCRS (\Cref{alg:tree_aom}) for $G$ 
%whose selection guarantee only applies to edges which are $\ell$-locally-tree-like. Specifically, each such active edge is matched
in which each active edge which is $\ell$-locally-tree-like is matched with probability $\alpha_\ell$,  where $\alpha_\ell$ converges to $\alpha \approx 0.382$ as $\ell \rightarrow \infty$ (see \eqref{eqn:alpha_k}
for the explicit definition of $\alpha_\ell$). Note that if $G$ has girth $g \ge 3$, then any edge is $\lfloor (g-1)/2 \rfloor$ locally-tree-like,
and so \Cref{alg:tree_aom} is $\alpha_{\lfloor (g-1)/2 \rfloor}$-selectable on such inputs. The convergence $\alpha_\ell \to \alpha$ occurs rapidly; in particular, $\alpha_7 \ge 0.3658$, and so \Cref{alg:tree_aom} beats the state-of-the-art of $0.344$ for inputs
with girth $g \ge 15$. 

\Cref{alg:tree_aom} is related to the OCRS of \citet{Ezra_2020}, however it is \textit{not}
defined recursively, and so it is easier to analyze. Specifically, when an active edge $e =(u,v)$ arrives and $u,v$ are unmatched, $e$ is added to the matching with probability $\frac{\alpha}{(1 - \alpha x_{u}(e))(1 - \alpha x_{v}(e))}$,
where $x_{v}(e)$ (respectively, $x_{u}(e)$) is the fractional degree of $u$ (respectively, $v$) prior to the arrival of $e$. 
Note that if $G$ is a tree, then \Cref{alg:tree_aom} is identical to the OCRS of \citet{Ezra_2020},
and so it is exactly $\alpha$-selectable. 

The core of the analysis is a \textit{correlation decay} argument in terms of $\ell$. We argue that since $e$ is $\ell$-locally-tree-like, there is no \textit{short} path between $u$ in $v$ in $G \setminus e$,
and so the events ``$u$ is matched before $e$'' and ``$v$ is matched before $e$''
are \textit{asymptotically} independent as $\ell \rightarrow \infty$. %In fact, we prove that $\Pr[\text{$u$ and $v$ matched before $e$}] \ge  (1 - \alpha x_{u}(e))(1 - \alpha x_{v}(e)) \alpha_\ell/\alpha$, which ensures $e$ is selected with probability at least $\alpha_\ell$ when active. 
The analysis here is similar to the online fractional rounding algorithm of \citet{kulkarni2022}, with the main difference being that our analysis works for a general matching $\bm{x}$, whereas theirs requires $\bm{x}$ to have vanishing edge values (and in fact be uniform).

% We believe this step of the proof is of interest beyond the vanishing edge value regime, $\alpha_{\lfloor (g-1)/2 \rfloor}$-selectable,  This
% convergence occurs rapidly; in particular, $\alpha_7 \ge 0.3658$, and so \Cref{alg:tree_aom} beats the state-of-the-art for inputs
% with girth $g \ge 15$. 

% We first provide an overview of this reduction for \textit{offline} contention resolution, and afterwards explain how the reduction works for an OCRS.
The second step of the proof is \Cref{thm:vanishing_to_local}, which provides a \textit{black-box reduction} from the \textit{vanishing edge value regime} to the \textit{locally-tree-like regime}. That is, given a CRS $\psi_{\ell}$ which is $c_\ell$-selectable for $\ell$-locally-tree-like edges, we show how to use it to design a new CRS which is $c$-selectable for inputs with edge values which are vanishing with respect to $\eps$, where $c= \lim_{\ell \rightarrow \infty} c_\ell$. %Using the CRS $\psi_{\ell}$, our goal is to design a $c$-selectable CRS for an input $(G,\bm{x})$ whose edge values are vanishing with respect to $\eps$. 
The reduction proceeds via a ``two-round exposure'' argument, where we split the randomness of each edge state $X_e$ into two random variables, and analyze each separately.  We first set $y_e := \log(1/\eps) x_e$ and $z_e := 1/\log(1/\eps)$ for each $e \in E$. Since $\eps = o(1)$, $y_e, z_e \in [0,1]$, and so we can sample $Y_e \sim \Ber(y_e)$ and
$Z_e \sim \Ber(z_e)$ for each $e \in E$, such that $X_e = Y_e Z_e$, and for which $((Y_e)_{e \in E}, (Z_e)_{e \in E})$ are independent (see \Cref{prop:two_round_coupling}).

In the first round, we expose the $(Y_e)_{e \in E}$ random variables, and consider the sub-graph $G'$ of $G$ 
with edge-set $E'=\{e \in E: Y_e =1\}$.
We argue that for a specific choice of $\ell \rightarrow \infty$, any fixed edge of $G'$ is $\ell$-locally-tree-like w.h.p. (see \Cref{lem:cycle_upper_bound}). At this point, we'd like to expose the $(Z_e)_{e \in E'}$ random variables
and execute $\psi_{\ell}$ on $(G', (z_e)_{e \in E'})$. However, $(z_e)_{e \in E'}$ may not be a fractional matching of $G'$. Thus, we first greedily process the edges of $G'$, and create a fractional matching $\bm{z}' = (z'_e)_{e \in E'}$, where $z'_e \in \{0, z_e\}$ for each $e \in E'$. In \Cref{lem:good_fractional_matching}, we argue that for any fixed edge $e \in E'$, w.h.p. $z'_e = 1/\log(1/\eps)$ (i.e., $z'_e$ will \textit{not} be rounded down from $z_e$). Let us say an edge $e \in E'$ is \textit{well-behaved}, provided $e$ is $\ell$-locally-tree-like,
and $z'_e = 1/\log(1/\eps)$. In the second round, we expose the $(Z_e)_{e \in E'}$ random variables,
and execute $\psi_{\ell}$ on $(G',(z'_e)_{e \in E'})$ to ensure that each well-behaved edge $e \in E'$ is selected with probability $c_\ell z_e$. Since $e \in E$ is an edge of $G'$ with probability $y_e$, and w.h.p. $e \in E'$ is well-behaved, this ensures that $e$ is matched with an overall probability of $(1 - o(1)) c_\ell z_e y_e = (1-o(1)) c x_e$ after averaging over $G'$.

In order to execute our reduction online, we need to be able to compute $(z'_e)_{e \in E'}$ online. While this can't in general be done for an arbitrary OCRS $\psi_{\ell}$, it can be done if $\psi_{\ell}$ is \textit{strongly online}. Roughly speaking, the OCRS should work even if the fractional values of the matching are adversarially revealed. \Cref{alg:tree_aom} of \Cref{thm:OCRS_locally_tree_positive} is strongly online, and so \Cref{thm:OCRS_locally_tree_positive} and \Cref{thm:vanishing_to_local} together imply \Cref{thm:OCRS_positive}. 

\textbf{OCRS negative result.} We prove the various results of \Cref{thm:OCRS_negative} by considering explicit graphs on which any OCRS cannot perform well. 

For part \eqref{enum:neg_gen} of the theorem, consider a single vertex $u$ and suppose it has edges $\{(u,v_i)\}_{i=1}^n$ around it, with $x_{u,v_i} \approx 1/n$ for all $i$. For each $i \in [n]$, add a set $W_i$ of $n-1$ children to $v_i$, with $x_{v_i,w} \approx 1/n$ for each $w \in W_i$. Finally, we add a complete graph of edges amongst the $v_i$ with tiny edge values. 

All the edges $(v_i, w)$, $w \in \cup W_i$ are presented to the OCRS first. Let us say that $A_{v_i} = 1$ provided that $v_i$ is unmatched after this first phase of edges is presented to the OCRS. The tiny edges are presented to the OCRS \textit{after} all the edges of the first phase but \textit{before} the edges $(u, v_i)$. The edges $(u, v_i)$ are presented to the OCRS last.

We consider two cases, depending on the size of the minimum pairwise covariance amongst the $A_{v_i}$ (see Lemmas \ref{lem:small_variance} and \ref{lem:large_variance}). Suppose first that after the first phase there is a pair $A_{v_i}$ and $A_{v_j}$ which are not very positively correlated with each other. Then, examining the newly added tiny edge between $v_i$ and $v_j$ tells us that no OCRS can perform particularly well. Indeed, if the scheme was $c$-selectable,
\begin{align*}
    c \leq \Pr[(v_i,v_j) \in \scr{M} \mid X_{v_i,v_j} =1] \le \Pr[A_{v_i} =1, A_{v_j} =1] &= \mb{E}[A_{v_i}] \mb{E}[A_{v_j}] + \Cov[A_{v_i},A_{v_j}] \\
    & \approx (1 - c)^2 + \text{something small}.
\end{align*}
Otherwise, let us suppose all the pairs $A_{v_i}$ and $A_{v_j}$ are positively correlated with each other, and we consider the last edge $(u, v_n)$ that is presented to the algorithm. Note first that it must be the case that if $A_{v_n} = 0$, then a higher than average number of the other $A_{v_i}$ are also zero. Conversely, a fewer than average number of the $A_{v_i}$ are one, since all the random variables are positively correlated.

Suppose that the OCRS does not impose any strong positive correlation between $A_{v_n}$ and the event $\mathcal{E}$ that $u$ is unmatched prior to observing $(u, v_n)$. In this case, exactly like the argument we just made, if the scheme was $c$-selectable, then
\begin{equation} \label{eqn:second_case_cov}
c \leq \Pr[\mathcal{E}, A_{v_n} = 1\mid X_{u,v_n} =1]= \Pr[\mathcal{E}]\E[ A_{v_n}]\approx (1-c)^2 + \text{something small}.
\end{equation}

So the OCRS must try to correlate $A_{v_n}$ with $\mathcal{E}$. But there is a limit to how positively correlated $A_{v_n} = 0$ can be with the event $ \neg \mathcal{E}$, because $ \neg \mathcal{E}$ can only occur if one of the edges $(u, v_i)$ with $A_{v_i} = 1$ is actually active, and there are a fewer than average number of $v_i$ with this property in the case we are considering. 

Next, consider part \eqref{enum:neg_conc} of the theorem. First, we would like to work with a tree, so we eliminate all the tiny edges we previous had between the $v_i$. But we also want to obtain an impossibility bound of exactly $\alpha$. So we need a new mechanism to ensure that $A_{v_n}$ (or perhaps some variation on $A_{v_n}$) has essentially no positive correlation with $\mathcal{E}$ (or perhaps some variation on $\mathcal{E}$).

We thus consider a \textit{concentrated} OCRS. An OCRS is said to be \textit{concentrated} on an input $(G,\bm{x})$, provided the matching $\scr{M}$ on $(G,\bm{x})$ it returns satisfies $||\scr{M}| - \mb{E}[ |\scr{M}|]| = o(\mb{E}[ |\scr{M}|)$ w.h.p. If the OCRS is concentrated, it must be the case that $\sum A_{v_i}$, while random, still exhibits concentration.

It follows then that then we can find a subset $S$ of size $m = o(n)$ such that $\sum_{v_i \in S} A_{v_i}$ is also concentrated. We prove this by taking a uniformly at random subset of size $m$, and arguing that there's a non-zero chance it has small variance.
Now, suppose that all these edges $(u, v_i)$ with $v_i \in S$ are presented to the OCRS last, and we imagine treating all of the edges as if they were a single `mega-edge'. It is impossible for the OCRS to strongly correlate $u$ being matched prior to this mega-edge being presented to the algorithm with $\sum_{v_i \in S} A_{v_i}$, since the latter sum is approximately deterministic. We use this fact to argue that any concentrated OCRS is, in fact, at most $\alpha$-selectable.

Finally, to establish part \eqref{enum:neg_tree} of the theorem, we start by arguing that there is a rearrangement of the $v_i$ so that $A_{v_n}$ is positively correlated with $\sum A_{v_i}$. Then, it must be the case that given $A_{v_n} = 0$, a higher than average number of the other $A_{v_i}$ are also zero. The rest of the argument is then very similar to the argument we used to prove \eqref{eqn:second_case_cov} in part \eqref{enum:neg_gen} of the theorem.

\section{OCRS Positive Result} \label{sec:ocrs_positive}

As discussed in \Cref{sec:technical}, there are two separate self-contained parts to the proof of \Cref{thm:OCRS_positive}. 
Let $\alpha \in (0,1)$ be the unique solution to the equation
\begin{equation} \label{eqn:alpha}
    \alpha = (1 - \alpha)^2,
\end{equation}
where $\alpha \approx 0.382$. For each $\ell \ge 1$,
we define
\begin{equation} \label{eqn:alpha_k}
\alpha_\ell :=\left(1 - \left(\frac{\alpha}{1-\alpha} \right)^{2\lceil\frac{\ell}{2}\rceil} \right)^2 \alpha,
\end{equation}
which we observe satisfies $\alpha_\ell \rightarrow \alpha$ as $\ell \rightarrow \infty$. Recall that for an edge $e=(u,v)$, we define the $\ell$-neighborhood of $e$ to be those vertices of $G$ at graph distance $\le \ell$ from $u$ or $v$, which we denote by $N^{\ell}(e)$. We prove
that if $N^{\ell}(e)$ is cycle-free, then $e$ is matched by \Cref{alg:tree_aom} with probability at least $\alpha_\ell x_e$. This motivates the following definition:

% In fact, we prove a more versatile guarantee which holds for any edge $e$ of $G$ which is \textit{locally-tree-like}. 

 % We first design an OCRS (\Cref{alg:tree_aom}) which is $\alpha$-selectable for graphs $(G,\bm{x})$ with girth $g \rightarrow \infty$, where $\bm{x}$ is an arbitrary fractional matching of $G$ (i.e., not necessarily vanishing). 

% We prove that \Cref{alg:tree_aom} is $\alpha_{\lfloor (g-1)/2 \rfloor}$-selectable on graphs of girth $g \ge 3$. 

\begin{definition} \label{def:locally_tree_like}
Suppose that $\ell \in \mb{N}$, $c_\ell \in [0,1]$, and $\psi$ is a CRS for the matching polytope, whose matching on $(G,\bm{x})$ we denote by $\psi(G,\bm{x})$. We say that $\psi$ is \textit{$c_\ell$-selectable for $\ell$-locally-tree-like edges}, provided the following holds for any input $(G,\bm{x})$ where $G=(V,E)$:
\begin{itemize}
    \item For each $e \in E$, if $N^{\ell}(e)$ is cycle-free, then $\Pr[e \in \psi(G,\bm{x}) \mid X_e =1] \ge c_\ell$.
\end{itemize}
\end{definition}
% \begin{remark}
% Suppose $\psi$ is $c$-selectable on inputs $(G,\bm{x})$ with girth $g \ge 3$.
% In this case, $\psi$ is $\lfloor g/2 \rfloor$-locally $c$-selectable.
% \end{remark}

\begin{theorem} \label{thm:OCRS_locally_tree_positive}
For each $\ell \ge 1$, \Cref{alg:tree_aom} is $\alpha_\ell$-selectable for $\ell$-locally-tree-like edges, where $\alpha_\ell$ is defined in \eqref{eqn:alpha_k}.
\end{theorem}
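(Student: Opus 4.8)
The plan is to prove the stronger edge-by-edge statement: for every input $(G,\bm x)$ with $G=(V,E)$ and every edge $e=(u,v)$ whose $\ell$-neighborhood $N^{\ell}(e)$ is cycle-free, \Cref{alg:tree_aom} selects $e$ with probability at least $\alpha_\ell x_e$. Fix such an $e$; let $\mathcal M$ be the matching produced, let $\prec$ be the adversarial order, write $x_w(f):=\sum_{g\in\partial(w):\,g\prec f}x_g$ for the fractional degree of $w$ accumulated strictly before $f$, and let $A_w(f)$ denote the event that $w$ is unmatched when $f$ is processed. The acceptance probability of \Cref{alg:tree_aom} at an active edge $f=(u',v')$ with both endpoints free is $\rho(f):=\alpha\big/\!\big[(1-\alpha x_{u'}(f))(1-\alpha x_{v'}(f))\big]$, which depends only on quantities revealed before $f$; since $\bm x$ is a fractional matching we have $x_w(f)\le1$, so $(1-\alpha x_{u'}(f))(1-\alpha x_{v'}(f))\ge(1-\alpha)^2=\alpha$ by~\eqref{eqn:alpha}, hence $\rho(f)\in[\alpha,1]$ and the scheme is well-defined (and strongly online, as required by \Cref{thm:vanishing_to_local}). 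Because $X_e$ is independent of the history before $e$, we get $\Pr[e\in\mathcal M]=x_e\,\rho(e)\,\Pr[A_u(e)\cap A_v(e)]$, and because $\mathcal M$ is a matching at most one edge at a vertex is selected, so $\Pr[A_w(f)]=1-\sum_{g\prec f:\,g\ni w}\Pr[g\in\mathcal M]$. Substituting $\rho(e)$, the theorem reduces to proving
\[
\Pr[A_u(e)\cap A_v(e)]\ \ge\ \Big(1-\big(\tfrac{\alpha}{1-\alpha}\big)^{2\lceil\ell/2\rceil}\Big)^{2}\,(1-\alpha x_u(e))(1-\alpha x_v(e)).
\]

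I would first record the exact tree recursion: when $G$ is a tree, deleting an edge $g=(w,w')$ splits it into two components, $A_w(g)$ is measurable with respect to the edge states and internal coins on $w$'s side and $A_{w'}(g)$ with respect to those on $w'$'s side, so the two events are independent; an induction along $\prec$ then gives $\Pr[g\in\mathcal M]=\alpha x_g$ for every edge $g$ (the $\rho(g)$ prefactor cancelling $(1-\alpha x_w(g))(1-\alpha x_{w'}(g))$), hence $\Pr[A_w(f)]=1-\alpha x_w(f)$ and, via the reduction, $\Pr[e\in\mathcal M]=\alpha x_e$, recovering the known tightness of \Cref{alg:tree_aom} against the OCRS of \citet{Ezra_2020} on trees. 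The locally-tree-like case is a controlled perturbation of this.

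The main step is a correlation-decay induction. For $k\ge0$ I would introduce error parameters $\eta_k$ with $\eta_0=1$ obeying a recursion whose contraction factor is a fixed power of $\tfrac{\alpha}{1-\alpha}=1-\alpha<1$ (the exponent $2\lceil\ell/2\rceil$ in the target emerging from the bookkeeping that converts graph distance in $N^{\ell}(e)$ into recursion depth), and prove, by induction on $k$ and — within a fixed $k$ — along $\prec$, that whenever the edges $g\prec f$ within graph-distance $k$ of $w$ form a forest, $\Pr[A_w(f)]$ lies between $(1-\eta_k)(1-\alpha x_w(f))$ and $\min\{1,\,(1+\eta_k)(1-\alpha x_w(f))\}$. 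The inductive step peels one layer: $\Pr[A_w(f)]=1-\sum_{g\prec f,\,g\ni w}\Pr[g\in\mathcal M]$ with $\Pr[g\in\mathcal M]=x_g\,\rho(g)\,\Pr[A_w(g)\cap A_{w'}(g)]$ for $g=(w,w')$, and since the $k$-neighborhood of $w$ is a forest, $A_w(g)$ and $A_{w'}(g)$ depend on disjoint, independent families of edge states up to a depth-$(k-1)$ remainder, so the joint probability can be replaced by the product of the two level-$(k-1)$ brackets, whereupon $\rho(g)$ cancels $(1-\alpha x_w(g))(1-\alpha x_{w'}(g))$ and only the contracted error survives; tracking both sides is essential, since a lower bound on $\Pr[A_w(f)]$ needs an upper bound on each $\Pr[g\in\mathcal M]$, which needs lower bounds on free-probabilities one layer deeper. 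Finally, cycle-freeness of $N^{\ell}(e)$ means that deleting $e$ leaves $u$ and $v$ with radius-$\lceil\ell/2\rceil$ forest neighborhoods and no short path between them avoiding $e$, so the level-$\lceil\ell/2\rceil$ bracket applies at $u$ and at $v$ and the same independence-up-to-remainder estimate yields $\Pr[A_u(e)\cap A_v(e)]\ge\Pr[A_u(e)]\Pr[A_v(e)]$ up to a lower-order term; multiplying the two brackets gives the displayed inequality, with the constant landing on exactly $\big(1-(\tfrac{\alpha}{1-\alpha})^{2\lceil\ell/2\rceil}\big)^2$.

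The hard part is this correlation-decay induction. The obstacles are: (i) it must hold for an arbitrary fractional matching $\bm x$, not merely the uniform vanishing instances handled by \citet{kulkarni2022}; (ii) "tree-like to depth $k$" is a property of the fixed graph $G$, whereas the process is driven by the random active edges that have arrived, so one must argue that inactive edges and edges not yet arrived cannot break the forest structure the recursion exploits; and (iii) the recursion is self-referential — $\Pr[A_w(f)]$ refers to $\Pr[A_w(g)]$ at the \emph{same} vertex $w$ for $g\prec f$, which does not consume the depth budget — so the induction on $k$ must be interleaved with an induction along $\prec$, and the error bookkeeping has to be tight enough to reproduce $\alpha_\ell=\big(1-(\tfrac{\alpha}{1-\alpha})^{2\lceil\ell/2\rceil}\big)^2\alpha$ exactly rather than only up to absolute constants.
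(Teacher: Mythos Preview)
Your high-level shape is right --- geometric error decay with contraction factor $\alpha/(1-\alpha)$ --- but there is a genuine gap at the heart of the argument. The inductive step hinges on the assertion that ``$A_w(g)$ and $A_{w'}(g)$ depend on disjoint, independent families of edge states up to a depth-$(k-1)$ remainder, so the joint probability can be replaced by the product of the two level-$(k-1)$ brackets.'' This is not justified: the event $A_w(g)$ is determined by the \emph{entire} execution history up to $g$, which can involve edges arbitrarily far from $w$ that arrived earlier, and those can correlate $A_w(g)$ with $A_{w'}(g)$ through paths lying outside the $k$-neighborhood. Two-sided brackets on the marginals $\Pr[A_w(g)]$ and $\Pr[A_{w'}(g)]$ give no control whatsoever on the joint probability, and nothing in your induction produces a bound on $\Cov[A_w(g),A_{w'}(g)]$. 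The same problem recurs at the final step, where you write ``$\Pr[A_u(e)\cap A_v(e)]\ge\Pr[A_u(e)]\Pr[A_v(e)]$ up to a lower-order term''; that inequality is precisely what the entire argument must establish, not something you can invoke.

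The paper closes this gap with a device you are missing: the \emph{witness-tree reduction} (\Cref{lem:tree_reduction_game_part_1}). One first argues that, for the purpose of lower-bounding $\Pr[e\in\scr{M}\mid X_e=1]$, the edges of $T\cup\partial T$ may be assumed to arrive bottom-up (those between $N^{i}(e)$ and $N^{i+1}(e)$ before those between $N^{i-1}(e)$ and $N^{i}(e)$). After this reduction, one conditions on the boundary indicators $Q_s$ for $s\in N^{\ell}(e)$; the tree structure together with the bottom-up order then makes the unmatched events at $u$ and $v$ \emph{exactly} conditionally independent, and the recursion for $q_r$ becomes a clean deterministic tree recursion (equation~\eqref{eqn:matched_tree_recurrence_game}). \Cref{lem:tree_reduction_game_part_2} identifies the worst boundary (all $Q_s=1$ when $\ell$ is odd, all $Q_s=0$ when $\ell$ is even), and then the error $\eps_r=1-q_r/(1-\alpha x_r)$ contracts by $\alpha/(1-\alpha)$ per level over the full depth $\ell$ (\Cref{lem:error_decrease}). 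This also corrects your bookkeeping: the exponent $2\lceil\ell/2\rceil$ comes from $\ell$ levels of contraction combined with a parity-dependent initial error $\eps^{\max}_0\in\{1,\alpha/(1-\alpha)\}$, not from halving the radius to $\lceil\ell/2\rceil$ as you suggest.
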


In order to state our reduction for the second part of the proof of \Cref{thm:OCRS_positive}, we require the OCRS to satisfy an additional property: When processing an edge
$e$, the decision to match $e$ depends on only $(x_e, X_e)$,
as well as the previous edges' fractional values and states (as well as any potential internal randomization used). Thus, the OCRS initially only knows $V$ of $G$,
and learns both $X_e$ \textit{and} $x_e$ of each edge $e$ in an adversarial order. This is in contrast to the usual definition of an OCRS, where it is assumed that the edges of $G$ and their edges values are known upfront. Note that we still work with an oblivious adversary who decides on the arrival order of $E$ prior to drawing $(X_e)_{e \in E}$.
We refer to an OCRS with this property as \textit{strongly online}, and provided a formal definition in \Cref{def:strongly_online} of \Cref{sec:OCRS_pos_additions}.
We now state our reduction.
\begin{theorem} \label{thm:vanishing_to_local}
Suppose $(\psi_\ell)_{\ell \ge 1}$ is a sequence of CRSs, where each $\psi_\ell$
is $c_\ell$-selectable on $\ell$-locally-tree-like edges, and $c := \lim_{\ell \rightarrow \infty} c_\ell$.
The following then hold:
\begin{enumerate}
    \item There exists an offline CRS $\psi$ which is $c$-selectable on graphs with vanishing edge values.
    \item If each $\psi_\ell$ is strongly online, then $\psi$ can be implemented as an OCRS.
\end{enumerate}

\end{theorem}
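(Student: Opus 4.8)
The plan is to prove both parts at once via a \emph{two-round exposure} of the edge states, following the outline in \Cref{sec:technical}. Fix the target edge $e^{*}=(u,v)$ for which we want the $cx_{e^{*}}$ guarantee. Using \Cref{prop:two_round_coupling}, couple each $X_e$ as $X_e=Y_eZ_e$ with $Y_e\sim\Ber(y_e)$, $Z_e\sim\Ber(z_e)$ mutually independent, where $y_e:=\log(1/\eps)\,x_e$ and $z_e:=1/\log(1/\eps)$; then $y_ez_e=x_e$, and $y_e\le\eps\log(1/\eps)\le1$, $z_e\le1$ for $n$ large. We also fix, as a function of $\eps=\eps(n)$, a radius $\ell=\ell(n)\to\infty$ growing slowly enough for the two w.h.p.\ estimates below to hold, and let $\psi$ be the composite scheme built from $\psi_{\ell(n)}$ described next. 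Since $\ell(n)\to\infty$ we have $c_{\ell(n)}\to c$, so by \Cref{def:vanishing_edge_values} it suffices to show $\psi$ selects $e^{*}$ with probability $(c_{\ell(n)}-o(1))x_{e^{*}}$.

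\textbf{Round 1 (exposing $(Y_e)_e$).} Let $G'=(V,E')$ with $E'=\{e\in E:Y_e=1\}$, and condition on $e^{*}\in E'$ (probability $y_{e^{*}}$). First, $e^{*}$ is $\ell$-locally-tree-like in $G'$ w.h.p.: the expected number of cycles of length $k$ through $e^{*}$ in $G'$ is
\[
(\log(1/\eps))^{k-1}\!\!\!\sum_{\substack{P:\,u\text{-}v\text{ path in }G\\ |P|=k-1,\ e^{*}\notin P}}\ \prod_{f\in P}x_f\ \le\ \eps\,(\log(1/\eps))^{k-1},
\]
using $x_f\le\eps$ on the first edge of $P$ together with the walk bound $\sum_{\text{walks of length }j\text{ from }u}\prod x_f\le1$ implied by \eqref{eqn:introFracMatch}; a BFS exploration of the $\ell$-neighborhood of $e^{*}$ (whose expected size is $(\log(1/\eps))^{O(\ell)}$) upgrades this to $\Pr[N^{\ell}(e^{*})\text{ has a cycle}\mid e^{*}\in E']=o(1)$ provided $\eps\,(\log(1/\eps))^{O(\ell)}\to0$ --- see \Cref{lem:cycle_upper_bound}. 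Second, $(z_e)_{e\in E'}$ need not be a fractional matching of $G'$, so we greedily process the edges of $G'$ in arrival order (so this can later be done online), keeping each edge --- setting $z'_e=z_e$, else $z'_e=0$ --- whenever this is consistent with the fractional-matching constraint; by \Cref{lem:good_fractional_matching}, for our fixed $e^{*}$ we get $z'_{e^{*}}=z_{e^{*}}$ w.h.p. Call $e^{*}$ \emph{well-behaved} if $e^{*}\in E'$, $e^{*}$ is $\ell$-locally-tree-like in $G'$, and $z'_{e^{*}}=z_{e^{*}}$; a union bound gives $\Pr[e^{*}\text{ well-behaved}]\ge(1-o(1))\,y_{e^{*}}$.

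\textbf{Round 2 (exposing $(Z_e)_{e\in E'}$) and wrap-up.} Conditioned on the round-1 outcome, $(G',\bm z')$ is a \emph{fixed} input and $(Z_e)_{e\in E'}$ is fresh randomness independent of round 1; moreover, for $e\in E'$ we have $Y_e=1$, hence $X_e=Z_e$, so $Z_e$ is exactly the state of $e$ seen by $\psi_\ell$ when we run it on $(G',\bm z')$. Since $\psi_\ell$ is $c_\ell$-selectable for $\ell$-locally-tree-like edges and $\bm z'$ is a fractional matching of $G'$, on any well-behaved outcome $\Pr[e^{*}\in\psi_\ell(G',\bm z')\mid Z_{e^{*}}=1]\ge c_\ell$, hence $\Pr[e^{*}\in\psi_\ell(G',\bm z')]\ge z_{e^{*}}c_\ell$. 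Averaging over round-1 outcomes,
\[
\Pr[e^{*}\in\psi(G,\bm x)]\ \ge\ \Pr[e^{*}\text{ well-behaved}]\cdot z_{e^{*}}c_\ell\ \ge\ (1-o(1))\,y_{e^{*}}z_{e^{*}}c_\ell\ =\ (1-o(1))\,c_\ell\,x_{e^{*}},
\]
which, with $\ell=\ell(n)\to\infty$, is $(c-o(1))x_{e^{*}}$. This proves part (1): running $\psi_\ell$ offline on $(G',\bm z')$ uses no online property of $\psi_\ell$.

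\textbf{Online implementation and the main obstacle.} For part (2) we implement the construction online. When edge $f$ arrives and $X_f$ is revealed, sample $(Y_f,Z_f)$ from its conditional law given $X_f$ (if $X_f=1$ then $Y_f=Z_f=1$; if $X_f=0$, draw $(Y_f,Z_f)\neq(1,1)$ with the correct conditional marginals). If $Y_f=0$, reject $f$. If $Y_f=1$, perform the greedy repair step for $f$ to get $z'_f$; if $z'_f=0$, reject $f$, and if $z'_f=z_f$, forward $f$ to $\psi_\ell$ with fractional value $z'_f$ and state $Z_f=X_f$, outputting $\psi_\ell$'s decision for $f$. Because the forwarded edges, their fractional values, and their order are all functions of $(Y_g)_g$ (hence committed before any $Z$-state is drawn), while the forwarded states $(Z_f)$ are independent of $(Y_g)_g$, the scheme $\psi_\ell$ faces exactly an oblivious-adversary instance, and its guarantee applies as above --- but only because $\psi_\ell$ is \emph{strongly online} (\Cref{def:strongly_online}) and can accept the value $z'_f$ at $f$'s arrival rather than upfront. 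The main obstacle is establishing the two w.h.p.\ estimates with a single slowly-growing $\ell(n)$: quantifying \Cref{lem:cycle_upper_bound} requires controlling weighted path- and neighborhood-counts in $G'$ using only \eqref{eqn:introFracMatch}, and \Cref{lem:good_fractional_matching} is delicate precisely when $\bm x$ is (near-)tight at an endpoint of $e^{*}$, where a naive greedy repair can reject $e^{*}$ with non-negligible probability; calibrating the repair together with the scaling $y_e=\log(1/\eps)x_e$, $z_e=1/\log(1/\eps)$ so that all the $o(1)$ errors compound correctly is the crux.
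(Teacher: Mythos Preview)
Your overall architecture matches the paper's proof: two-round exposure, $G'$ from the $Y$-variables, a greedy repair to obtain a valid fractional matching $\bm z'$ on $G'$, and then $\psi_\ell$ applied to $(G',\bm z')$, with \Cref{lem:cycle_upper_bound} and \Cref{lem:good_fractional_matching} supplying the two w.h.p.\ estimates. The online implementation via the strongly-online property is also essentially the paper's.

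There is, however, a real gap in your parameter choice, and it is exactly the point you flag as ``the crux'' without resolving. With your scaling $y_e=\log(1/\eps)\,x_e$, \Cref{lem:good_fractional_matching} is \emph{false} when the constraint at an endpoint of $e^{*}$ is tight. Concretely, take a star at $v$ with $\sum_{f\in\partial(v)}x_f=1$ and put $e^{*}$ last in the adversarial order. Then $\sum_{f\in\partial(v)\setminus e^{*}}Y_f$ has mean $(1-x_{e^{*}})\log(1/\eps)\approx\log(1/\eps)$ and standard deviation $\Theta(\sqrt{\log(1/\eps)})$; by the CLT, the event $\sum_{f\ne e^{*}}Y_f\le\log(1/\eps)-1$ (which is what the greedy repair needs to set $z'_{e^{*}}=z_{e^{*}}$) holds with probability tending to $1/2$, not $1-o(1)$. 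No recalibration of the greedy repair alone can fix this: the problem is that the expected $Y$-degree at $v$ already equals the capacity.

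The paper's resolution is to build slack into the scaling itself: set $y_e=(1-\log^{-1/4}(1/\eps))\log(1/\eps)\,x_e$, so that $\mathbb E[\sum_{f\in\partial(v)}Y_f]\le(1-\log^{-1/4}(1/\eps))\log(1/\eps)$ and Chernoff gives the required concentration below $\log(1/\eps)-1$. The price is that now $y_ez_e<x_e$, so one needs the coupling $X_e\ge Y_eZ_e$ from \Cref{prop:two_round_coupling} rather than equality; the lost factor $(1-\log^{-1/4}(1/\eps))$ is $1-o(1)$ and is absorbed into the final guarantee. Once you make this single change, the rest of your outline goes through as written.
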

Since \Cref{alg:tree_aom} is a strongly online CRS, and $\alpha = \lim_\ell \alpha_\ell$, Theorems \ref{thm:OCRS_locally_tree_positive}
and \ref{thm:vanishing_to_local} together imply \Cref{thm:OCRS_positive}.
The rest of the section is devoted to proving each individual theorem.

% \CM{I think that we can prove $o_{g}(1) = \left(\frac{c}{1-c} \right)^{g+1}$ (observe
% that $c/1-c = (1-c) < 1$). So, the error decays at an exponential rate in terms of $g$, which is great. In fact, I think this says something very strong for constant $g$. Basically, once
% $g \approx 10$, we get $\approx 0.382$, no matter the edge values.}

% \end{lemma}
% \begin{remark}
% Observe that this result holds for arbitrary fractional matchings.

% \end{remark}

\subsection{OCRS for Locally-Tree-Like Edges: Proving \Cref{thm:OCRS_locally_tree_positive}}
Given $G=(V,E)$ with fractional matching $\bm{x} = (x_e)_{e \in E}$, assume the edges $E$ of $G$ arrive in some adversarial order, where $e' < e$ indicates $e'$ arrives before $e$. Let $x_{v}(e):= \sum_{\substack{f < e: \\ f \in \partial(v)}} x_f$ for each $v \in V$ and $e \in \partial(v)$. Observe that $x_v(e)$ is the fractional degree of $v$, prior to the arrival of $e$.
\begin{algorithm}[H]
\caption{Tree-OCRS} 
\label{alg:tree_aom}
\begin{algorithmic}[1]
\Require $G=(V,E)$, $\bm{x}=(x_e)_{e \in E}$, and $\alpha$ which satisfies \eqref{eqn:alpha}.
\Ensure subset of active edges forming a matching $\scr{M}$
\State $\scr{M} \leftarrow \emptyset$
\For{arriving $e =(u,v) \in E$ with edge state $X_e$}
\State Draw $A_e \sim \Ber\left(\frac{\alpha}{(1 - \alpha x_{u}(e))(1 - \alpha x_{v}(e))}\right)$ independently. 
\If{$u$ and $v$ are unmatched, $X_e =1$ and $A_e =1$}
\State $\scr{M} \leftarrow \scr{M} \cup \{(u,v)\}$
\EndIf
\EndFor
\State \Return $\scr{M}$
\end{algorithmic}
\end{algorithm}
\begin{remark}
\Cref{alg:tree_aom} is well-defined, as $\max\{x_{v}(e), x_{u}(e)\} \le1$,
and $\alpha/(1 - \alpha)^2= 1$ by \eqref{eqn:alpha}. This is the only place
we use the specific value of $\alpha$. \Cref{alg:tree_aom} is well-defined
if it was executed with $c \leq \alpha$ instead of $\alpha$.
\end{remark}
In order to prove \Cref{thm:OCRS_locally_tree_positive}, we must show that
if $\scr{M}$ is returned by \Cref{alg:tree_aom}, then for each $e \in E$ with $N^{\ell}(e)$ cycle-free,
\begin{equation} \label{eqn:tree_ocrs_guarantee}
\Pr[e \in \scr{M} \mid X_e =1] \ge \alpha_\ell = \left(1 - \left(\frac{\alpha}{1-\alpha} \right)^{2\lceil\frac{\ell}{2}\rceil} \right)^2 \alpha.
\end{equation}
Fix $e=(u,v)$, and let $T$ be the graph induced by the vertices in
the $\ell$-neighborhood of $e$. Note that by assumption, $T$ is cycle-free
and connected, and so a tree. Denote by $\partial T$ the edges between 
$N^{\ell}(e)$ and $N^{\ell+1}(e)$, which we call the \textit{boundary} of $T$. Let us assume it is non-empty, since otherwise, the neighborhood of $e$ is a tree, and there is nothing to prove.

One difficulty with establishing guarantees on the selectability of an OCRS is the potentially arbitrary order in which edges might be presented to the algorithm. However, when studying \textit{one particular edge} $e$, we may assume that the edges appear in a much more specifc order.  We begin with a preliminary lemma (closely related to the `tree-matching game' discussed in \citet{kulkarni2022}) that demonstrates that this is the case:

\begin{lemma} \label{lem:tree_reduction_game_part_1}
When lower bounding $\Pr[e \in \scr{M} \mid X_e =1]$, we may assume that the edges of $T \cup \partial T$ are processed ``bottom up'': The edges between
        $N^{i}(e)$ and  $N^{i+1}(e)$ are processed before the edges between $N^{i-1}(e)$
        and $N^{i}(e)$ for $i=1, 2, \ldots , \ell$.
\end{lemma}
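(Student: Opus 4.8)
The plan is to exhibit a worst-case arrival order for the fixed edge $e$, argue that among all arrival orders the quantity $\Pr[e\in\scr{M}\mid X_e=1]$ is not decreased by re-ordering the edges of $T\cup\partial T$ into a bottom-up sweep, and ignore the edges outside $T\cup\partial T$ entirely. The key structural fact I would use is that whether $e=(u,v)$ is added to $\scr{M}$ depends only on whether $u$ and $v$ are unmatched when $e$ arrives, which in turn depends only on the behaviour of \Cref{alg:tree_aom} on the subtree $T$: since $N^{\ell}(e)$ is cycle-free, the only way the rest of $G$ can influence the status of $u$ or $v$ at the time $e$ arrives is through a path leaving $T$ across its boundary $\partial T$ and coming back, which is impossible because $T$ is an \emph{induced} subgraph with no cycles. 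So without loss of generality we may assume $e$ arrives last and that all edges of $G$ outside $T\cup\partial T$ arrive after $e$ (they are then irrelevant); only the processing order \emph{within} $T\cup\partial T$ matters.

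The heart of the argument is a coupling/monotonicity step. Root $T$ at the edge $e$ (or at the two vertices $u,v$) and stratify the edges by the level $i$ at which they sit, i.e.\ the edges between $N^{i-1}(e)$ and $N^i(e)$, for $i=1,\dots,\ell$, together with the boundary edges $\partial T$ at level $\ell+1$. I would argue that moving a deeper edge to arrive before a shallower one can only \emph{help} (or leave unchanged) the probability that $u$ and $v$ are free when $e$ arrives, and hence that the optimal order is the fully bottom-up one. The clean way to see the monotonicity: for a vertex $w$ at level $i$, whether $w$ is matched "downward" (to one of its children at level $i+1$) before its parent edge at level $i$ arrives can only make $w$ \emph{unavailable} to be matched upward, which in turn \emph{protects} $w$'s parent at level $i-1$ from being matched; iterating this up the tree, having all downward/deeper edges processed first maximizes the chance that $u$ and $v$ themselves remain unmatched. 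Formally I would phrase this as: \Cref{alg:tree_aom}'s acceptance decisions are determined by the levels, and processing level $i+1$ before level $i$ dominates (in the sense of the resulting distribution over "$u,v$ both free") processing them in the opposite order; then a standard bubble-sort/exchange argument across adjacent pairs of levels converts an arbitrary order into the bottom-up order without ever decreasing $\Pr[e\in\scr{M}\mid X_e=1]$. This is essentially the "tree-matching game" monotonicity of \citet{kulkarni2022}, instantiated for \Cref{alg:tree_aom}.

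The step I expect to be the main obstacle is making the exchange argument fully rigorous, because edges at the same level are not independent of one another once we condition on the matched/unmatched status of shared endpoints, so "swap two adjacent levels" has to be done carefully — one really wants to fix all the edge states $(X_f)$ and the Bernoulli draws $(A_f)$ and argue sample-pathwise (or via a coupling that keeps these fixed) that the event "$u,v$ both unmatched when $e$ arrives" is monotone under the reordering. Concretely, I would condition on all $(X_f,A_f)_{f\in T\cup\partial T}$, define for each vertex $w$ the indicator that $w$ is matched before its parent edge arrives, and show by downward induction on the level that this indicator is monotone nonincreasing (in the right direction) as we push deeper levels earlier; the subtlety is handling a level-$i$ edge $(w,w')$ where $w$ is the parent-side endpoint of one subtree and $w'$ the parent-side endpoint of another, so that a single swap affects two branches at once. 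Once the sample-pathwise monotonicity is established, integrating over $(X_f,A_f)$ and then over the order gives the claim; the remaining bookkeeping (that edges strictly outside $T\cup\partial T$ may be assumed to arrive after $e$, and that $e$ may be assumed last) is routine given the cycle-free hypothesis.
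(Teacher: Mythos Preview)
Your exchange argument has the direction backwards, and in fact no monotonicity holds in either direction. You claim that re-ordering into a bottom-up sweep is done ``without ever decreasing $\Pr[e\in\scr{M}\mid X_e=1]$'', i.e., that bottom-up is the \emph{best} order for the algorithm; but the lemma must lower-bound the probability over \emph{all} adversarial orders, so one would need bottom-up to be the \emph{worst}. Even with the direction flipped, the claim fails: on a pure tree, \Cref{alg:tree_aom} is engineered so that every edge $f$ is matched with probability \emph{exactly} $\alpha x_f$ under \emph{any} arrival order---the acceptance factor $\alpha/((1-\alpha x_u(f))(1-\alpha x_v(f)))$ precisely cancels the effect of how many incident edges have already been seen---so all orders are equivalent and there is no strict inequality to exploit. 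With a boundary edge present, whether a swap helps or hurts depends on the graph beyond $\partial T$: if $u$ has child $w$ with one boundary edge $g=(w,s)$, moving $g$ before $(u,w)$ scales $\Pr[(u,w)\in\scr{M}]$ by $(1-\Pr[g\in\scr{M}])/(1-\alpha x_g)$, which can be either above or below $1$ depending on whether $\Pr[g\in\scr{M}]\lessgtr\alpha x_g$. (Relatedly, you cannot simply assume edges outside $T\cup\partial T$ arrive after $e$; they influence the boundary status $Q_s$, which is precisely why the next lemma takes a worst case over $Q_s$.)

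The paper's proof, following \citet{kulkarni2022}, avoids monotonicity entirely. It defines the \emph{witness tree} $W(T)\subseteq T\cup\partial T$ as the set of edges $f$ that arrive before every edge lying above $f$ on the path to $e$; by construction the edges of $W(T)$ are processed bottom-up. The key observation is that edges not in $W(T)$ are irrelevant to whether $e$ is matched: if $f$ arrives after some ancestor $f'$, then $f$ neither contributes to any $x_v(\cdot)$ term used in the chain of decisions from $f'$ up to $e$, nor can it change the matched status of any vertex on that chain before $e$ is decided. One therefore identifies $T\cup\partial T$ with $W(T)$, zeroing out the deleted edges; the downstream recurrence and error-decay lemmas only use $\sum_{f\in\partial(v)}x_f\le 1$, so dropping edges is harmless, and if the pruned tree no longer reaches $\partial T$ there is no boundary and one gets $\alpha$ exactly.
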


For the remainder of the proof, we assume that the order of the edges in $T$ and $\partial(T)$ is as described in \Cref{lem:tree_reduction_game_part_1}. This lemma is immensely useful, because it lets us come up with a recurrence letting us calculate the value of $\Pr[e \in \scr{M} \mid X_e =1]$. In order to establish this recurrence, first, consider vertices $s \in N^{\ell}(e)$, and define the random variables $Q_s$, with $Q_s = 1$ if the OCRS (after making decision on the edges in $\partial(T)$) chooses to leave $s$ unmatched, and $0$ otherwise.

Suppose we fix $r \in T$, $r \notin N^\ell(e)$ whose children (in the order they will be processed) are  $r_1, \ldots , r_k$ for $k \ge 0$ (if $k=0$, then $r$ has no children). We can explicitly write out the probability that a vertex $r$ is unmatched after $e_1=(r,r_1),\ldots , e_k=(r,r_k)$ are processed, conditional on the values of $Q_s$.
Denote this probability by $q_r$.
The key point is that since $T$ is a tree, and edges are processed in a bottom up fashion, once we condition on any particular values of $Q_s$, the children $r_1, \ldots ,r_k$ of $r$ are matched with their children independently. Thus,

\begin{align*}
    q_r &= \prod_{i=1}^{k} \Pr[e_i \notin \scr{M} \mid e_j \notin \scr{M}, \forall j < i] \\ &= \prod_{i=1}^{k} \left( 1 -\Pr[e_i \in \scr{M} \mid e_j \notin \scr{M},  \forall j < i] \right) \\
    &= \prod_{i=1}^{k} \left( 1 -\Pr[r_i \text{ is not matched with any of its children, } X_{e_i} = 1,  A_{e_i} = 1 \mid e_j \notin \scr{M}, \forall j < i] \right) \\
    &= \prod_{i=1}^{k} \left( 1 -\Pr[r_i \text{ is not matched with any of its children, } X_{e_i} = 1,  A_{e_i} = 1] \right) \\
    &= \prod_{i=1}^{k} \left( 1 -\Pr[r_i \text{ is not matched with any of its children}]\Pr[X_{e_i} = 1] \Pr[A_{e_i} = 1] \right) \\
    &= \prod_{i=1}^{k} \left( 1 - \frac{q_{r_i} x_{e_i} \alpha}{(1 - \alpha x_{r}(e_i))(1-\alpha x_{r_i}(e_i))} \right).
\end{align*}
So,
\begin{equation} \label{eqn:matched_tree_recurrence_game}
    q_{r} = \prod_{i=1}^{k} \left( 1 - \frac{q_{r_i} x_{e_i} \alpha}{(1 - \alpha x_{r}(e_i))(1-\alpha x_{r_i}(e_i))} \right),
\end{equation}
where $x_{r}(e_i)= \sum_{j=1}^{i-1}x_{r,r_j}$. If $r$ has  no children, then $q_r =1$. We implicitly understand in the above expression that when we write $q_s$ for $s \in N^\ell(e)$, we actually mean $Q_s$. In our next lemma, also similar to the work of \citet{kulkarni2022}, we provide a bound on $q_r$ that does not depend on the specific values of the $Q_s$. We can then use the law of total probability to obtain a bound on the unconditional probability that that a vertex $r$ is unmatched after $e_1=(r,r_1),\ldots , e_k=(r,r_k)$ are processed. This simplifies the study of our recurrence:

\begin{lemma} \label{lem:tree_reduction_game_part_2}
When lower bounding $\Pr[e \in \scr{M} \mid X_e =1]$ using the recurrence, we may assume that all the $Q_s$ ($s \in N^\ell(e)$) are deterministically $0$ or $1$, and furthermore, that $Q_s = 1$ for all $s$ when $\ell$ is odd, and $Q_s = 0$ for all $s$ when $\ell$ is even.
\end{lemma}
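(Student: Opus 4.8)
The plan is to read off everything from the recurrence~\eqref{eqn:matched_tree_recurrence_game} together with two structural observations about the tree $T$ rooted at $e=(u,v)$. First, since $e$ is processed last (it is the only edge with both endpoints in $N^{0}(e)$), once we condition on the frontier values $(Q_s)_s$ the subtrees hanging off $u$ and off $v$ evolve independently — their remaining randomness consists of the disjoint, independent families of edge states and auxiliary bits $\{(X_f,A_f)\}$ — so that
\[
\Pr[e\in\scr M\mid X_e=1]=\Pr[A_e=1]\cdot\E[q_uq_v],
\]
where $q_u,q_v$ are the recurrence values at $u,v$ and the expectation is over whatever joint law the OCRS's treatment of $\partial T$ induces on $(Q_s)_s$. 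Second, because $T$ is a tree, each frontier variable $Q_s$ enters the computation of $q_u$ (or $q_v$) along a single root-to-$s$ path, affecting exactly one of the product factors at each vertex on that path. Before anything else I would record the boundedness fact that every coefficient $c_i:=x_{e_i}\alpha/\bigl((1-\alpha x_r(e_i))(1-\alpha x_{r_i}(e_i))\bigr)$ appearing in~\eqref{eqn:matched_tree_recurrence_game} lies in $[0,1]$: this is exactly the remark after \Cref{alg:tree_aom} (fractional degrees are at most $1$ and $\alpha/(1-\alpha)^2=1$), and it yields, by an immediate induction up the tree, $q_r\in[0,1]$ and $1-c_iq_{r_i}\in[0,1]$ for every $r$.

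The heart of the argument is a monotonicity claim, proved by induction up $T$: for a vertex $r$ at distance $d$ from $e$, regard $q_r$ as a function of the tuple of frontier variables $(Q_s)_s$ with $s$ in the subtree below $r$; then $q_r$ is coordinatewise non-increasing in these $Q_s$ when $\ell-d$ is odd, and coordinatewise non-decreasing when $\ell-d$ is even. The base case is a frontier vertex ($q_s=Q_s$, $\ell-d=0$, non-decreasing) or a childless interior vertex ($q_r\equiv1$, trivially monotone). For the step, fix $s$ in the subtree below $r$; it lies below exactly one child $r_i$ of $r$, so only the factor $1-c_iq_{r_i}$ of~\eqref{eqn:matched_tree_recurrence_game} depends on $Q_s$. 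By the inductive hypothesis $q_{r_i}$ is non-increasing (resp.\ non-decreasing) in $Q_s$; since $c_i\ge0$ this makes $1-c_iq_{r_i}$ non-decreasing (resp.\ non-increasing); and multiplying by the remaining factors, which are nonnegative and do not involve $Q_s$, preserves that direction. Since $r$ sits one level above $r_i$, the parity flips exactly as claimed.

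Applying the claim at $r=u$ and $r=v$ (distance $0$), both $q_u$ and $q_v$ are coordinatewise monotone in the $Q_s$ with the common direction ``non-increasing if $\ell$ is odd, non-decreasing if $\ell$ is even.'' Since $q_u,q_v\ge0$ and they depend on disjoint families of frontier variables, the product $q_uq_v$ has the same coordinatewise monotonicity. Hence, with $p:=\mathbbm{1}[\ell\text{ odd}]$, for every realization of $(Q_s)_s$ we have the pointwise inequality $q_uq_v\ge (q_uq_v)\big|_{Q_s\equiv p}$, and taking expectations gives $\E[q_uq_v]\ge (q_uq_v)\big|_{Q_s\equiv p}$; multiplying by $\Pr[A_e=1]$ this is precisely the assertion that, for the purpose of lower-bounding $\Pr[e\in\scr M\mid X_e=1]$, we may treat every $Q_s$ as the deterministic constant $1$ (if $\ell$ is odd) or $0$ (if $\ell$ is even) — which gives both parts of the lemma at once.

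The one delicate point — the main, though mild, obstacle — is bookkeeping at vertices of $G$ that happen to be leaves at distance exactly $\ell$. Such a vertex $s$ is a frontier vertex incident to no boundary edge, so the OCRS necessarily leaves it unmatched after processing $\partial T$ and $Q_s$ is in fact deterministically $1$, which is ``wrong'' when $\ell$ is even. This does not break anything: the monotonicity step only uses $q_uq_v\ge (q_uq_v)\big|_{Q_s:=0}$, which remains a legitimate (merely non-tight) lower bound regardless of the true value of $Q_s$; one just has to state the lemma, as it is stated, as a reduction valid when lower-bounding rather than as a claim about the actual distribution of the frontier variables. The remainder is routine: verifying that all quantities stay in $[0,1]$ and that the tree structure really does route each $Q_s$ through a unique factor per level.
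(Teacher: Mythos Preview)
Your proof is correct and follows essentially the same approach as the paper: both arguments observe that the recurrence~\eqref{eqn:matched_tree_recurrence_game} makes $q_u,q_v$ monotone in each frontier variable $Q_s$, with the direction flipping at every level of the tree, so the worst case for the product $q_uq_v$ is to set all $Q_s$ to the same extreme value determined by the parity of $\ell$. The paper phrases this as ``$q_u,q_v$ are linear in each $Q_s$ and the sign flips each level,'' whereas you argue more carefully via coordinatewise monotonicity and are more explicit about taking expectations over the (possibly random) law of $(Q_s)_s$ and about the boundedness $q_r\in[0,1]$; but the substance is the same.
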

%\CMc{Excellent, this all looks good.}
Of course, the OCRS itself does not choose $Q_s$ deterministically; the lemma merely tells us that this is a convenient way to analyze the recurrence. As a consequence of the lemma, we will stop thinking about the $Q_s$ as random variables, and instead say $Q_s = q_s \in \{0, 1\}$ for $s \in N^l(e)$.

Let $x_r = \sum_{j=1}^k x_{r,r_j}$ for convenience, where $x_r = 0$ if $k=0$ (as $x_r$ never includes the parent of $r$). We wish to
show that if $r$ is ``high-enough'' up the tree, then $q_r \approx (1 - x_{r} \alpha)$.
To this end, define $\eps_{r} = 1 - \frac{q_{r}}{1 - \alpha x_r}$. Our goal is to convert the recurrence \eqref{eqn:matched_tree_recurrence_game} on the $q_r$ into a recurrence on the $\varepsilon_r$. This proceeds similarly to \citet{kulkarni2022}, though we include the derivation for completeness. We first use the equation $q_{r_i} = (1 - \eps_{r_i})(1 - \alpha x_{r_i})$ for $i=1, \ldots ,k$. By applying this to the right-hand of \eqref{eqn:matched_tree_recurrence_game},
\begin{align*}
    q_r = \prod_{i=1}^{k} \left( 1 - \frac{q_{r_i} \alpha x_{e_i}}{(1 - \alpha x_{r}(e_i))(1-\alpha x_{r_i}(e_i))} \right) &= \prod_{i=1}^{k} \left( 1 - \frac{(1 - \eps_{r_i})(1 - \alpha x_{r_i}) \alpha x_{e_i} }{(1 - \alpha x_{r}(e_i))(1-\alpha x_{r_i}(e_i))} \right) \\
    &= \prod_{i=1}^{k} \left( 1 - (1 - \eps_{r_i})\frac{\alpha x_{e_i}}{(1 - \alpha x_{r}(e_i))} \right) \\
    &= \prod_{i=1}^{k} \left( \frac{1 - \alpha x_{r}(e_i) - \alpha x_{e_i} + \eps_{r_i} \alpha x_{e_i}}{1 - \alpha x_{r}(e_i)} \right) \\
    &=\prod_{i=1}^{k} \left( \frac{1 - \alpha \sum_{j \le i} x_{r,r_j} + \eps_{r_i} \alpha x_{e_i}}{1 - \alpha \sum_{j < i} x_{r,r_j}} \right) \\
    &= \prod_{i=1}^{k} \frac{1 - \alpha \sum_{j \le i} x_{r,r_j}}{1 - \alpha \sum_{j < i} x_{r,r_j}}\left( 1  + \frac{\eps_{r_i} \alpha x_{e_i}}{1 - \alpha \sum_{j \le i} x_{r,r_j}} \right) \\
    & = (1 - \alpha x_r) \prod_{i=1}^{k}\left( 1  + \frac{\eps_{r_i} \alpha x_{e_i}}{1 - \alpha \sum_{j \le i} x_{r,r_j}} \right).
\end{align*}
Thus, applied to \eqref{eqn:matched_tree_recurrence_game}, since $\eps_{r} = 1 - \frac{q_{r}}{1 - \alpha x_r}$, we get that
\begin{equation} \label{eqn:error_recurrence}
    \eps_r = 1 - \prod_{i=1}^{k}\left( 1  + \frac{\eps_{r_i} \alpha x_{e_i}}{1 - \alpha \sum_{j \le i} x_{r,r_j}} \right).
\end{equation}
Note that for $r \in N^{\ell}(e)$, since $q_r \in \{0,1\}$, $\eps_r = 1$ if
$q_r =0$, and $\eps_r = -\alpha x_r/(1-\alpha x_r) \ge - \alpha/(1-\alpha)$ if $q_r = 1$. Finally,
$\eps_r = 0$ if $r$ has no children.

Given $0 \le j \le \ell$, define the \textit{maximum error} at distance $\ell - j$
from $e$ by $\eps^{\max}_{j}:= \max_{r \in N^{\ell- j}(e)}|\eps_r| $. We
argue that the maximum error \textit{decreases} the larger $j$ becomes (and so higher up the tree we go).
\begin{lemma} \label{lem:error_decrease}
For each $0 \le j \le \ell -1$, $\eps^{\max}_{j+1} \le  \eps^{\max}_{j} \left(\frac{\alpha}{1-\alpha}\right)$. Thus, $\eps^{\max}_{\ell} \le \left(\frac{\alpha}{1-\alpha}\right)^{2\lceil\frac{\ell}{2}\rceil}$,
since $\eps^{\max}_0 \le \alpha/(1-\alpha)$ if $\ell$ is odd, and $\eps^{\max}_0 \le 1$
otherwise.
\end{lemma}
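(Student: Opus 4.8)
I would establish the per-level bound $\eps^{\max}_{j+1}\le\eps^{\max}_{j}\cdot\frac{\alpha}{1-\alpha}$ by induction on $j$, dragging along the invariant $\eps^{\max}_{j}\le 1$, and then iterate it from the boundary down to $\{u,v\}$. For the base case $j=0$, \Cref{lem:tree_reduction_game_part_2} lets us take every $Q_s$ ($s\in N^{\ell}(e)$) deterministically, with all $Q_s=1$ when $\ell$ is odd and all $Q_s=0$ when $\ell$ is even. As recorded right after \eqref{eqn:error_recurrence}, a boundary vertex $s$ with $q_s=1$ has $\eps_s=-\alpha x_s/(1-\alpha x_s)$, hence $|\eps_s|\le\alpha/(1-\alpha)$ since $x_s\le 1$; a boundary vertex with $q_s=0$ has $\eps_s=1$. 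So $\eps^{\max}_0\le\alpha/(1-\alpha)$ for $\ell$ odd and $\eps^{\max}_0\le 1$ for $\ell$ even, and in both cases $\eps^{\max}_0\le 1$ because $\alpha<1/2$ by \eqref{eqn:alpha}.

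\textbf{Inductive step.} Fix $0\le j\le\ell-1$ and an internal vertex $r\in N^{\ell-j-1}(e)$ (so $r$ is not a boundary vertex, and the recurrence applies) with children $r_1,\dots,r_k$, which all lie in $N^{\ell-j}(e)$; if $k=0$ then $\eps_r=0$, so assume $k\ge 1$. Abbreviate $a_i:=\alpha x_{e_i}/\bigl(1-\alpha\sum_{j'\le i}x_{r,r_{j'}}\bigr)$, so that \eqref{eqn:error_recurrence} reads $\eps_r=1-\prod_{i=1}^k(1+\eps_{r_i}a_i)$. Since the fractional degree of $r$ is at most $1$, each denominator is at least $1-\alpha>0$, so $a_i\ge 0$. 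The key computation is the telescoping identity
\[\prod_{i=1}^{k}(1+a_i)=\prod_{i=1}^{k}\frac{1-\alpha\sum_{j'<i}x_{r,r_{j'}}}{1-\alpha\sum_{j'\le i}x_{r,r_{j'}}}=\frac{1}{1-\alpha x_r},\]
which rearranges to $\sum_{\emptyset\ne S\subseteq[k]}\prod_{i\in S}a_i=\tfrac{\alpha x_r}{1-\alpha x_r}\le\tfrac{\alpha}{1-\alpha}$. In particular each $a_i<1$, so every factor $1+\eps_{r_i}a_i$ is positive.

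\textbf{Finishing the induction and conclusion.} Writing $M:=\eps^{\max}_{j}$, so $|\eps_{r_i}|\le M$ and, by the invariant, $M\le 1$, I expand the product and use the triangle inequality together with $M^{|S|}\le M$ for $\emptyset\ne S$:
\[|\eps_r|=\Bigl|\prod_{i=1}^{k}(1+\eps_{r_i}a_i)-1\Bigr|\le\sum_{\emptyset\ne S\subseteq[k]}\prod_{i\in S}|\eps_{r_i}|\,a_i\le M\sum_{\emptyset\ne S\subseteq[k]}\prod_{i\in S}a_i\le M\cdot\frac{\alpha}{1-\alpha}.\]
Maximizing over $r\in N^{\ell-j-1}(e)$ gives $\eps^{\max}_{j+1}\le\eps^{\max}_{j}\cdot\frac{\alpha}{1-\alpha}$, and since $\frac{\alpha}{1-\alpha}=1-\alpha<1$ by \eqref{eqn:alpha}, the invariant $\eps^{\max}_{j+1}\le 1$ is preserved. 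Iterating yields $\eps^{\max}_{\ell}\le\eps^{\max}_0\,(1-\alpha)^{\ell}$; substituting $1-\alpha=\alpha/(1-\alpha)$ and the two base-case values (the extra factor $\alpha/(1-\alpha)$ in the odd case bumps the exponent from $\ell$ to $\ell+1$, and $\ell+\mathbf{1}[\ell\text{ odd}]=2\lceil\tfrac{\ell}{2}\rceil$ in both cases) gives $\eps^{\max}_{\ell}\le\bigl(\tfrac{\alpha}{1-\alpha}\bigr)^{2\lceil\ell/2\rceil}$, as claimed.

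\textbf{Main obstacle.} There is no deep obstacle here; the content is the telescoping identity (the same mechanism used by \citet{kulkarni2022}) plus a triangle inequality. The points that require care are purely bookkeeping: invoking the recurrence only at internal vertices (leaves contribute $\eps_r=0$, boundary vertices are covered by the base case), checking that the factors $1+\eps_{r_i}a_i$ stay positive, and—most importantly—maintaining the invariant $M\le 1$ along the induction so that the bound $M^{|S|}\le M$ is legitimate.
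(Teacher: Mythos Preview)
Your proof is correct. The high-level architecture---contract $\eps^{\max}$ by a factor $\alpha/(1-\alpha)$ per level and iterate from the boundary---matches the paper's, but you carry out the contraction step by a different mechanism. The paper replaces every $\eps_{r_i}$ by $\pm\eps_j^{\max}$ (bounding $\eps_r$ from above and below separately) and then invokes the auxiliary optimization \Cref{lem:simple_optimization}, whose proof is its own induction on $k$. You instead expand $\prod_i(1+\eps_{r_i}a_i)-1$ over all nonempty $S\subseteq[k]$, apply the triangle inequality, and control the resulting sum in one shot via the telescoping identity $\prod_i(1+a_i)=1/(1-\alpha x_r)$. Your route is more self-contained in that it dispenses with \Cref{lem:simple_optimization} entirely; the price is that you must carry the invariant $\eps_j^{\max}\le 1$ explicitly to justify $M^{|S|}\le M$, whereas in the paper this same constraint is present but hidden in the hypothesis $\eps\in[0,1]$ of \Cref{lem:simple_optimization}. (Your side remark that the factors $1+\eps_{r_i}a_i$ are positive is true but not actually needed for the expansion-plus-triangle-inequality step.)
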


\begin{proof}[Proof of \Cref{lem:error_decrease}]
Fix $r \in N^{\ell-j -1}(e)$, and assume that $r$ has $k \ge 1$ children, as otherwise $\eps_r =0$. Then, since
$|\eps_{r_i}| \le \eps^{\max}_{\ell}$ for $i \in [k]$,
\begin{align*}
\eps_r &= 1 - \prod_{i=1}^{k}\left( 1  + \frac{\eps_{r_i} \alpha x_{e_i}}{1 - \alpha \sum_{i' \le i} x_{r,r_{i'}}} \right) \\
        &\le 1 - \prod_{i=1}^{k}\left( 1  - \frac{\eps_{\ell}^{\max} \alpha x_{e_i}}{1 - \alpha \sum_{i' \le i} x_{r,r_{i'}}} \right) \le 1 - \left( 1 - \frac{ \eps_{\ell}^{\max} \alpha x_r}{1 - \alpha x_r} \right) \leq \eps^{\max}_{\ell} \left(\frac{\alpha }{1-\alpha }\right),
\end{align*}
where the second inequality uses \Cref{lem:simple_optimization}, and the last uses $x_r \le 1$. Similarly,
$\eps_r \ge - \eps_{\ell}^{\max} \left(\frac{\alpha }{1-\alpha }\right)$, and so
$|\eps_{r}| \le \eps_{\ell}^{\max}\cdot \left(\frac{\alpha }{1-\alpha }\right)$ for all $r \in N^{\ell-j -1}(e)$.
\end{proof}

The following optimization problem is critical to the proof of \Cref{lem:error_decrease}:
\begin{lemma} \label{lem:simple_optimization}
Fix $k \in \mb{N}$, $\alpha \in (0,1)$, and $\eps,\bar{x} \in [0,1]$. Suppose $\bm{x}=(x_i)_{i=1}^k$ has non-negative entries, and $\sum_{i=1}^k x_i = \bar{x}$. Then, by maximizing (respectively, minimizing) over all such $\bm{x}$:
\[
\max_{\bm{x}} \prod_{i=1}^{k} \left(1 + \frac{\eps \alpha x_i}{1 - \alpha \sum_{j \le i}x_j} \right) = 1 + \frac{ \eps \alpha \bar{x}}{1 - \alpha \bar{x}},  
\, \min_{\bm{x}} \prod_{i=1}^{k} \left(1 - \frac{\eps \alpha x_i}{1 - \alpha \sum_{j \le i}x_j} \right) = 1 - \frac{ \eps \alpha \bar{x}}{1 - \alpha \bar{x}}.  
\]
% \[
% \min_{\bm{z}} \prod_{i=1}^{k} \left(1 -\frac{\eps c x_i}{1 - c \sum_{j \le i}x_j} \right) = 1 - \frac{ \eps c}{1 - c},  
% \]
\end{lemma}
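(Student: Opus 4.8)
The plan is to prove both identities by the same \emph{mass-concentration (exchange)} argument: I will show that ``merging'' two coordinates --- replacing $(\dots,x_i,x_{i+1},\dots)$ by $(\dots,0,x_i+x_{i+1},\dots)$ --- weakly increases the product in the $\max$ statement and weakly decreases it in the $\min$ statement. Note first that all denominators $1-\alpha\sum_{j\le i}x_j$ are bounded below by $1-\alpha\bar{x}\ge 1-\alpha>0$, since $\sum_j x_j=\bar{x}\le 1$ and $\alpha<1$; in particular no merge step is ever degenerate, even though individual factors of the product may be negative when $\alpha$ and the $x_i$ are large. Iterating the merge (say, always combining the two leftmost nonzero coordinates), any feasible $\bm{x}$ is turned into a vector placing all the mass $\bar{x}$ on a single coordinate, and for such a vector every factor equals $1$ except the one carrying the mass, whose value is $1+\frac{\eps\alpha\bar{x}}{1-\alpha\bar{x}}$ (resp.\ $1-\frac{\eps\alpha\bar{x}}{1-\alpha\bar{x}}$) --- independent of which coordinate it is. That settles one inequality in each identity, and the reverse inequality holds because this same single-mass configuration is itself feasible and attains the stated value.

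It remains to check a single merge step. Fix $i$ and write $t:=1-\alpha\sum_{j<i}x_j>0$, $A:=\alpha x_i$, $B:=\alpha x_{i+1}$, so that $1-\alpha\sum_{j\le i}x_j=t-A$ and $1-\alpha\sum_{j\le i+1}x_j=t-A-B$, and only the $i$-th and $(i+1)$-st factors change under the merge. In the $\max$ product these two factors contribute $\frac{(t-A(1-\eps))(t-A-B(1-\eps))}{(t-A)(t-A-B)}$, whereas after merging they contribute $\frac{t-(A+B)(1-\eps)}{t-A-B}$; clearing the positive common denominator $(t-A)(t-A-B)$, ``merging increases'' reduces to
\[
(t-A)\bigl(t-(A+B)c\bigr)\ \ge\ (t-Ac)\bigl(t-A-Bc\bigr),\qquad c:=1-\eps\in[0,1].
\]
The $\min$ product leads, in the same way, to the reverse inequality with $c:=1+\eps\in[1,2]$. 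Both follow at once from the elementary identity
\[
(t-A)\bigl(t-(A+B)c\bigr)-(t-Ac)\bigl(t-A-Bc\bigr)\ =\ ABc(1-c),
\]
whose right-hand side has the same sign as $1-c$ (since $A,B,c\ge 0$), hence is $\ge 0$ when $c\le 1$ and $\le 0$ when $c\ge 1$ --- exactly the two cases.

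The only genuine computation here is the one-line expansion yielding $ABc(1-c)$; everything else is bookkeeping. The single point that needs care is that the factors of the product may be negative, so one cannot argue factor-by-factor or assume monotonicity in each coordinate separately; the argument must instead be phrased as an inequality between \emph{ratios} of products, where the direction is preserved because the common denominator $(t-A)(t-A-B)$ is always strictly positive. I would also remark that the extremal value is manifestly invariant under permuting the coordinates, which is what makes the prefix-sum structure $\sum_{j\le i}x_j$ harmless for the final bound.
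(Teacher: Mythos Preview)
Your argument is essentially the paper's proof: both establish the two-variable merge inequality and then iterate it to concentrate all mass on a single coordinate (the paper phrases this as induction on $k$ with base case $k=2$, you phrase it as repeated adjacent merges). Your substitution $c=1\mp\eps$ is a neat device that unifies the max and min computations into the single identity $(t-A)(t-(A+B)c)-(t-Ac)(t-A-Bc)=ABc(1-c)$, whereas the paper expands each case separately; the content is the same.
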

% We defer the proofs of Lemmas \ref{lem:error_decrease} and \ref{lem:simple_optimization}
% to \Cref{sec:OCRS_pos_additions}, and complete the proof of \Cref{thm:OCRS_locally_tree_positive}.
\begin{proof}[Proof of \Cref{thm:OCRS_locally_tree_positive}]
It suffices to prove \eqref{eqn:tree_ocrs_guarantee}.
First observe that 
$$\Pr[e \in \scr{M} \mid X_e =1] = \Pr[\text{$u$ and $v$ unmatched when $e$ arrives}] \frac{\alpha }{(1 - \alpha  x_{u}(e))(1 - \alpha  x_{v}(e)}.$$
Now, by \Cref{lem:tree_reduction_game_part_1} and \Cref{lem:tree_reduction_game_part_2}, $\Pr[\text{$u$ and $v$ unmatched when $e$ arrives}] \ge q_{u} q_{v}$. But,
$q_{u} = (1 - \alpha x_{u}(e))(1-\eps_u)$ and $q_{v} = (1 - \alpha x_{v}(e))(1-\eps_v)$. Thus,
since $\max\{\eps_{u}, \eps_{v}\} \le \left(\frac{\alpha }{1-\alpha }\right)^{2\lceil\frac{\ell}{2}\rceil}$ by \Cref{lem:error_decrease}, it follows that $\Pr[e \in \scr{M} \mid X_e =1] \ge (1 -  \left(\frac{\alpha }{1-\alpha}\right)^{2\lceil\frac{\ell}{2}\rceil})^{2} \alpha$ after
cancellation, and so \eqref{eqn:tree_ocrs_guarantee} is proven. The proof of \Cref{thm:OCRS_locally_tree_positive} is thus complete. 
\end{proof}

% \PN{I don't really think we used that $\alpha = (1 - \alpha)^2$ anywhere in the argument, just $\alpha \leq (1 - \alpha)^2$ in the well-definedness. This means for small values of $g$ there is some additional savings from not taking $\alpha = 0.382$ (smaller values of $\alpha$ lead to faster correlation decay!)}
% \PN{I also think it's important to mention that if there are no children, the error bound is actually 0, so for trees we get the right answer.}

% \CMc{Do you want to make a pass of these suggestions? I think you're most warmed up with this section now. Yes, I agree there are savings from optimizing over the actual choice of $\alpha$. We can also get $g+1$, I agree. }

% \CMc{I think the following extra bounds are true.}
% \begin{lemma}
% Fix $k \in \mb{N}$, $0 \le \eps \le 1$ and $0 \le c \le 1$. Suppose $\bm{z}$ is non-negative, and $\sum_{i=1}^k z_i = \bar{z}$. Then,
% \[
% \frac{1}{(1-c z)^{\eps}} \le \min_{\bm{z}} \prod_{i=1}^{k} \left(1 + \frac{\eps c z_i}{1 - c \sum_{j \le i}z_j} \right) \le \max_{\bm{z}} \prod_{i=1}^{k} \left(1 + \frac{\eps c z_i}{1 - c \sum_{j \le i}z_j} \right) = 1 + \frac{ \eps c z}{1 - cz},  
% \]
% and
% \[
% 1 - \frac{ \eps cz}{1 - cz} = \min_{\bm{z}} \prod_{i=1}^{k} \left(1 - \frac{\eps c z_i}{1 - c \sum_{j \le i}z_j} \right) \le \max_{\bm{z}} \prod_{i=1}^{k} \left(1 - \frac{\eps c z_i}{1 - c \sum_{j \le i}z_j} \right) \le (1-c z)^{\eps} ,  
% \]
% \end{lemma}

\subsection{Reduction to Locally-Tree-Like Edges: Proving \Cref{thm:vanishing_to_local}}
We shall make use of the following elementary coupling. 
\begin{proposition} \label{prop:two_round_coupling}
Suppose $x,y,z \in (0,1)$ satisfy $x \ge y z$. There exists a coupling
of Bernoulli random variables $X \sim \Ber(x)$, $Z \sim \Ber(z)$, and $Y \sim \Ber(y)$,
where $Y, Z$ are independent, and $X \ge Y \cdot Z$.
\end{proposition}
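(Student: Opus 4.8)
The plan is to build the coupling in two stages: first realize the product $Y\cdot Z$, then ``top up'' its randomness to obtain $X$. Concretely, I would sample $Y\sim\Ber(y)$ and $Z\sim\Ber(z)$ independently on a common probability space, so that their product satisfies $Y\cdot Z\sim\Ber(yz)$. It then remains to construct $X\sim\Ber(x)$ on the same space, using additional independent randomness, in such a way that $X\ge Y\cdot Z$ pointwise.

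The key step is the conditional construction of $X$ given the value of $Y\cdot Z$. On the event $\{Y\cdot Z=1\}$ I would simply set $X=1$, which forces $X\ge Y\cdot Z$ there. On the event $\{Y\cdot Z=0\}$, which has probability $1-yz$, I would set $X=1$ with probability $p:=\frac{x-yz}{1-yz}$ and $X=0$ otherwise, using a fresh uniform random variable independent of $(Y,Z)$. The inequalities $x\ge yz$ and $x\le 1$ (both from the hypotheses, the latter since $x\in(0,1)$) guarantee $0\le p\le 1$, so this is a legitimate Bernoulli draw.

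Finally I would verify the marginal: $\Pr[X=1]=\Pr[Y\cdot Z=1]\cdot 1+\Pr[Y\cdot Z=0]\cdot p = yz+(1-yz)\cdot\frac{x-yz}{1-yz}=x$, so $X\sim\Ber(x)$ as required. The independence of $Y$ and $Z$ is preserved by construction since we never condition on or modify the pair $(Y,Z)$ when drawing $X$, and the relation $X\ge Y\cdot Z$ holds on both events in the case split, hence everywhere.

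I do not anticipate a genuine obstacle here; the only point requiring a moment's care is checking that the ``top-up'' probability $p=\frac{x-yz}{1-yz}$ lies in $[0,1]$, which is exactly where both the assumption $x\ge yz$ and the constraint $x<1$ are used. Everything else is a routine marginal computation.
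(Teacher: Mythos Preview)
Your proposal is correct. The construction is clean: by generating $(Y,Z)$ first and independently, you get their independence for free, and the only thing left to verify is the marginal of $X$ and the validity of the top-up probability $p=(x-yz)/(1-yz)$, both of which you handle correctly.

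The paper takes a different route. It works in the opposite direction: it first treats the boundary case $x=yz$ by sampling $X$ and then constructing $(Y,Z)$ conditionally on $X$ (setting $Y=Z=1$ when $X=1$, and distributing mass over the three outcomes with $YZ=0$ when $X=0$); this requires explicitly solving for the conditional masses, checking they lie in $[0,1]$, and then verifying a posteriori that the resulting $Y,Z$ are independent. The general case $x\ge yz$ is then reduced to the boundary case by first coupling $X\ge X'$ with $X'\sim\Ber(yz)$ and applying the boundary construction to $X'$. Your approach collapses both steps into one and sidesteps the independence verification entirely, so it is shorter and more transparent; the paper's approach, on the other hand, makes the structure $X\ge X'=YZ$ explicit, which mirrors how the coupling is later used (expose $Y$ first, then $Z$).
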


% whose proof we defer to \Cref{sec:OCRS_pos_additions}.

Suppose that $(G,\bm{x})=(G(n), \bm{x}(n))$ satisfies $x_e \le \eps$ for all $e \in E$, where $\eps = \eps(n)$ satisfies $\eps = o(1)$. Fix $\ell = \ell(n)$ where $\ell(n) = \log\log(1/\eps)$. Observe that $\ell(n) \rightarrow \infty$ as $n \rightarrow \infty$, since $\eps = o(1)$. We describe a CRS $\psi$ for $(G,\bm{x})$ which makes use of the CRS $\psi_\ell$ assumed in \Cref{thm:vanishing_to_local}. 
The idea is to first write $y_e = (1 - 1/\log^{1/4}(\eps^{-1})) x_e \log(1/\eps)$ and $z_e = 1/\log(1/\eps)$, where we observe that
$x_e \ge y_e z_e$ for each $e$.  There are then $3$ steps to $\psi$, which we describe below:
\begin{enumerate}
    \item \label{enum:subgraph_sample} Sample $G' \subseteq G$: Apply \Cref{prop:two_round_coupling} to each $e \in E$,
    to get $(Y_e)_{e \in E}$ and $(Z_e)_{e \in E}$ such that $X_e \ge Y_e Z_e$,
and for which $((Y_e)_{e \in E}, (Z_e)_{e \in E})$ are independent. Let $G'=(V,E')$ be the subgraph of $G$ where $E':=\{e \in E: Y_e =1\}$. 
    \item \label{enum:fractional_matching} Compute a fractional matching $(z'_e)_{e \in E}$ of $G$ supported on $E'$: Initialize $z'_e :=0$ for each $e \in E$. Then, process the edges $e=(u,v)$ of $E$ in an arbitrary order (such as the one presented by an oblivious adversary if we wish $\psi$ to be an OCRS), and update $z'_{e} \leftarrow 1/\log(1/\eps)$, if the (current) fractional matching constraints on $u$ and $v$ are not violated. (See \Cref{alg:fractional_matching} below for a formal description of this step).
    \item \label{enum:execute_tree_local} Run $\psi_g$ on $(G', \bm{z}')$ where $\bm{z}' = (z'_e)_{e \in E'}$: For each $e \in E$, define $Z'_e = Z_e \cdot \bm{1}_{[z'_e > 0]}$. Execute $\psi_g$ on $(G', \bm{z}')$ using $(Z'_e)_{e \in E'}$, and return the matching $\psi_{g}(G', \bm{z}')$.
\end{enumerate}
Let us denote $\scr{M} = \psi_{\ell}(G', \bm{z}')$ to be the matching output by $\psi$. First observe that $\scr{M}$
only contains active edges of $G$, and so $\psi$ is indeed a valid CRS. To see this, observe that a necessary
condition for $e \in \scr{M}$ is that $Y_e Z'_e =1$. But since $X_e \ge Y_e Z_e$ by \Cref{prop:two_round_coupling},
and $Z'_e := Z_e \cdot \bm{1}_{[z'_e > 0]}$, it follows that $X_e =1$. 
\begin{algorithm}[H]
\caption{Fractional-Matching} 
\label{alg:fractional_matching}
\begin{algorithmic}[1]
\Require $G=(V,E)$ and $\eps \in [0,1]$.
\Ensure a fractional matching $\bm{z}'=(z'_e)_{e \in E}$ of $G$ supported on $E'$.
\State Set $z'_e = 0$ for all $e \in E$.
\For{arriving $e =(u,v) \in E$ in an arbitrary order $<$}
\If{$Y_{e} =1$, and $\max\{\sum_{\substack{f \in \partial_G(v):\\ f < e}} z'_f, \sum_{\substack{f \in \partial_{G}(u):\\ f < e}} z'_f\} \le 1 - \frac{1}{\log(1/\eps)}$}
\State $z'_e \leftarrow \frac{1}{\log(1/\eps)}$.
\EndIf
\EndFor
\State \Return $\bm{z}'=(z'_e)_{e \in E}$.
\end{algorithmic}
\end{algorithm}
\begin{remark}
\Cref{alg:fractional_matching} is stated and analyzed for an arbitrary edge ordering. When proving the OCRS part
of \Cref{thm:vanishing_to_local}, we will take this to be the one chosen by an oblivious adversary.
\end{remark}
To analyze $\psi$, we prove three lemmas. The first is a technical point
that ensures that once we condition on $G'$, the random variables $(Z'_e)_{e \in E}$
are independent. This ensures that we are indeed passing a valid CRS input to $\psi_{\ell}$
in step \ref{enum:execute_tree_local}. This follows from the fact that $(z'_e)_{e \in E}$ is a function of
$G'$, and $((Y_e)_{e \in E}, (Z_e)_{e \in E})$ are independent by construction.
\begin{lemma} \label{lem:conditional}
Conditional on $G'$ (equivalently $(Y_e)_{e \in E}$), the random variables $(Z'_{e})_{e \in E}$ are independently distributed
with marginals described by $(z'_e)_{e \in E}$.
\end{lemma}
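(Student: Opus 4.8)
The plan is to exploit the two-block independence structure supplied by \Cref{prop:two_round_coupling}. Since the coupling of that proposition is applied to each edge $e$ separately (and the original states $(X_e)_{e\in E}$ are themselves independent), the triples $(X_e, Y_e, Z_e)$ are mutually independent over $e \in E$. In particular, $(Z_e)_{e\in E}$ is a family of mutually independent random variables with $Z_e \sim \Ber(1/\log(1/\eps))$, and moreover the whole block $(Z_e)_{e\in E}$ is independent of the whole block $(Y_e)_{e\in E}$. This is the one point I would be careful to spell out, since \Cref{prop:two_round_coupling} as stated only asserts independence of the two blocks for a single edge.

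Next I would observe that $(z'_e)_{e\in E}$ produced by \Cref{alg:fractional_matching} is a deterministic function of $(Y_e)_{e\in E}$ (together with the fixed edge ordering and $\eps$): the algorithm consults only the indicators $Y_e$ when deciding whether to raise $z'_e$ to $1/\log(1/\eps)$. Equivalently, $(z'_e)_{e\in E}$ is measurable with respect to $G'=(V,E')$. Hence, after conditioning on $G'$ (equivalently on $(Y_e)_{e\in E}$), each indicator $\bm{1}_{[z'_e>0]}$ becomes a fixed constant $a_e\in\{0,1\}$ and the vector $(z'_e)_{e\in E}$ is a fixed vector.

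Finally I would conclude as follows. Because $(Z_e)_{e\in E}\perp(Y_e)_{e\in E}$, conditioning on $(Y_e)_{e\in E}$ leaves the joint law of $(Z_e)_{e\in E}$ unchanged — it remains a product of $\Ber(1/\log(1/\eps))$ distributions. Thus, conditional on $G'$, we have $Z'_e = a_e Z_e$ with the $a_e$ fixed and $(Z_e)_{e\in E}$ mutually independent, so $(Z'_e)_{e\in E}=(a_e Z_e)_{e\in E}$ is mutually independent. Reading off the marginals: if $a_e=1$ then $z'_e=1/\log(1/\eps)$ and $Z'_e=Z_e\sim\Ber(1/\log(1/\eps))=\Ber(z'_e)$; if $a_e=0$ then $z'_e=0$ and $Z'_e=0\sim\Ber(0)=\Ber(z'_e)$. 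This gives exactly the conditional independence and marginals claimed. I do not expect a genuine obstacle here; the proof is essentially bookkeeping about which variables a deterministic subroutine reads, and the only subtlety is the full mutual independence of $(Z_e)_{e\in E}$ and its independence from $(Y_e)_{e\in E}$ noted above.
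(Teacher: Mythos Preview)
Your proposal is correct and follows essentially the same approach as the paper: both arguments hinge on the facts that $(z'_e)_{e\in E}$ is $(Y_e)_{e\in E}$-measurable and that $((Y_e)_{e\in E},(Z_e)_{e\in E})$ are independent by construction, then read off the conditional law of $Z'_e = Z_e\cdot\bm{1}_{[z'_e>0]}$. Your write-up is somewhat more explicit about the marginals and the across-edge independence, but the logical skeleton is identical.
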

\begin{proof}[Proof of \Cref{lem:conditional}]
Observe that the fractional matching $(z'_e)_{e \in E}$ is a function
of $(Y_e)_{e \in E}$ (i.e., it is $(Y_e)_{e \in E}$-measurable). Thus,
if we define the random subset $\scr{S}:= \{e \in E: z'_e > 0\}$,
then $\scr{S}$ is also a function of $(Y_e)_{e \in E}.$
On the other hand, observe that once $\scr{S}$ is fixed, $(Z'_e)_{e \in \scr{S}}$
and $(Z_e)_{e \in \scr{S}}$ are equal, and so distributed identically.
Since the random variables $((Y_e)_{e \in E}, (Z_e)_{e \in E})$ are independent by construction,
and $Z'_{e} =0$ for all $e \in E \setminus \scr{S}$, the claim follows.
\end{proof}

We next show that if $Y_e =1$, then w.h.p. $z'_e = 1/\log(1/\eps)$. In other words,
$z'_e$ will typically be rounded up to the desired value, assuming $e$ is an edge
of $G'$.
\begin{lemma} \label{lem:good_fractional_matching}
For each $e \in E$, 
$\Pr[z'_e = 1/\log(1/\eps) \mid Y_e =1] \ge 1 - 2\exp\left(- \Omega(\log^{1/2}(1/\eps))\right)$.
\end{lemma}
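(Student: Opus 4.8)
The plan is to fix an edge $e=(u,v)$, condition on $Y_e=1$, and show that the only way $z'_e$ fails to be set to $1/\log(1/\eps)$ in \Cref{alg:fractional_matching} is a low-probability ``congestion'' event at $u$ or at $v$. Concretely, when $e$ is reached, \Cref{alg:fractional_matching} sets $z'_e=1/\log(1/\eps)$ exactly when neither partial degree sum $\sum_{f\in\partial_G(u),\,f<e}z'_f$ nor $\sum_{f\in\partial_G(v),\,f<e}z'_f$ exceeds $1-1/\log(1/\eps)$. So, by a union bound, it suffices to bound $\Pr[B_u\mid Y_e=1]$ where $B_u:=\{\sum_{f\in\partial_G(u),\,f<e}z'_f > 1-1/\log(1/\eps)\}$, the event $B_v$ being handled symmetrically.

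Next I would pass from the data-dependent quantity defining $B_u$ to a clean binomial-type sum. Every value $z'_f$ returned by \Cref{alg:fractional_matching} lies in $\{0,\,1/\log(1/\eps)\}$, so $B_u$ says that strictly more than $\log(1/\eps)-1$ of the edges $f\in\partial_G(u)$ with $f<e$ have $z'_f>0$; and $z'_f>0$ forces $Y_f=1$ for an $f\neq e$. Hence $B_u\subseteq\{N_u>\log(1/\eps)-1\}$ where $N_u:=\sum_{f\in\partial_G(u)\setminus\{e\}}Y_f$. Since the $(Y_f)_{f\in E}$ are independent, $N_u$ depends only on $(Y_f)_{f\neq e}$ and is therefore independent of $Y_e$, so conditioning on $Y_e=1$ leaves its law unchanged. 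Using the definition $y_f=(1-\log^{-1/4}(1/\eps))\,x_f\log(1/\eps)$ together with the fractional matching constraint $\sum_{f\in\partial_G(u)}x_f\le1$,
\[
\mathbb{E}[N_u]\le\sum_{f\in\partial_G(u)}y_f=(1-\log^{-1/4}(1/\eps))\log(1/\eps)\sum_{f\in\partial_G(u)}x_f\le\log(1/\eps)-\log^{3/4}(1/\eps).
\]

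Finally I would apply a Chernoff bound for sums of independent $[0,1]$ variables. The threshold $\log(1/\eps)-1$ exceeds $\mathbb{E}[N_u]$ by $t:=\log^{3/4}(1/\eps)-1=\Omega(\log^{3/4}(1/\eps))$, while $\mathbb{E}[N_u]\le\log(1/\eps)$, so $\Pr[N_u\ge\mathbb{E}[N_u]+t]\le\exp(-t^2/(2(\mathbb{E}[N_u]+t/3)))=\exp(-\Omega(\log^{3/2}(1/\eps)/\log(1/\eps)))=\exp(-\Omega(\log^{1/2}(1/\eps)))$. The same bound holds for $B_v$, and summing the two gives $\Pr[z'_e\neq1/\log(1/\eps)\mid Y_e=1]\le 2\exp(-\Omega(\log^{1/2}(1/\eps)))$, which is the lemma. (Here $\eps=o(1)$ guarantees $\log(1/\eps)\to\infty$, so all these exponents are eventually positive.)

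I do not expect a genuine obstacle here; the proof is essentially a routine concentration estimate, and the main care is bookkeeping around the conditioning: checking that $N_u$ and $N_v$ are functions of $(Y_f)_{f\neq e}$ only, so that conditioning on $Y_e=1$ is harmless, and that the implication $z'_f>0\Rightarrow Y_f=1$ is exactly what lets us dominate the (algorithm-dependent) congestion event $B_u$ by the clean sum $N_u$ whose mean we control via the fractional matching constraint.
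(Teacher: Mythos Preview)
Your proposal is correct and follows essentially the same approach as the paper: reduce the failure event to congestion at one of the two endpoints, dominate the congestion by $N_u=\sum_{f\in\partial_G(u)\setminus\{e\}}Y_f$, observe this is independent of $Y_e$, bound $\mathbb{E}[N_u]\le(1-\log^{-1/4}(1/\eps))\log(1/\eps)$ via the fractional matching constraint, and apply a Chernoff bound plus a union bound over the two endpoints. The only cosmetic difference is that the paper writes the Chernoff bound in multiplicative form (with $\delta=\log^{-1/4}(1/\eps)$) whereas you use an additive Bernstein-style form; both yield the same $\exp(-\Omega(\log^{1/2}(1/\eps)))$ exponent.
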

\begin{proof}[Proof of \Cref{lem:good_fractional_matching}]
Consider the randomly generated fractional matching $(z'_f)_{f \in E}$ output
by \Cref{alg:fractional_matching}. Given $e=(u,v) \in E$,
define $z'_{v}(e) := \sum_{\substack{f \in \partial(v):\\ f < e}} z'_f$
and $z'_{u}(e) := \sum_{\substack{f \in \partial(u):\\ f < e}} z'_f$. 
Observe
that $z'_{e} = 1/\log(1/\eps)$ if and only if $Y_e =1$ and $\max\{z'_{v}(e), z'_{u}(e) \} \le 1 - \frac{1}{\log(1/\eps)}$. On the other hand, since $z'_{v}(e)$ and $z'_{u}(e)$ are independent 
of $Y_e$, to prove the lemma it suffices to show that
\begin{equation} \label{eqn:control_endpoints}
    \Pr\left[\max\{z'_{v}(e), z'_{u}(e) \} \le 1 - \frac{1}{\log(1/\eps)} \right] \ge 1 - 2\exp\left(- \Omega(\log^{1/2}(1/\eps))\right)
\end{equation}
We begin by considering $z'_{v}(e)$.
Observe that 
\begin{equation} \label{eqn:e_arrival}
    z'_{v}(e) = \sum_{\substack{f \in \partial(v):\\ f < e}} z'_f \le \sum_{\substack{f \in \partial(v) \setminus e}} \frac{Y_f}{\log(1/\eps)},
\end{equation}
where $(Y_f)_{f \in \partial(v) \setminus e}$ are independently drawn Bernoulli random-variables. Moreover, $$\sum_{\substack{f \in \partial(v) \setminus e}} \mb{E}[ Y_f ] = \sum_{f \in \partial(v) \setminus e} (1 - \log^{-1/4}(\eps^{-1})) x_f \log(1/\eps) \le (1 - \log^{-1/4}(\eps^{-1})) \log(1/\eps),$$
since $\sum_{f \in \partial(v)} x_f \le 1$. Setting $\delta = \log^{-1/4}(\eps^{-1})$ and $\mu = (1 - \log^{-1/4}(\eps^{-1})) \log(1/\eps)$, we can thus
apply Chernoff bound to get that for $n$ sufficiently large,
\begin{equation*}
   \Pr\left[\sum_{f \in \partial(v) \setminus e} Y_f \ge (1 + \delta) \left(1 - \log^{-1/4}(\eps^{-1})\right) \log(1/\eps) \right] \le \exp\left(-\frac{\delta^2 \mu}{2 + \delta} \right) \le \exp\left(-\frac{\log^{1/2}(\eps^{-1})}{3} \right).
\end{equation*}
On the other hand,  $(1 + \delta) \left(1 - \log^{-1/4}(\eps^{-1})\right) \le 1 -\log^{-1}(1/\eps)$, so
$$
\Pr\left[\sum_{f \in \partial(v) \setminus e} Y_f \ge \log(1/\eps)\left(1 - \frac{1}{\log(1/\eps)}\right) \right] \le \exp\left(- \frac{\log^{1/2}(\eps^{-1})}{3}\right).
$$
Thus, $\Pr[ z_{v}(e) > 1 - 1/\log(1/\eps)] \le \exp\left(- \Omega(\log^{1/2}(1/\eps))\right)$ after applying this bound to
\eqref{eqn:e_arrival}. An analogous argument applies to $z'_{u}(e)$, and so \eqref{eqn:control_endpoints} follows after applying a union bound. The proof of the lemma
is thus complete.
\end{proof}

Define $C_{\ell}(e)$ to be the event that there \textit{exists} a cycle in the
$\ell$-neighborhood of $e$ in $G'$. We next upper bound the probability that $C_{\ell}(e)$ occurs, conditional on $e$ being in $G'$ (i.e., $Y_e =1$).
\begin{lemma} \label{lem:cycle_upper_bound}
For each $e \in E$, $\Pr[C_{\ell}(e) \mid Y_e =1] = O( \eps ( \log(1/\eps))^{2\ell+1})$.
\end{lemma}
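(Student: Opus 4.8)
\textbf{Proof proposal for \Cref{lem:cycle_upper_bound}.}

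The plan is to bound the probability that a cycle appears in the $\ell$-neighborhood of $e$ in $G'$ by a first-moment (union bound) argument over potential cycles, exploiting that each edge $f$ survives into $G'$ independently with probability $y_f = (1 - \log^{-1/4}(\eps^{-1})) x_f \log(1/\eps) \le \eps \log(1/\eps)$, which is $o(1)$. Condition throughout on $Y_e = 1$; since the $(Y_f)_{f \in E}$ are independent, this conditioning only fixes the single edge $e$ and does not affect the other edges' survival. A cycle in the $\ell$-neighborhood of $e$ in $G'$ is a simple cycle all of whose vertices lie within graph distance $\ell$ (in $G'$, hence also a fortiori reachable) of an endpoint of $e$. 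I would enumerate such a cycle by first specifying a path in $G'$ from $\{u,v\}$ to the cycle and then the cycle itself; more economically, I would note that if $C_\ell(e)$ occurs then $G'$ contains a connected subgraph $H$ on at most $2\ell + 1$ edges (a path of length $\le \ell$ from an endpoint of $e$ to the cycle, together with the cycle of length $\le 2\ell+1$) with more edges than vertices, i.e.\ containing a cycle, and with $e \in H$ possible to assume or not — the cleanest version is: there is a walk of length at most $2\ell+1$ starting at $u$ or $v$ that revisits a vertex.

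Concretely, I would union-bound over all sequences $w_0, w_1, \dots, w_t$ of vertices with $w_0 \in \{u,v\}$, $t \le 2\ell + 1$, consecutive vertices adjacent in $G$, and with $w_t = w_j$ for some $j < t$ (the first repeated vertex witnessing the cycle). For a fixed such sequence, the probability that all $t$ edges $\{w_{i-1}, w_i\}$ lie in $G'$ is $\prod_{i=1}^{t} y_{\{w_{i-1},w_i\}} \le (\eps \log(1/\eps))^{t}$ if the edges are distinct; one must be slightly careful that a closed walk might reuse an edge, but the shortest witnessing closed walk uses each edge at most once except possibly retracing, and in any case the number of \emph{distinct} edges on a genuine cycle-containing walk of length $t$ is at least, say, $t/2 + 1$ once $t \ge 2$, so the survival probability is at most $(\eps\log(1/\eps))^{\Omega(t)}$ — actually, the clean bound is obtained by summing the lengths of the path-to-cycle ($i$ edges, $0 \le i \le \ell$) and the cycle ($j$ edges, $3 \le j \le 2\ell+1$), all distinct, giving survival probability $\le (\eps\log(1/\eps))^{i+j}$ and a count of at most $2 \Delta^{i+j}$ such configurations starting from $u$ or $v$. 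The key subtlety is that $G$ is an \emph{arbitrary} graph with no a priori degree bound, so I cannot bound the number of length-$t$ walks by $\Delta^t$ with $\Delta$ a constant. This is the main obstacle, and it is resolved exactly as in \Cref{lem:good_fractional_matching}: the fractional matching constraint $\sum_{f \in \partial(w)} x_f \le 1$ means that for any vertex $w$, $\sum_{f \in \partial(w)} y_f \le (1-\log^{-1/4}(\eps^{-1}))\log(1/\eps) \le \log(1/\eps)$, so in the union bound I should weight each edge by its own $y_f$ rather than counting edges and multiplying by a uniform bound. That is, for a fixed starting vertex $w_0$, $\sum_{\text{walks } w_0 \dots w_t} \prod_{i=1}^t y_{\{w_{i-1},w_i\}} \le \big(\sum_{f \in \partial(\cdot)} y_f\big)^{t} \le (\log(1/\eps))^{t}$, summing over all neighbors at each step. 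To recover a factor of $\eps$, I would peel off one edge of the cycle and bound its contribution by $\max_f y_f \le \eps \log(1/\eps)$ rather than by the summed weight, so the total over a fixed walk-shape of length $t$ is at most $\eps \log(1/\eps) \cdot (\log(1/\eps))^{t-1} = \eps (\log(1/\eps))^{t}$.

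Putting it together: summing over the two choices of $w_0 \in \{u,v\}$, over the length $i \in \{0,\dots,\ell\}$ of the connecting path, over the length $j \in \{3,\dots,2\ell+1\}$ of the cycle, and over the position of the repeated vertex, and using $t = i + j \le 3\ell + 1$ while peeling off one edge, gives
\begin{equation*}
\Pr[C_\ell(e) \mid Y_e = 1] \;\le\; 2 \sum_{i=0}^{\ell} \sum_{j=3}^{2\ell+1} j \cdot \eps\,(\log(1/\eps))^{i+j} \;=\; O\!\big(\eps\,(\log(1/\eps))^{2\ell+1}\big),
\end{equation*}
where the final estimate uses that $\log(1/\eps) \to \infty$ so the sum is dominated (up to a constant and the polynomial factor $\ell^2$, absorbed since $\ell = \log\log(1/\eps)$ grows far slower than any power of $\log(1/\eps)$) by its largest term, with exponent $i + j$ maximized at $i = \ell, j = 2\ell+1$. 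I would double-check the edge cases (the cycle could pass through $u$ or $v$ themselves, in which case the connecting path has length $0$; a short cycle of length $2$ cannot occur since $G$ has no parallel edges, consistent with $j \ge 3$) and confirm that reusing an edge in a non-simple closed walk only helps the bound. This completes the proof.
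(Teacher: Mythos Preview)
Your approach mirrors the paper's: a first-moment bound over path-plus-cycle witnesses, using $\sum_{f\in\partial(w)} y_f \le \log(1/\eps)$ at each step of the walk (from the fractional matching constraint) and peeling off a single edge bounded by $\max_f y_f \le \eps\log(1/\eps)$ to extract the $\eps$ factor. That part is fine and matches what the paper does.

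The gap is in the last step. You sum over $i\in\{0,\dots,\ell\}$ and $j\in\{3,\dots,2\ell+1\}$ \emph{independently}, so the dominant term in $\sum_{i,j} j\,\eps(\log(1/\eps))^{i+j}$ has exponent $i+j=\ell+(2\ell+1)=3\ell+1$, not $2\ell+1$; you even say the maximum is at $i=\ell,\ j=2\ell+1$ and then write $O(\eps(\log(1/\eps))^{2\ell+1})$, which does not follow. (Your earlier parenthetical that the witness $H$ has ``at most $2\ell+1$ edges'' contradicts your own ranges for $i$ and $j$.) To recover the stated exponent you must couple the two lengths: a BFS from $\{u,v\}$ in $G'$ shows that if $N^\ell_{G'}(e)$ contains a cycle, the first non-tree edge closes a cycle through a vertex $w$ at level $k$ of length at most $2(\ell-k)+1$, so the witness has $k+\big(2(\ell-k)+1\big)=2\ell-k+1\le 2\ell+1$ edges in total. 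This is exactly the constraint the paper imposes, and without it you only prove $O(\eps(\log(1/\eps))^{3\ell+1})$, which is strictly weaker than the lemma as stated (though, as you note, still $o(1)$ for $\ell=\log\log(1/\eps)$ and hence adequate downstream). Separately, your upper bound $j\le 2\ell+1$ is asserted without argument; the same BFS observation is what justifies it.
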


\begin{proof}[Proof of \Cref{lem:cycle_upper_bound}]
In order to prove the lemma, we first upper bound the probability there exists
a cycle in $N^{\ell}_{G'}(e)$ which contains $e$.
Afterwards, we upper bound the probability there exists a cycle in $N^{\ell}_{G'}(e)$ which does \textit{not} contain
$e$. Both upper bounds are of the form $O( \eps ( \log(1/\eps))^{2\ell+1})$, so a simple union bound will then complete
the proof.

Let $N_k$ be a random variable counting the number of $k$ length cycles in $G'$ which contain $e$. Since we only care about cycles of this form in the $\ell$-neighborhood of $e$,
it suffices to prove that
\begin{equation} \label{eqn:e_contained_probability}
\Pr\left[\sum_{k = 3}^{2\ell+2}N_k > 0 \Big| Y_e = 1 \right] =O\left(\eps(\log(1/\eps))^{2\ell+1}\right).
\end{equation}
Assume for now that
\begin{equation} \label{eqn:cycles_length_k}
\E[N_k \mid Y_e =1] \leq \eps(\log(1/\eps))^{k-1}. 
\end{equation}
In this case, by Markov's inequality and linearity of expectation,
\[
\Pr\left[\sum_{k = 3}^{2\ell+2}N_k > 0 \Big| Y_e = 1 \right] \le \sum_{k = 3}^{2\ell + 2} \eps(\log(1/\eps))^{k-1} = O\left(\eps(\log(1/\eps))^{2\ell+1}\right),\]
where the final inequality holds from the geometric series formula.

% since if $r \geq 2,$ the geometric series formula tells us that $1 +r + r^2 +\cdots+ r^{n-1}+r^n \leq 2r^n.$

In order to establish \eqref{eqn:cycles_length_k},
let $w_1, w_2, \ldots, w_{k-2}$ denote $k - 2$ distinct vertices in $G$, all of them different from $u$ and $v$. For convenience, let us write $u = w_0$, and $v = w_{k-1}$. Then, we have that:

\[\E[N_k \mid Y_e =1] = \sum_{w_1, w_2, \ldots, w_{k-2}}\prod_{i = 0}^{k-2}y_{w_iw_{i+1}} \leq  (\log(1/\eps))^{k-1} \sum_{w_1, w_2, \ldots, w_{k-2}}\prod_{i = 0}^{k-2}x_{w_iw_{i+1}},\]
since $y_e \leq x_e \log(1/\eps)$. Now, we know $x_{w_{k-2}v} \leq \epsilon$, for all choices of $w_{k-2}$. Therefore, it follows that:

\[\E[N_k \mid Y_e =1] \leq  \eps (\log(1/\eps))^{k-1} \sum_{w_1, w_2, \ldots, w_{k-2}}\prod_{i = 0}^{k-3}x_{w_iw_{i+1}}.\]
Finally, we will show using a simple induction on $k$ that
\begin{equation}\label{eqn:inductive_bound_product}
\sum_{w_1, w_2, \ldots, w_{k-2}}\prod_{i = 0}^{k-3}x_{w_iw_{i+1}} \leq 1.
\end{equation}
The statement is true for $k = 3$, since $\sum_{w_1} x_{uw_1} \leq 1$ as $\bm{x}$ is fractional matching. Suppose we have already established the statement for $k = r \geq 3$. Then, for $k = r+1$,
\begin{align*}
\sum_{w_1, w_2, \ldots, w_{r-1}}\prod_{i = 0}^{r-2}x_{w_iw_{i+1}} &=  \sum_{w_1, w_2, \ldots, w_{r-2}}\prod_{i = 0}^{r-3}x_{w_iw_{i+1}}\cdot \left(\sum_{w_{r-1}}x_{w_{r-2}w_{r-1}}\right) \\
&\leq \sum_{w_1, w_2, \ldots, w_{r-2}}\prod_{i = 0}^{r-3}x_{w_iw_{i+1}}
\leq 1,
\end{align*}
where the first inequality follows from the fact that $\bm{x}$ is a fractional matching, and the second from the induction hypothesis. We conclude that \eqref{eqn:inductive_bound_product} 
as desired, and so \eqref{eqn:cycles_length_k} holds.

Next, let us consider cycles in $N^{\ell}_{G'}(e)$ which do \textit{not} contain the edge $e$. The existence of such a cycle guarantees the existence of a path of length $k \leq \ell - 1$ from either $u$ of $v$ to a distinguished vertex $w$ (possibly equal to $u$ or $v$), and a cycle of length at most $2(\ell-k) + 1$ that contains $w$. Note that we can upper bound the probability that such a structure by:
$$\sum_{w \in V, k \leq \ell-1}\Pr[\text{$\exists$ a path of length $k$ from $u$ to $w$}]\Pr[\text{$\exists$ a cycle of length $\leq 2(g-k) + 1$ containing $w$}].$$
Now, first of all, 
$$\Pr[\text{there is a cycle of length $\leq 2(g-k) + 1$ containing $w$}] = O(\eps (\log(1/\eps))^{2(g-k)+1}),$$
by an argument analogous to the one used to establish \eqref{eqn:e_contained_probability} (we let $w_0 = w_{g-k-1} = w$). Furthermore, note that if we write $u = w_0$, and $w = w_{k}$, then:
\begin{align*}
\sum_{w \in G}\Pr[\text{there is path of length $k$ from } u \text{ or } v \text{ to } w] 
    &\leq2\sum_{w \in G}\Pr[\text{there is path of length $k$ from } u \text{ to } w] \\
    &\leq 2\sum_{w_1, w_2, \ldots, w_{k-1}, w_{k}}\prod_{i = 0}^{k-1}y_{w_iw_{i+1}} \\
    &\leq 2(\log(1/\eps))^{k}\sum_{w_1, w_2, \ldots, w_{k-1}, w_{k}}\prod_{i = 0}^{k-1}x_{w_iw_{i+1}} \\
    &\leq 2(\log(1/\eps))^{k}.    
\end{align*}
where the final inequality follows from \eqref{eqn:inductive_bound_product}. Finally it follows that the probability with which such a structure exists is at most:
$$\sum_{k = 0}^{\ell -1} O( \eps (\log(1/\eps))^{2(\ell-k)+1} )\cdot 2(\log(1/\eps))^{k} = O\left(\eps(\log(1/\eps))^{2 \ell +1}\right).$$
Together with the previous upper bound of \eqref{eqn:e_contained_probability}, this completes the proof.
\end{proof}
We are now ready to prove \Cref{thm:vanishing_to_local}.
\begin{proof}[Proof of \Cref{thm:vanishing_to_local}]
In order to analyze $\psi$ on $G$, fix $e=(u,v) \in E$. Now, 
$e \in \scr{M}$ if and only if $e \in \psi_{g}(G', \bm{z}')$. We shall first
show that
\begin{equation} \label{eqn:averaged_psi_performance}
\Pr[e \in \psi_{g}(G', \bm{z}')] \ge (1 - o(1)) c x_e,
\end{equation}
Recall that $C_{\ell}(e)$ is the event that there \textit{exists} a cycle in the
$\ell$-neighborhood of $e$ in $G'$. Observe that $C_{\ell}(e)$ is a function of $(Y_f)_{f \in E}$ (i.e., it is $(Y_f)_{f \in E}$-measurable). Moreover, since $\psi_\ell$ is $c_\ell$-selectable on $\ell$-locally-tree-like edges by assumption,
we can apply \Cref{lem:conditional} to get that
\begin{equation*} 
    \Pr[e \in \psi_{\ell}(G', \bm{z}') \mid (Y_f)_{f \in E \setminus e}, Y_e =1] \cdot \bm{1}_{\neg C_{\ell}(e)} \ge c_\ell z'_e  \cdot \bm{1}_{\neg C_{\ell}(e)}.
\end{equation*}
Thus, after taking expectations over $(Y_f)_{f \in E \setminus e}$, 
\begin{equation} \label{eqn:semi_avg_psi_performance}
    \Pr[\{e \in \psi_{\ell}(G', \bm{z}')\} \cap \neg C_{\ell}(e) \mid Y_e =1] \ge c_\ell \mb{E}[z'_e \cdot \bm{1}_{\neg C_{\ell}(e)}\mid Y_e =1].
\end{equation}
We now must lower bound $\mb{E}[z'_e \cdot \bm{1}_{\neg C_{\ell}(e)}\mid Y_e =1]$.
By applying Lemmas \ref{lem:good_fractional_matching} and \ref{lem:cycle_upper_bound},
we get that
\begin{align*}
\mb{E}[z'_e \cdot \bm{1}_{\neg C_{\ell}(e)}\mid Y_e =1] &\ge \frac{1}{\log(1/\eps)} \left(1 - O( \eps ( \log(1/\eps))^{2\ell+1}) - 2\exp\left(- \Omega(\log^{1/2}(1/\eps))\right) \right) \\
&= (1 - o(1)) \frac{1}{\log(1/\eps)},
\end{align*}
where the final line uses that $\eps ( \log(1/\eps))^{2\ell+1}) = o(1)$ for
$\ell(n) = \log\log(1/\eps)$.
By combining everything together,
\begin{align*}
     \Pr[e \in \psi_{\ell}(G', \bm{z}') \mid Y_e =1] \ge \Pr[\{e \in \psi_{\ell}(G', \bm{z}')\} \cap \neg C_{\ell}(e) \mid Y_e =1]  \ge (1 - o(1)) \frac{c_\ell}{\log(1/\eps)}.
\end{align*}
Finally,
recalling that $y_e = (1 - 1/\log^{1/4}(\eps^{-1})) x_e, \log(1/\eps)$, we get
that
$$
\Pr[e \in \psi_{\ell}(G', \bm{z}')] \ge (1 - o(1)) c_\ell = (1 - o(1)) c
$$
after cancellation, and using the fact that $c_\ell \rightarrow c$ as $n \rightarrow \infty$, since $\ell \rightarrow \infty$ as $n \rightarrow \infty$. Thus,
$\psi$ is $c$-selectable for graphs with vanishing edges values.

To complete the proof, we must argue that if each $\psi_{\ell}$ is strongly-online,
then $\psi$ can be implemented as an OCRS as the edge states $(X_e)_{e \in E}$ of $(G,\bm{x})$ are revealed. First observe that $\psi$ is given $(G,\bm{x})$ in advance, and thus knows $\eps$. This means that steps \ref{enum:subgraph_sample}. and \ref{enum:fractional_matching}. (i.e., \Cref{alg:fractional_matching}) of $\psi$ involving edge $e \in E$ can be implemented when
$e$ arrives. In order to execute step \ref{enum:execute_tree_local}. online, we pass on matching $e$ if $Y_e = 0$ (i.e., $e$ is not an edge of $G'$).
Otherwise, we make use of the strongly online property of $\psi_\ell$: Observe
that the values $(z'_f, Z'_f)_{f \le e, Y_f =1}$ are known when $e$ arrives, and determine
whether or not $e \in \psi_{\ell}(\bm{z}', G')$. Thus, when $Y_e =1$,
we can present $\psi_\ell$ the values $(z'_f, Z'_f)_{f \le e, Y_f =1}$ to determine whether to match $e$. The resulting
matching will be identical to the offline setting in which $\psi_\ell$ is given the entire input $(G', \bm{z}')$ in advance.
\end{proof}

\section{OCRS Negative Results} \label{sec:OCRS_neg}

\subsection{General Graphs} \label{sec:OCRS_neg_general}
We define a sequence of inputs dependent on $n \in \mb{N}$ to prove the hardness result.
Start with a vertex $u$ whose neighbors $N(u)=\{v_1, \ldots, v_n\}$ have edge values $x_{u,v_i} = 1/n$ for $i \in [n]$. 
Next, add a clique amongst the vertices of $N(u)$
where $x_{v_i,v_j} =1/n(n-1)$ for each $i,j \in [n]$, $i \neq j$. Finally, each $v_i$ has its own $n-2$
additional neighbors, denoted $W_i$, where $x_{v_i,w} = 1/n$ for each $w \in W_i$ and $i =1, \ldots ,n$.
We define $W = \cup_{i=1}^{n} W_i$ for convenience, and refer to the input by $(G(n),\bm{x}(n))$,
which we denote by $(G,\bm{x})$ when clear (see \Cref{fig:negativeresult} for an illustration).
We next specify the order in which the edges are presented to the OCRS. This is done in $3$ \textit{phases}: 
\begin{enumerate}
\item Present the edges of $N(u) \times W$ in an arbitrary order. 
\item Present the edges between the vertices of $N(u)$ in an arbitrary order.
\item Present the edges of $\partial(u)$ in an arbitrary order.
\end{enumerate}
We refer to an order specified in this way as a \textit{phase-based order}. Our negative result holds
for any such order, even if it is presented to the OCRS ahead of time. Let us define  $\beta \in (0,1)$ to be the unique solution to the below equation:
\begin{equation}\label{eqn:beta_negative}
    0 =1 - 2 \beta- \beta \exp(2 -1/\beta),
\end{equation}
where $\beta \approx 0.3895$. 
\begin{theorem} \label{thm:negative_OCRS_input}
Any OCRS which processes the edges of $(G,\bm{x})$ in a phase-based order is at most $(\beta + o(1))$-selectable on $(G,\bm{x})$.

%  Moreover, if the OCRS is concentrated, 
% then it is at most $(\alpha + o(1))$-selectable, where $\alpha$ is defined in \eqref{eqn:alpha}.
\end{theorem}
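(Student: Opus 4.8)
The plan is to extract two necessary inequalities on the selection guarantee $c$ of an arbitrary OCRS run on $(G,\bm{x})$ in a phase-based order, and then play them against each other. For $i\in[n]$, let $A_{v_i}=1$ if $v_i$ is left unmatched after Phase~1, and put $p_i:=\E[A_{v_i}]$. Two facts are immediate from $c$-selectability together with the smallness of the edge values: first, the Phase-1 edges at $v_i$ are exactly the $n-2$ edges to $W_i$, each of which is active and then placed in $\scr{M}$ with conditional probability at least $c$, and the corresponding matching events are disjoint, so $\Pr[v_i\text{ matched in Phase 1}]\ge c-o(1)$ and hence $p_i\le 1-c+o(1)$; second, the states $(X_{u,v_i})_{i\in[n]}$ of the Phase-3 edges are independent of the entire Phase-1--2 history, in particular of $(A_{v_1},\dots,A_{v_n})$. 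We may also assume $c>\alpha+o(1)$ (otherwise we are done, as $\alpha<\beta$) and $c<1/2$ (already known from prior work).

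The first inequality comes from the tiny clique edge. Let $\delta^{*}:=\min_{i\ne j}\Cov(A_{v_i},A_{v_j})$, attained by some pair $(v_i,v_j)$, and examine the edge $(v_i,v_j)$, presented in Phase~2. Since only active edges are selected and $X_{v_i,v_j}$ is independent of the earlier history, for $(v_i,v_j)$ to be selected both endpoints must be unmatched when it arrives, so
\[
c\ \le\ \Pr[A_{v_i}=1,\ A_{v_j}=1]\ =\ p_ip_j+\delta^{*}\ \le\ (1-c)^2+\delta^{*}+o(1).
\]
Equivalently $\delta^{*}\ge c-(1-c)^2-o(1)$, a positive constant minus $o(1)$ under $c>\alpha$.

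The second inequality comes from the last edge incident to $u$; relabel so this is $(u,v_n)$, and let $\mathcal{E}$ be the event that $u$ is unmatched when $(u,v_n)$ arrives. Since $u$'s only edges are the $\partial(u)$-edges, $\neg\mathcal{E}=\bigsqcup_{i<n}\{(u,v_i)\in\scr{M}\}$, so $c$-selectability gives $\Pr[\neg\mathcal{E}]=\sum_{i<n}\Pr[(u,v_i)\in\scr{M}]\ge c-o(1)$. For $(u,v_n)$ to be selected we need $\mathcal{E}$ and $v_n$ unmatched, the latter forcing $A_{v_n}=1$; using independence of $X_{u,v_n}$,
\[
c\ \le\ \Pr[\mathcal{E},\ A_{v_n}=1]\ =\ p_n-\Pr[\neg\mathcal{E}]+(1-p_n)\,\Pr[\neg\mathcal{E}\mid A_{v_n}=0].
\]
To bound the last factor, observe $\neg\mathcal{E}\subseteq\bigcup_{i<n}\{X_{u,v_i}=1,\ A_{v_i}=1\}$: conditioning on $A_{v_n}=0$ and on the set of live indices $\{i<n:A_{v_i}=1\}$, of size $K$, the independent $\Ber(1/n)$ variables $X_{u,v_i}$ make this union occur with probability $1-(1-1/n)^{K}$. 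Concavity of $k\mapsto 1-(1-1/n)^{k}$ and Jensen reduce this to $\E[K\mid A_{v_n}=0]$, and the identity $\Pr[A_{v_i}=1\mid A_{v_n}=0]=p_i-\Cov(A_{v_i},A_{v_n})/(1-p_n)\le p_i-\delta^{*}/(1-p_n)$ yields
\[
\Pr[\neg\mathcal{E}\mid A_{v_n}=0]\ \le\ 1-e^{-\left(\bar p-\delta^{*}/(1-p_n)\right)}+o(1),\qquad \bar p:=\tfrac1n\textstyle\sum_{i<n}p_i .
\]

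It remains to combine. Feeding $\Pr[\neg\mathcal{E}]\ge c-o(1)$, $\delta^{*}\ge c-(1-c)^2-o(1)$, and $p_n,\bar p\le 1-c+o(1)$ into the second displayed inequality and maximizing the right-hand side (a routine two-variable calculation, using that $c-(1-c)^2$ is positive and smaller than $c$) shows the worst case is the corner $p_n=\bar p=1-c$, $\delta^{*}=c-(1-c)^2$, at which the exponent collapses to $\bar p-\delta^{*}/(1-p_n)=\tfrac1c-2$, so that $c\le 1-c-c\,e^{2-1/c}+o(1)$, i.e.\ $1-2c-c\,e^{2-1/c}\ge-o(1)$. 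Since $c\mapsto 1-2c-c\,e^{2-1/c}$ is strictly decreasing and vanishes exactly at $\beta$, we conclude $c\le\beta+o(1)$. The main obstacle is this second inequality: a crude union bound there (replacing $1-e^{-(\cdot)}$ by the linear term) only gives $c\lesssim 0.414$, so one genuinely needs the Poisson-type factor $1-e^{-(\cdot)}$, which in turn requires pushing the conditioning/Jensen step through for an arbitrary, non-symmetric correlation structure on $(A_{v_1},\dots,A_{v_n})$ while keeping all error terms $o(1)$ after the division by $1-p_n$ (safe precisely because $p_n\le 1-c$ keeps it bounded away from $0$).
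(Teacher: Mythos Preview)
Your proposal is correct and follows essentially the same two-inequality strategy as the paper: a lower bound on $\delta^{*}=\min_{i\ne j}\Cov(A_{v_i},A_{v_j})$ from a Phase-2 clique edge (the paper's Lemma~\ref{lem:small_variance}), and an upper bound from the last Phase-3 edge via conditioning on $A_{v_n}=0$ and Jensen applied to $1-(1-1/n)^K$ (the paper's Lemma~\ref{lem:large_variance}), combined to force $1-2c-c e^{2-1/c}\ge -o(1)$. The only cosmetic difference is that the paper first observes one may assume the OCRS is \emph{exactly} $c$-selectable, which pins down $p_i=1-c(1-2/n)$ and $\Pr[\neg\mathcal{E}]=c(1-1/n)$ exactly and removes the need for your endgame corner-maximization over $(p_n,\bar p)$; your route with only the one-sided bounds $p_i\le 1-c+o(1)$ and $\Pr[\neg\mathcal{E}]\ge c-o(1)$ works just as well once you verify (as you do) that the right-hand side is monotone in each free variable and lands at the same corner.
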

\begin{figure}[ht] 
\centering
\resizebox{0.4\textwidth}{!}{%
\begin{tikzpicture}[node style/.style={circle, draw, minimum size=0.5em, inner sep=0pt},
                    third layer node style/.style={circle, draw, minimum size=1.2em, inner sep=0pt}]
% Root Node
  \node[node style, label={[font=\tiny]above:$u$}] (root) at (0,0) {};
  
  % Second Layer Nodes at staggered heights
  \node[node style, label={[font=\tiny]left:$v_{1}$}] (child1) at (-2,-1.25) {};
  \node[node style, label={[font=\tiny]left:$v_{2}$}] (child2) at (-1,-2) {}; 
  \node[node style, label={[font=\tiny]left:$v_{3}$}] (child3) at (0,-2.25) {};
  \node[node style, label={[font=\tiny]right:$v_{4}$}] (child4) at (1,-2) {}; 
  \node[node style, label={[font=\tiny]right:$v_{n}$}] (child5) at (2,-1.25) {};

 % Diagonal Dots in the Second Layer
  \node[font=\tiny, rotate=45] at (1.5,-1.625) {$\cdots$};

  % Third Layer Nodes and Labels
  \node[third layer node style, label={[font=\tiny]below:$W_{1}$}] (grandchild1) at (-2,-2.75) {};
  \node[third layer node style, label={[font=\tiny]below:$W_{2}$}] (grandchild2) at (-1,-3.5) {}; 
  \node[third layer node style, label={[font=\tiny]below:$W_{3}$}] (grandchild3) at (0,-3.75) {};
  \node[third layer node style, label={[font=\tiny]below:$W_{4}$}] (grandchild4) at (1,-3.5) {}; 
  \node[third layer node style, label={[font=\tiny]below:$W_{n}$}] (grandchild5) at (2,-2.75) {};

  \foreach \i/\x/\y in {1/-2/-2.75, 2/-1/-3.5, 3/0/-3.75, 4/1/-3.5, n/2/-2.75}
  {
    % Drawing small circles inside each third-layer node
    \foreach \a in {0,120,240}
      \draw (\x,\y) +(\a:0.3em) circle (0.1em);
  }

  % Diagonal Dots in the Third Layer
  \node[font=\tiny, rotate=45] at (1.5,-3.125) {$\cdots$};

  % Edges with smaller labels from Second Layer to Third Layer
  \foreach \x in {1,...,5}
    \draw (child\x) -- (grandchild\x) node[midway, fill=white, font=\tiny] {$1-\frac{2}{n}$};

  % Faint edges with labels connecting every node in the second layer with every other node

\begin{scope}[lightgray, opacity=0.5]
    \foreach \x in {1,...,5}
        \foreach \y in {\x,...,5}
            \draw (child\x) -- (child\y);.
\end{scope}
\path[lightgray] (child2) -- (child5) node[midway, fill=white, font=\tiny] {$\frac{1}{n(n-1)}$};

% Edges with smaller labels from Root to Second Layer
\foreach \x in {1,...,5}
    \draw (root) -- (child\x) node[midway, fill=white, font=\tiny] {$\frac{1}{n}$};

%Reprint nodes to ensure they are not impacted by the edges within the second layer
\node[node style] (child1) at (-2,-1.25) {};
\node[node style] (child2) at (-1,-2) {};
\node[node style] (child3) at (0,-2.25) {};
\node[node style] (child4) at (1,-2) {};
\node[node style] (child5) at (2,-1.25) {};

\end{tikzpicture}
}
\caption{Graph to prove negative result. Each $W_i$ consists of $n - 2$ vertices.}
\label{fig:negativeresult}
\end{figure}
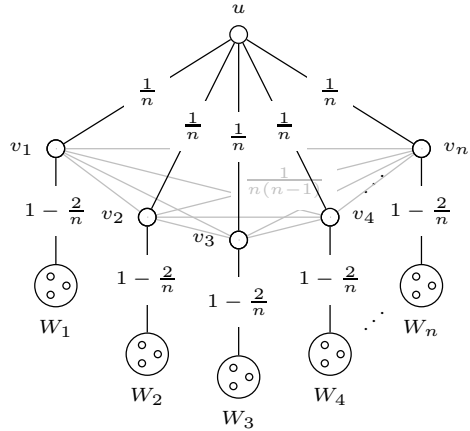
In order to prove \Cref{thm:negative_OCRS_input}, 
let us fix an OCRS for $(G,\bm{x})$ which
returns the matching $\scr{M}$ when presented the edges in a phase-based order,
and which is $c$-selectable for $c \in (0,1)$. Our goal is to upper bound $c$.
It is not hard to show that without loss of generality, 
we may assume that the OCRS is \textit{exactly}\footnote{Given a $c$-selectable OCRS for $(G,\bm{x})$ which returns the matching $\scr{M}$, we can turn it into an exactly $c$-selectable OCRS by independently \textit{dropping} any selected edge with probability $\Pr[e \in \scr{M} \mid X_e =1] -c$. This reduction can be executed as the edges arrive online, and it preserves concentration properties on $(G,\bm{x})$, provided $c$ is bounded away from $0$.} $c$-selectable. That is, $\Pr[e \in \scr{M} \mid X_e =1] = c$ for each edge $e$ of $G$. 

Define $M_{v_i}$ (respectively, $A_{v_i} = 1 - M_{v_i}$) to be an indicator random variable for the event 
vertex $v_i$ is matched (respectively, not matched) after the first phase. 
% Similarly, define $A_{v_i} = 1 - M_{v_i}$ to be
% the indicator random variable for the event $v_i$ is unmatched (i.e., available) after the first phase.
Observe first that for each $i \in [n]$,
\begin{equation} \label{eqn:availability}
    \mb{E}[M_{v_i}] = \Pr[ \cup_{w \in W_i} (v_i,w) \in \scr{M}] = \sum_{w \in W_i} c x_{v, w} = c \left(1 - \frac{2}{n}\right),
\end{equation}     
where the last equality uses that $\sum_{w \in W_i} x_{v_i, w} =1 - 2/n$.
At this
point, let us consider the minimum covariance amongst the $(A_{v_i})_{i=1}^{n}$ random variables.
Specifically, define $\theta := \min_{i \neq j} \Cov[A_{v_i}, A_{v_j}]$, where $\Cov[A_{v_i},A_{v_j}] := \mb{E}[A_{v_i} A_{v_j}] - \mb{E}[A_{v_i}] \mb{E}[A_{v_j}]$.
We first lower bound $\theta$ as a function of $c$ for $n \rightarrow \infty$. This gives a tighter upper bound on $c$ the \textit{smaller} $\theta$ is.
\begin{lemma}[Small covariance regime] \label{lem:small_variance}
$
    \theta \ge  (1 +o(1))(c - (1-c)^2).
$
\end{lemma}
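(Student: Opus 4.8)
The plan is to isolate a single ``tiny'' phase-2 edge $f=(v_i,v_j)$ and read off what $c$-selectability of the OCRS on $f$ says about the joint law of $A_{v_i}$ and $A_{v_j}$; taking the minimum over pairs then yields the bound on $\theta$.

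The key structural observation is the event containment
\[
\{f \in \scr{M}\} \ \subseteq\ \{X_f = 1\} \cap \{A_{v_i}=1\} \cap \{A_{v_j}=1\}.
\]
Indeed, $f$ is presented in phase 2, which begins only after every phase-1 edge has been processed; for the OCRS to add $f$ to $\scr{M}$ the edge must be active and both $v_i$ and $v_j$ must be unmatched at the moment $f$ arrives; since the matching only grows over the execution, $v_i$ (resp.\ $v_j$) being unmatched at that moment forces $v_i$ (resp.\ $v_j$) to have been unmatched at the end of phase 1, i.e.\ $A_{v_i}=1$ (resp.\ $A_{v_j}=1$). Moreover, $A_{v_i}$ and $A_{v_j}$ are measurable with respect to the phase-1 edge states together with the OCRS's internal randomness, all of which are independent of $X_f$ (the edge $f$ is not even revealed to the OCRS until phase 2). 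Hence, using $c$-selectability applied to $f$,
\[
c \ =\ \Pr[f \in \scr{M} \mid X_f = 1]\ \le\ \Pr[A_{v_i}=1,\,A_{v_j}=1 \mid X_f = 1]\ =\ \mb{E}[A_{v_i} A_{v_j}].
\]

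Next I would substitute the marginal computed in \eqref{eqn:availability}, namely $\mb{E}[A_{v_i}] = 1 - \mb{E}[M_{v_i}] = 1 - c(1 - 2/n)$ for every $i$, to obtain, uniformly over $i \neq j$,
\[
\Cov[A_{v_i}, A_{v_j}]\ =\ \mb{E}[A_{v_i} A_{v_j}] - \mb{E}[A_{v_i}]\,\mb{E}[A_{v_j}]\ \ge\ c - \bigl(1 - c(1 - 2/n)\bigr)^2\ =\ c - (1-c)^2 - O(1/n),
\]
where the last step expands $\bigl(1-c(1-2/n)\bigr)^2 = (1-c)^2 + O(1/n)$ using $c \in (0,1)$. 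Taking the minimum over all $i \neq j$ gives $\theta \ge c - (1-c)^2 - o(1) = (1+o(1))\bigl(c - (1-c)^2\bigr)$, which is the claim.

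This argument is short, and I do not expect a genuine obstacle: the only points requiring care are (a) justifying that $A_{v_i}$ and $A_{v_j}$ are independent of the state $X_f$ of the phase-2 edge $f$, which follows purely from the arrival order placing $f$ strictly after phase 1, and (b) the routine bookkeeping of the $o(1)$ error, absorbing the $O(1/n)$ gap between $\bigl(1-c(1-2/n)\bigr)^2$ and $(1-c)^2$. (Note that exact $c$-selectability is not even needed in the displayed chain of inequalities, only in the invocation of \eqref{eqn:availability}.) The substantive work of the negative result instead lies in the complementary large-covariance case, Lemma~\ref{lem:large_variance}, and in combining the two cases.
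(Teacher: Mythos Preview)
Your proposal is correct and follows essentially the same argument as the paper's proof: both exploit that a phase-2 clique edge $(v_i,v_j)$ can only be matched if $A_{v_i}=A_{v_j}=1$, combine $c$-selectability with the independence of $X_{v_i,v_j}$ from phase-1 outcomes, and then plug in the marginal $\mb{E}[A_{v_i}]=1-c(1-2/n)$ from \eqref{eqn:availability}. The only cosmetic difference is that the paper fixes the minimizing pair first, whereas you bound every pair and then take the minimum.
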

\begin{proof}[Proof of \Cref{lem:small_variance}]
Let suppose that $v_i \neq v_j$ satisfy $\theta = \Cov[A_{v_i},A_{v_j}]$.
Consider when $(v_i,v_j)$  is presented to the OCRS in the second phase,
and $(v_i,v_j)$ is active. Observe that in order for $(v_i,v_j)$ to be matched,
both $v_i$ and $v_j$ must be \textit{available} (i.e., $A_{v_i} = A_{v_j} =1$). Thus, since
$A_{v_i}$ and $A_{v_j}$ are independent of $X_{v_i,v_j}$,
\begin{align*}
    c = \Pr[(v_i,v_j) \in \scr{M} \mid X_{v_i,v_j} =1] \le \Pr[A_{v_i} =1, A_{v_j} =1] &= \mb{E}[A_{v_i}] \mb{E}[A_{v_j}] + \Cov[A_{v_i},A_{v_j}] \\
    & \le \left( 1 - c \left(1 - \frac{2}{n}\right) \right)^2 + \theta,
\end{align*}
where the final inequality uses \eqref{eqn:availability} together with $A_{v_i} = 1 - M_{v_i}$,
as well as the definition of $\theta$. As such, after rearranging for $\theta$ and taking $n \rightarrow \infty$,
\begin{equation*} 
    \theta \ge c - \left( 1 - c \left(1 - \frac{2}{n}\right) \right)^2 = (1 +o(1))(c - (1-c)^2).
\end{equation*}
\end{proof}
We next upper bound a function of $c$ and $\theta$ by $1$. This leads to a tighter
upper bound on $c$ the \textit{larger} $\theta$ is.
\begin{lemma}[Large covariance regime] \label{lem:large_variance}
$
    2c + c \exp(c +\theta/c -1) \le 1 + o(1).
$   
\end{lemma}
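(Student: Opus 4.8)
\textbf{Proof proposal for Lemma~\ref{lem:large_variance}.} The plan is to analyze the last edge $(u,v_n)$ presented to the OCRS (in the third phase, after the clique edges). Write $\mathcal{E}$ for the event that $u$ is unmatched when $(u,v_n)$ arrives. Since $X_{u,v_n}$ is independent of everything that happens before $(u,v_n)$ is processed, exact $c$-selectability gives
\begin{equation*}
c = \Pr[(u,v_n) \in \scr{M} \mid X_{u,v_n}=1] = \Pr[\mathcal{E},\, A_{v_n}=1] \le \Pr[\mathcal{E} \mid A_{v_n}=1].
\end{equation*}
So it suffices to show $\Pr[\mathcal{E} \mid A_{v_n}=1]$ cannot be much larger than $c$ — concretely, that $\Pr[\neg\mathcal{E} \mid A_{v_n}=1] \ge 1 - c - o(1)$ is forced, but more precisely we want to extract the quantitative bound $2c + c\exp(c + \theta/c - 1) \le 1 + o(1)$. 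The key observation is that $\neg\mathcal{E}$ — i.e.\ $u$ getting matched before $(u,v_n)$ — can \emph{only} happen via one of the edges $(u,v_i)$, $i < n$, being active \emph{and} selected, which in particular requires $A_{v_i} = 1$ (the corresponding $v_i$ survived the first phase, since all clique edges carry negligible mass $1/n(n-1)$ and contribute $o(1)$ in aggregate). Thus $\neg\mathcal{E} \subseteq \{\exists i < n: X_{u,v_i}=1,\ A_{v_i}=1\}$, and by a union bound, conditionally on $A_{v_n}=1$ and on the realization of $(A_{v_i})_{i<n}$,
\begin{equation*}
\Pr[\neg\mathcal{E} \mid (A_{v_i})_{i\le n},\, A_{v_n}=1] \le \sum_{i<n} x_{u,v_i}\, A_{v_i} = \frac{1}{n}\sum_{i<n} A_{v_i}.
\end{equation*}
Here I use that each $X_{u,v_i}$ has mean $1/n$ and is independent of the first-phase randomness. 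Taking expectations over $(A_{v_i})_{i<n}$ conditioned on $A_{v_n}=1$, and writing $S := \frac{1}{n}\sum_{i<n}\E[A_{v_i} \mid A_{v_n}=1]$, we get $\Pr[\mathcal{E}\mid A_{v_n}=1] \ge 1 - S$, hence $c \le \Pr[\mathcal{E}\mid A_{v_n}=1]$ combined with $c \le \Pr[\mathcal{E},A_{v_n}=1] = \Pr[A_{v_n}=1]\Pr[\mathcal{E}\mid A_{v_n}=1]$.

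The crux is to bound $S$. By definition of $\theta$, for every $i<n$ we have $\E[A_{v_i} A_{v_n}] \ge \E[A_{v_i}]\E[A_{v_n}] + \theta$, so $\E[A_{v_i}\mid A_{v_n}=1] \ge \E[A_{v_i}] + \theta/\E[A_{v_n}] = (1+o(1))c + \theta/((1+o(1))c)$ using \eqref{eqn:availability}. That's a lower bound on the conditional means, which is the \emph{wrong} direction — so instead I should condition on $A_{v_n}=1$ and aim for an \emph{upper} bound on $\Pr[\mathcal{E}\mid A_{v_n}=1]$ routed differently. The right move: bound $\Pr[\mathcal{E}\mid A_{v_n}=1]$ from above by using that $u$ must \emph{fail to be matched} by \emph{all} the active edges $(u,v_i)$. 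Conditionally on the first-phase outcome, the edges $(u,v_i)$ (and the OCRS's internal coins) are processed in the third phase; for $\mathcal{E}$ to hold, every $(u,v_i)$ with $A_{v_i}=1$ that is active must be \emph{rejected}. The OCRS rejects an active feasible edge only with its own randomness, so the best case for the adversary's upper bound is that $u$ is matched as soon as \emph{any} $(u,v_i)$ with $A_{v_i}=1$ is active; thus $\Pr[\mathcal{E}\mid (A_{v_i})_{i\le n}] \le \prod_{i<n}(1 - \tfrac{1}{n}A_{v_i}) \le \exp(-\tfrac1n\sum_{i<n}A_{v_i})$. Taking expectations conditional on $A_{v_n}=1$ and applying Jensen's inequality (convexity of $\exp$) in the direction that needs the conditional mean lower bound $\E[\tfrac1n\sum_{i<n}A_{v_i}\mid A_{v_n}=1] \ge (1+o(1))(c + \theta/c)$ derived above — but Jensen with $\exp$ convex gives $\E[\exp(-T)] \ge \exp(-\E T)$, again the wrong way. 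So I must avoid Jensen and instead note $\exp(-\tfrac1n\sum A_{v_i}) \le 1 - \tfrac1n\sum A_{v_i} + \tfrac12(\tfrac1n\sum A_{v_i})^2$; the quadratic term is $O(1)$ but its expectation is controlled. Actually the cleanest route, and the one I would commit to: use $\Pr[\mathcal{E}\mid A_{v_n}=1] \le \E[\exp(-\tfrac1n\sum_{i<n}A_{v_i})\mid A_{v_n}=1]$ and the elementary inequality $e^{-t}\le 1-t+t^2/2$ together with $\sum A_{v_i}/n \le 1$ to write this as at most $1 - \mu_1 + \mu_2/2$ where $\mu_1 = \E[\tfrac1n\sum A_{v_i}\mid A_{v_n}=1]$ and $\mu_2 = \E[(\tfrac1n\sum A_{v_i})^2\mid A_{v_n}=1]$; then use $\mu_1 \ge c + \theta/c - o(1)$ and a second-moment/concentration estimate to show $\mu_2 = \mu_1^2 + o(1)$ (the $A_{v_i}$ are exchangeable-ish, so $\mu_2$ is essentially the square of the mean plus pairwise-covariance terms averaging to $o(1)$ when divided by $n^2$). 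Combining $\Pr[\mathcal E\mid A_{v_n}=1] \le \exp(-\mu_1)(1+o(1))$ then yields $c \le \Pr[A_{v_n}=1]\exp(-(c+\theta/c))(1+o(1)) \le (1 - c + o(1))\exp(-(c+\theta/c))$, which rearranges to $2c + c\exp(c+\theta/c-1) \le 1 + o(1)$ after the substitution $\exp(-(c+\theta/c)) = \exp(1-1)\exp(-(c+\theta/c))$ and multiplying through.

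I expect the main obstacle to be making the step $\mu_2 = \mu_1^2 + o(1)$ rigorous: one needs that $\E[(\sum_{i<n}A_{v_i})^2 \mid A_{v_n}=1] = (\sum_{i<n}\E[A_{v_i}\mid A_{v_n}=1])^2 + o(n^2)$, i.e.\ that the average pairwise covariance of the $A_{v_i}$ conditioned on $A_{v_n}=1$ is $o(1)$. This should follow because $\Var(\sum_{i} A_{v_i}) \le n^2$ trivially, but to get $o(n^2)$ one has to exploit that conditioning on a single $A_{v_n}$ changes the joint law of the others by only $O(1/n)$ in the relevant averaged sense — or, alternatively, avoid the issue entirely by noting that the inequality $e^{-t} \le 1 - t + t^2/2$ applied pointwise and then the bound $t^2 \le t$ for $t\in[0,1]$ gives $e^{-t}\le 1 - t/2$, which is too lossy; so a genuine concentration argument for $\tfrac1n\sum_{i<n}A_{v_i}$ around its conditional mean — leveraging that the pairwise covariances, averaged, are small precisely because $\theta$ is their \emph{minimum} and the sum of all covariances is bounded by $\Var(\sum A_{v_i}) = O(n)$ under exact $c$-selectability — is what I'd need to nail down. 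Once that concentration is in hand, the rest is the elementary calculus sketched above.
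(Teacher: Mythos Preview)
Your proposal has a fatal gap at the step where you claim
\[
\Pr[\mathcal{E}\mid (A_{v_i})_{i\le n}]\ \le\ \prod_{i<n}\bigl(1-\tfrac{1}{n}A_{v_i}\bigr).
\]
This inequality is simply false for a general OCRS: the OCRS is allowed to \emph{reject} an active feasible edge, so even if every $(u,v_i)$ with $A_{v_i}=1$ is active, the OCRS may decline all of them and keep $u$ unmatched. Nothing in the exact-$c$-selectability hypothesis rules this out conditionally on a particular realization of $(A_{v_i})$. Your subsequent attempts to salvage the step via Jensen or concentration are all downstream of this invalid pointwise bound; the concentration of $\tfrac{1}{n}\sum_{i<n}A_{v_i}$ conditional on $A_{v_n}=1$ is not available in this setting (indeed the whole purpose of \Cref{thm:negative_OCRS_trees_concentrated} is to impose concentration as an extra hypothesis), and in any case would not repair the false inequality. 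You also have an arithmetic slip: $\E[A_{v_i}]\approx 1-c$, not $c$, so your formula for $\E[A_{v_i}\mid A_{v_n}=1]$ is off.

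The paper avoids all of this by conditioning on the \emph{complementary} event $A_{v_n}=0$. Start from the inclusion--exclusion identity $\Pr[A_{v_n}=1,\,\mathcal{E}]=1-\Pr[M_{v_n}=1]-\Pr[M_u(e_n)=1]+\Pr[M_{v_n}=1]\Pr[M_u(e_n)=1\mid M_{v_n}=1]$, which yields $2c+c\bigl(1-\Pr[\neg\mathcal{E}\mid A_{v_n}=0]\bigr)\le 1+o(1)$. Now one only needs an \emph{upper} bound on $\Pr[\neg\mathcal{E}\mid A_{v_n}=0]$, and this \emph{is} legitimate: $\neg\mathcal{E}$ genuinely requires that some $(u,v_j)$ with $A_{v_j}=1$ be active, giving the pointwise bound $\Pr[\neg\mathcal{E}\mid (A_{v_i})]\le 1-\exp(-\tfrac{1}{n}\sum_j A_{v_j})$. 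Since $z\mapsto 1-e^{-z}$ is \emph{concave}, Jensen now goes in the right direction, and conditioning on $A_{v_n}=0$ together with $\Cov[A_{v_i},A_{v_n}]\ge\theta$ gives $\E[A_{v_i}\mid A_{v_n}=0]\le (1-c)-\theta/c+o(1)$, which produces the desired $2c+c\exp(c+\theta/c-1)\le 1+o(1)$ without any concentration argument.
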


\begin{proof}[Proof of \Cref{lem:large_variance}]
In order to prove the lemma, we analyze the third and final phase.
Let us assume that the edges of $\partial(u)$ are presented in order $(u,v_1), \ldots , (u,v_n)$ to the OCRS,
and consider the arrival of the last edge $e_n = (u,v_n)$. Define $M_{u}(e_n)$
to be the indicator random variable for the event $u$ is matched when $e_n$ arrives. Observe then
that conditional on $X_{e_n}=1$, $e_n$ is matched
only if $M_{v_n} =0$ \textit{and} $M_{u}(e_n) =0$. Thus, since $X_{e_n}$ is independent of
the latter two events,
\begin{align*}
    c =\Pr[e_n \in \scr{M} \mid X_{e_n}=1] &\le \Pr[M_{v_n} =0, M_{u}(e_n)=0] \\
                                                &=1 - \Pr[M_{v_n}=1] - \Pr[M_{u}(e_n)=1] + \Pr[M_{v_n} =1, M_{u}(e_n)=1] \\
                                                &=1 - \Pr[M_{v_n}=1](1 -\Pr[M_{u}(e_n)=1 \mid M_{v_n}=1]) - \Pr[M_{u}(e_n)=1]                                         
\end{align*}
Now, $\Pr[M_{u}(e_n) =1] = \sum_{i=1}^{n-1} \Pr[(u,e_i) \in \scr{M}] = c \left(1 - \frac{1}{n}\right)$, so
after simplifying and taking $n \rightarrow \infty$, we get that
\begin{equation} \label{eqn:simple_upper_bound}
2c + \Pr[M_{v_n}=1](1 -\Pr[M_{u}(e_n)=1 \mid M_{v_n}=1]) \le 1 +o(1).
\end{equation}
We shall now prove that
\begin{equation} \label{eqn:joint_matching_probability}
    \Pr[M_{u}(e_n)=1 \mid M_{v_n}=1] \le (1 +o(1))(1 - \exp(c + \theta/\Pr[M_{v_n}=1] -1)).
\end{equation}
This will complete the proof. To see this, first observe that by applying \eqref{eqn:joint_matching_probability} to \eqref{eqn:simple_upper_bound},
$$
2c + \Pr[M_{v_n}=1](\exp(c + \theta/\Pr[M_{v_n}=1] -1)) \le 1 + o(1).
$$
Thus, since $\Pr[M_{v_n}=1] = c(1-2/n)$ by \eqref{eqn:availability},
$2c + c(\exp(c + \theta/c -1)) \le 1 + o(1)$ as required.

In order to prove \eqref{eqn:joint_matching_probability}, recall
that $A_{v_n} = 1 - M_{v_n}$. We shall instead condition on $A_{v_n}=0$,
and upper bound $\Pr[M_{u}(e_n)=1 \mid A_{v_n}=0]$. Consider the vertices amongst $v_1, \ldots, v_{n-1}$
which were available after the first phase. Observe that if we condition on $A_{v_n} =0$, 
then in order for $M_{u}(e_n) =1$, there must be an active edge from $u$ to one of these vertices.
Thus,  
\begin{equation} \label{eqn:matching_upper_bound}
\Pr[M_{u}(e_n) =1 \mid  A_{v_n}=0] \le \Pr[ \cup_{j =1}^{n-1} \{ X_{u,v_j} =1\} \cap \{A_{v_j} =1\} \mid A_{v_n}=0].
\end{equation}
On the other hand, the edge states $(X_{u,v_j})_{j=1}^{n}$ are independent from $(A_{v_i})_{i=1}^{n}$, and so
\begin{align*}
    \Pr[ \cup_{j =1}^{n-1} \{ X_{u,v_j} =1\} \cap \{A_{v_j} =1\} \mid (A_{v_i})_{i=1}^{n-1}, A_{v_n} =0] &= 1 - \prod_{j =1}^{n-1} \left(1 - \frac{A_{v_j}}{n}\right) \\
    &=(1 + o(1)) \left(1 - \exp\left(-\frac{\sum_{j=1}^{n-1}A_{v_j}}{n}\right)\right).
\end{align*}
Now, the function $z \rightarrow 1 - \exp(-z)$ is concave for $z \in [0,1]$. Thus, if we take expectations
over the $(A_{v_i})_{i=1}^{n-1}$ random variables (while conditioning on $A_{v_n}=0$) then Jensen's inequality implies
that
\begin{equation*}
    \Pr[ \cup_{j =1}^{n-1} \{ X_{u,v_j} =1\} \cap \{A_{v_j} =1\} \mid A_{v_n} =0] \le (1 +o(1)) \left(1 - \exp\left(-\frac{\sum_{j=1}^{n-1}\mb{E}[A_{v_j} \mid A_{v_n} =0]}{n} \right) \right),
\end{equation*}
and so combined with \eqref{eqn:matching_upper_bound}, we get that
\begin{equation} \label{eqn:expectation_matching_upper_bound}
    \Pr[M_{u}(e_n) =1 \mid  A_{v_n}=0] \le (1 +o(1))\left(1 - \exp\left(-\frac{\sum_{i=1}^{n-1}\mb{E}[A_{v_i} \mid A_{v_n} =0]}{n} \right) \right).
\end{equation}
As such, we focus on upper bounding $\sum_{i=1}^{n-1}\mb{E}[A_{v_i} \mid A_{v_n} =0]$. We argue
that for each $1 \le i \le n-1$, the events $A_{v_n} = 0$ and $A_{v_i}=1$ are \textit{negatively} correlated,
which we quantify in terms of $\theta$. Specifically, by the definition of $\theta$, $\mb{E}[A_{v_i} \mid A_{v_n} = 1]\Pr[A_{v_n} =1] = \mb{E}[A_{v_i}] \mb{E}[A_{v_n}] + \Cov[A_{v_i},A_{v_n}] \ge \mb{E}[A_{v_i}] \mb{E}[A_{v_n}] + \theta$. Thus,
\begin{align*}
\mb{E}[A_{v_i} \mid A_{v_n} = 0] \Pr[A_{v_n} =0] &= \mb{E}[A_{v_i}] -\mb{E}[A_{v_i} \mid A_{v_n} = 1] \Pr[A_{v_n} =1] \\
                                        &\le \mb{E}[A_{v_i}] - \mb{E}[A_{v_i}] \mb{E}[A_{v_n}] - \theta    \\
                                        &= \mb{E}[A_{v_i}]\Pr[A_{v_n} = 0] - \theta.
\end{align*}
Since $\Pr[A_{v_n} =0] > 0$ for $n > 2$, and $\mb{E}[A_{v_i}] = 1 - c \left(1 - \frac{2}{n}\right)$, it follows that 
\begin{equation} \label{eqn:average_conditional_availability}
    \sum_{i=1}^{n-1}\mb{E}[A_{v_i} \mid A_{v_n} =0] \le (1+o(1)) n\left( (1-c)- \frac{\theta}{\Pr[A_{v_n} =0]}\right).
\end{equation}
By applying \eqref{eqn:average_conditional_availability} to \eqref{eqn:expectation_matching_upper_bound},
$ \Pr[M_{u}(e_n) =1 \mid  A_{v_n}=0] \le (1 +o(1))(1 - \exp(-1+c +\theta/\Pr[A_{v_n} =0] )),$
and so \eqref{eqn:joint_matching_probability} is proven, thereby completing the proof.
\end{proof}
We are now ready to prove \Cref{thm:negative_OCRS_input}. We combine the inequalities from Lemmas \ref{lem:small_variance} and \ref{lem:large_variance} to determine the optimal choice of $\theta$ for the OCRS. 
\begin{proof}[Proof of \Cref{thm:negative_OCRS_input}]
Recall that $\beta$ satisfies \eqref{eqn:beta_negative}.
Now, for any OCRS which is exactly $c$-selectable on $(G, \bm{x})$,
Lemmas \ref{lem:small_variance} and \ref{lem:large_variance} together imply that
$$
     2c + c \exp(c +(c - (1-c)^2)/c -1) \le 1 + o(1).
$$
Thus, $1 - 2c - c \exp(2 - 1/c) \ge o(1)$ after rearranging and simplifying. Now,
the function $z \rightarrow 1 - 2z - z \exp(2 - 1/z)$ is continuous, and non-negative 
for $z \le \beta$, so we may conclude that $c \le \beta + o(1)$, as required.
\end{proof}

\subsection{Trees} \label{sec:OCRS_neg_trees}

In this section, we first establish a negative result for \textit{concentrated} OCRS's on trees. We refer to an OCRS as \textit{concentrated} on $(G,\bm{x})$, provided the matching $\scr{M}$ on $(G,\bm{x})$ it returns satisfies $||\scr{M}| - \mb{E}[ |\scr{M}|]| = o(\mb{E}[ |\scr{M}|)$ w.h.p. Throughout this section, let us fix a particular concentrated OCRS for $(G,\bm{x})$ which
returns a matching $\scr{M}$, and is $c$-selectable for $c \in (0,1)$, i.e., $\Pr[e \in \scr{M} \mid X_e =1] = c$ for each edge $e$ of $G$.

As in the hardness result for general graphs, we begin by defining a sequence of inputs dependent on $n \in \mb{N}$ to prove the hardness result. Start with a vertex $u$ whose neighbors $N(u)=\{v_1, \ldots, v_n\}$ have edge values $x_{u,v_i} = 1/n$ for $i \in [n]$. Each $v_i$ has its own $n-1$ additional neighbors, denoted $W_i$, where $x_{v_i,w} = 1/n$ for each $w \in W_i$ and $i =1, \ldots ,n$. We refer to this input by $(G(n),\bm{x}(n))$ (see \Cref{fig:negativeresulttrees} for an illustration).
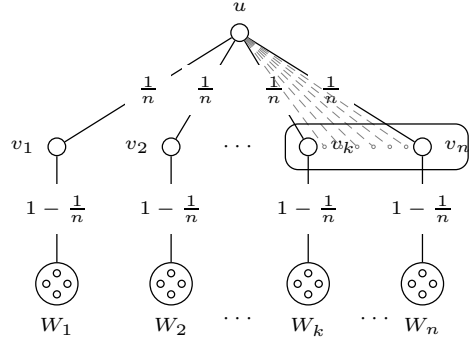
\begin{figure}[ht] 
\centering
\resizebox{0.4\textwidth}{!}{%
\begin{tikzpicture}[node style/.style={circle, draw, minimum size=0.5em, inner sep=0pt},
                    third layer node style/.style={circle, draw, minimum size=1.2em, inner sep=0pt},
                    faint edge/.style={very thin, gray, densely dashed}, % Style for faint edges
                    faint node/.style={circle, draw, gray, very thin, minimum size=0.1em, inner sep=0pt}]
% Root Node
  \node[node style, label={[font=\tiny]above:$u$}] (root) at (0,0) {};
  
  % Second Layer Nodes 
  \node[node style, label={[font=\tiny]left:$v_{1}$}] (child1) at (-2,-1.25) {};
  \node[node style, label={[font=\tiny]left:$v_{2}$}] (child2) at (-0.75,-1.25) {};
  \node[node style, label={[font=\tiny]right:$v_{k}$}] (child4) at (0.75,-1.25) {};
  \node[node style, label={[font=\tiny]right:$v_{n}$}] (child5) at (2,-1.25) {};

  %Dots in the Second Layer
\node[font=\tiny] at (0,-1.25) {$\cdots$};

  % Third Layer Nodes and Labels
  \node[third layer node style, label={[font=\tiny]below:$W_{1}$}] (grandchild1) at (-2,-2.75) {};
  \node[third layer node style, label={[font=\tiny]below:$W_{2}$}] (grandchild2) at (-0.75,-2.75) {};
  \node[third layer node style, label={[font=\tiny]below:$W_{k}$}] (grandchild4) at (0.75,-2.75) {};
  \node[third layer node style, label={[font=\tiny]below:$W_{n}$}] (grandchild5) at (2,-2.75) {};

  %Dots in the Third Layer
    \node[font=\tiny] at (0,-3.125) {$\cdots$};
    \node[font=\tiny] at (1.5,-3.125) {$\cdots$};

  % Drawing edges
  \foreach \x/\y in {1/grandchild1, 2/grandchild2, 4/grandchild4, 5/grandchild5}{
    \draw (child\x) -- (\y) node[midway, fill=white, font=\tiny] {$1-\frac{1}{n}$};
  }
  \foreach \x in {1,...,2,4,5}{
    \draw (root) -- (child\x) node[midway, fill=white, font=\tiny] {$\frac{1}{n}$};
  }

   \foreach \i/\x/\y in {1/-2/-2.75, 2/-0.75/-2.75, 4/0.75/-2.75, n/2/-2.75}
  {
    % Drawing small circles inside each third-layer node
    \foreach \a in {0,90,180, 270}
      \draw (\x,\y) +(\a:0.3em) circle (0.1em);
  }

\foreach \x in {1,...,6}{
  % Linear interpolation to position nodes between v_{n-m+1} and v_n
  \pgfmathsetmacro{\xpos}{0.75 + (2 - 0.75)*\x/7} 
  \node[faint node] (faintnode\x) at (\xpos, -1.25) {}; % Nodes are aligned horizontally between v_{n-m+1} and v_n
  \draw[faint edge] (root) -- (faintnode\x);
}

 \draw[rounded corners] (0.5,-1) rectangle (2.5,-1.5);
\end{tikzpicture}
}
\caption{Tree to prove negative result. Each $W_i$ consists of $n - 1$ vertices.}
\label{fig:negativeresulttrees}
\end{figure}
The edges are presented to the OCRS in $2$ phases. In the first phase, the edges of $N(u) \times \cup_{i=1}^{n} W_i$ are presented in an arbitrary order. At this point, we define $M_{v_i}$ (respectively, $A_{v_i} = 1 - M_{v_i}$) as an indicator random variable for the event that vertex $v_i$ is matched (respectively, not matched) after the first phase. Before we can discuss the order in which the edges of the second phase are presented to the OCRS, we must briefly digress to discuss an important property of a concentrated OCRS.

Since $\sum_{i=1}^{n} M_{v_i}$ and $|\scr{M}|$ differ by at most 1 and we've assumed that $|\scr{M}|$ is concentrated about $\mb{E}[|\scr{M}|]$, we know that $\sum_{i=1}^{n} M_{v_i}$ is also concentrated. Thus, by applying elementary bounds, we get that $\Var[\sum_{i=1}^{n} M_{v_i}] = o(n^2)$, and so we must have that $\Var[\sum_{i=1}^{n} A_{v_i}] = o(n^2)$.  
 For concreteness, let us say that $\Var[\sum_{i=1}^{n} A_{v_i}] \leq \theta_nn^2$, with $\theta_n \to 0$, and $\theta_n \geq \frac{1}{\sqrt{n}}$. Note that since we are only demanding an upper bound on $\Var[\sum_{i=1}^{n} A_{v_i}]$, we can assume w.l.o.g. $\theta_n \geq \frac{1}{\sqrt{n}}$. This will just be a convenient upper bound to use on the `true' $\theta_n$ in the following calculations.

We have the following probabilistic lemma: 
\begin{lemma}
    \label{existence_of_end}
Suppose $n \to \infty$. Let $A_1, A_2, \ldots, A_n$ be random variables, with $\Var(A_i) < \delta$. Define $A = \sum_{i=1}^n A_i$, and $A_S = \sum_{i \in S}A_i$. Suppose $\text{Var} (A) \leq  \theta_nn^2$ and $\theta_n \geq \frac{1}{\sqrt{n}}$. Then, we can find $S \subseteq \{1, 2, \ldots, n\}$ with $|S| = m = \theta_n^{1/4}n$ such that $\text{Var} (A_S) \leq 3\theta_nm^2$ for sufficiently large $n$.  
\end{lemma}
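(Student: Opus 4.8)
The plan is to choose $S$ at random and argue via the first moment method. Let $m := \lfloor \theta_n^{1/4} n \rfloor$ and let $S$ be a uniformly random subset of $\{1,\dots,n\}$ of size $m$; I will bound $\E[\Var(A_S)]$ and conclude that \emph{some} realization of $S$ has $\Var(A_S) \le 3\theta_n m^2$. The starting point is the decomposition
\[
\Var(A_S) = \sum_{i \in S}\Var(A_i) + \sum_{\substack{i,j \in S \\ i \neq j}} \Cov(A_i,A_j),
\]
whose expectation I take term by term.

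For the diagonal sum, since $\Pr[i \in S] = m/n$ and $\Var(A_i) < \delta$ for every $i$, linearity of expectation gives $\E[\sum_{i \in S}\Var(A_i)] = \tfrac{m}{n}\sum_{i=1}^n \Var(A_i) \le m\delta$. The key observation is that this is negligible compared with $\theta_n m^2$: we have $\theta_n m^2 = \Theta(\theta_n^{3/2} n^2)$ while $m\delta = \Theta(\theta_n^{1/4} n)$, and the hypothesis $\theta_n \ge n^{-1/2}$ yields $\theta_n^{3/2} n^2 / (\theta_n^{1/4} n) = \theta_n^{5/4} n \ge n^{3/8} \to \infty$, so $m\delta = o(\theta_n m^2)$.

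For the off-diagonal sum, an ordered pair of distinct indices $(i,j)$ satisfies $i,j \in S$ with probability $\tfrac{m(m-1)}{n(n-1)} \le (1+o(1))\tfrac{m^2}{n^2}$, so $\E[\sum_{i \ne j \in S}\Cov(A_i,A_j)] \le (1+o(1))\tfrac{m^2}{n^2}\sum_{i \ne j}\Cov(A_i,A_j)$. Here I use $\sum_{i \ne j}\Cov(A_i,A_j) = \Var(A) - \sum_i \Var(A_i) \le \Var(A) \le \theta_n n^2$, where the inequality just drops the nonnegative diagonal correction. Hence this contribution is at most $(1+o(1))\theta_n m^2$, and combining the two bounds gives $\E[\Var(A_S)] \le (1+o(1))\theta_n m^2 \le 3\theta_n m^2$ for all sufficiently large $n$; since $\Var(A_S) \ge 0$ always, there must exist a particular $S$ of size $m$ with $\Var(A_S) \le 3\theta_n m^2$, as required.

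This is essentially a one-shot random-subset / second-moment computation, so there is no serious obstacle; the two points that need care are (i) remembering that covariances may be negative, which is harmless because we only need the \emph{upper} bound $\sum_{i\ne j}\Cov(A_i,A_j) \le \Var(A)$, and (ii) verifying that the diagonal term $m\delta$ is genuinely dominated by $\theta_n m^2$, which is exactly where the seemingly artificial lower bound $\theta_n \ge n^{-1/2}$ (imposed w.l.o.g.\ earlier in the section) is used; rounding $\theta_n^{1/4} n$ to an integer $m$ affects none of these asymptotics since $\theta_n^{1/4} n \to \infty$.
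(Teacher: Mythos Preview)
Your proof is correct and follows essentially the same averaging-over-subsets argument as the paper: both compute the average of $\Var(A_S)$ over all size-$m$ subsets, split into diagonal and off-diagonal contributions, bound the off-diagonal part by $\frac{m(m-1)}{n(n-1)}\sum_{i\ne j}\Cov(A_i,A_j)\le 2\theta_n m^2$, and use $\theta_n\ge n^{-1/2}$ to absorb the diagonal term $\delta m$ into $\theta_n m^2$. The only cosmetic difference is that you phrase the averaging probabilistically and work asymptotically, while the paper sums over $\binom{n}{m}$ subsets explicitly; one minor slip is that the intermediate inequality $\frac{m(m-1)}{n(n-1)}\sum_{i\ne j}\Cov\le(1+o(1))\frac{m^2}{n^2}\sum_{i\ne j}\Cov$ needs the sum to be nonnegative, but the conclusion is unaffected since a negative off-diagonal sum only helps.
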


\begin{proof}[Proof of \Cref{existence_of_end}]
Since
\begin{align*}
    \text{Var} (A) = \sum_{i \neq j}\text{Cov}(A_i, A_j) + \sum_{i = 1}^n \Var(A_i),
\end{align*}
it follows that $\sum_{i \neq j}\text{Cov}(A_i, A_j) \leq \theta_nn^2$. On the other hand, 
\begin{align*}
    \sum_{|S| = m}\text{Var} (A_S) &\leq \sum_{|S| = m}\left(\sum_{i \neq j, i, j \in S}\text{Cov}(A_i, A_j) +\delta m\right) \\
    &= \left(\sum_{|S| = m}\sum_{i \neq j, i, j \in S}\text{Cov}(A_i, A_j)\right)+  \binom{n}{m} \cdot \delta m\\
    &= \sum_{i \neq j}\binom{n-2}{m-2}\text{Cov}(A_i, A_j)+  \binom{n}{m} \cdot \delta m.
\end{align*}
It follows that 
\begin{align*}
    \frac{\sum_{|S| = m}\text{Var} (A_S)}{\binom{n}{m}} &\leq \frac{\binom{n-2}{m-2}}{\binom{n}{m}}\left(\sum_{i \neq j}\text{Cov}(A_i, A_j) \right) +\delta m \\
    &\leq \frac{m(m-1)}{n(n-1)}\theta_nn^2 +\delta m\\
    &\leq 2\theta_nm^2 + \delta m.
\end{align*}
Next, notice that, 
\begin{align*}
\delta m \leq \theta_nm^2 &\iff  \delta \leq \theta_nm \\
&\iff  \delta \leq \theta_n^{5/4}n \\ 
&\iff  \delta \leq \theta_n^{5/4}n \\
&\iff \frac{\delta^{4/5}}{n^{4/5}} \leq \theta_n.
\end{align*}
Since we assumed that $\theta_n \geq \frac{1}{\sqrt{n}}$, the last inequality is true for sufficiently large $n$, so we conclude that:
\begin{align*}
    \frac{\sum_{|S| = m}\text{Var} (A_S)}{\binom{n}{m}} \leq 3\theta_nm^2.
\end{align*}
Since the average value of $\text{Var}(A_S)$ is $\leq 3 \theta_nm^2$, it follows that there exists $S$ such that $\text{Var}(A_S) \leq 3\theta_nm^2$, as required.
\end{proof}

The lemma implies that for $m = \theta_n^{1/4}n$, we can find a set $S \subseteq \{1, 2, \ldots, n\}$ of size $m$ such that $\Var (\sum_{i \in S}A_{v_i}) \leq 3\theta_nm^2$. Let us assume that in phase 2, the edges of $\partial(u)$ are presented in an order that ensures that the edges $\{u\} \times S = \{(u, v_i): v_i \in S\}$ come last. We emphasize that this is a valid ordering,
as the adversary has access to the OCRS's description ahead of time.

\begin{theorem} \label{thm:negative_OCRS_trees_concentrated}
Any concentrated OCRS which processes the edges of $(G,\bm{x})$ in two phases as described above is at most $(\alpha + o(1))$-selectable, where $\alpha$ is defined in \eqref{eqn:alpha}.
\end{theorem}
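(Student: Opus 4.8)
The plan is to bound $c$ for an arbitrary concentrated OCRS that is exactly $c$-selectable on $(G,\bm{x})$ and processes the edges in the two-phase order described. Record first the elementary facts $\mb{E}[M_{v_i}] = \sum_{w\in W_i} c\,x_{v_i,w} = c(1-1/n)$, hence $\mb{E}[A_{v_i}] = 1 - c(1-1/n)$, and recall that the subset $S$ has $|S| = m = \theta_n^{1/4}n$ with $\Var\!\big(\sum_{i\in S}A_{v_i}\big) \le 3\theta_n m^2$ by \Cref{existence_of_end}, and that the edges $\{u\}\times S$ arrive last.

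I would introduce two events: $B_0$, that $u$ is unmatched at the moment the first edge of $\{u\}\times S$ arrives, and $D$, that $u$ becomes matched during this final block. Since the $n-m$ edges of $\{u\}\times(N(u)\setminus S)$ all precede the block and each belongs to $\scr{M}$ with probability exactly $c/n$, one gets $\Pr[B_0] = 1 - c(n-m)/n$ and, as a disjoint union over $v_j\in S$, $\Pr[D] = cm/n$. The key structural observation is the inclusion
\[
D \ \subseteq\ B_0 \cap \big\{\exists\, v_j\in S:\ X_{u,v_j}=1 \text{ and } A_{v_j}=1\big\},
\]
valid because if $u$ is matched to $v_j$ then $X_{u,v_j}=1$, the vertex $v_j$ was free ($A_{v_j}=1$), and $u$ was free when $(u,v_j)$ arrived, hence at the start of the block. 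Conditioning on the history $\scr{F}$ revealed before the block, the variables $(X_{u,v_j})_{v_j\in S}$ are fresh, independent $\Ber(1/n)$'s while $B_0$ and $(A_{v_j})_{v_j\in S}$ are $\scr{F}$-measurable, so taking conditional probabilities and then expectations gives
\[
\Pr[D] \ \le\ \mb{E}\Big[\bm{1}_{B_0}\Big(1-\prod_{v_j\in S}\big(1-\tfrac{A_{v_j}}{n}\big)\Big)\Big] \ \le\ \frac1n\,\mb{E}\Big[\bm{1}_{B_0}\sum_{v_j\in S}A_{v_j}\Big],
\]
where the last step uses $(1-1/n)^s\ge 1-s/n$ for integer $s\ge0$.

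To close the argument I would invoke concentration through a covariance estimate: write $\mb{E}\big[\bm{1}_{B_0}\sum_{v_j\in S}A_{v_j}\big] = \Pr[B_0]\,\mb{E}\big[\sum_{v_j\in S}A_{v_j}\big] + \Cov\big(\bm{1}_{B_0},\sum_{v_j\in S}A_{v_j}\big)$ and bound the covariance by Cauchy--Schwarz, $|\Cov| \le \sqrt{\Var(\bm{1}_{B_0})}\cdot\sqrt{\Var\big(\sum_{v_j\in S}A_{v_j}\big)} \le \tfrac12\sqrt{3\theta_n}\,m = o(m)$, using $\Var(\bm{1}_{B_0})\le1/4$ and \Cref{existence_of_end}. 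Substituting $\Pr[D]=cm/n$, $\Pr[B_0]=1-c(n-m)/n$ and $\mb{E}\big[\sum_{v_j\in S}A_{v_j}\big]=m(1-c(1-1/n))$, then dividing by $m/n$ and letting $n\to\infty$ (so $m/n=\theta_n^{1/4}\to0$), we obtain $c \le (1-c)^2 + o(1)$; since on $[0,1]$ the inequality $c\le(1-c)^2$ holds precisely for $c\le\alpha$, this yields $c\le\alpha+o(1)$.

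I expect the crux — and the only place concentration is genuinely used — to be this ``approximate independence'' of $B_0$ and $\sum_{v_j\in S}A_{v_j}$: a clever non-concentrated OCRS could try to arrange that $u$ is matched before the block exactly when the $v_j\in S$ are mostly free, inflating $\Pr[D]$ and defeating the $(1-c)^2$ bound. Passing to a subset $S$ of size $m=o(n)$ is what reconciles the two competing needs: $\sum_{v_j\in S}A_{v_j}$ must be nearly deterministic, so the covariance is $o(m)$, yet $\Pr[B_0]$ must stay close to $1-c$, which would fail if one took $S=N(u)$ (giving $\Pr[B_0]=1$ and a vacuous bound). The remaining steps — the disjoint-union bookkeeping for $\Pr[B_0]$ and $\Pr[D]$, the conditional-independence computation given $\scr{F}$, and the Bernoulli inequality — are routine.
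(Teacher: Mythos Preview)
Your proof is correct and follows essentially the same high-level strategy as the paper: isolate the final block $\{u\}\times S$, observe that for $u$ to be matched there one needs $B_0$ to hold \emph{and} some edge $(u,v_j)$ with $A_{v_j}=1$ to be active, and then exploit the near-determinism of $\sum_{v_j\in S}A_{v_j}$ to decouple $B_0$ from the block. The paper's proof differs only in the technical device used for this last step: it conditions on the value $A_{end}=\ell$, bounds $\Pr[\mathcal{B}\mid\mathcal{C},A_{end}=\ell]=1-(1-1/n)^\ell$ by a threshold split $\ell\le T$ versus $\ell>T$, and invokes Chebyshev with $T=(1-c)m+\theta_n^{1/4}m$ (see \Cref{lem:concentrated_ocrs_analysis}), whereas you go straight through Cauchy--Schwarz on $\Cov(\bm{1}_{B_0},\sum_{v_j\in S}A_{v_j})$. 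Your route is a bit shorter and avoids the auxiliary lemma; the paper's threshold argument is more explicit about how the $o(1)$ error depends on $\theta_n$, but both convert ``$\Var(\sum A_{v_j})=o(m^2)$'' into ``$\sum A_{v_j}$ is asymptotically independent of $B_0$'' and arrive at $c\le(1-c)^2+o(1)$ in the same way.
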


For convenience, let us immediately relabel the vertices so that the edges of $\partial(u)$ are presented in the order $e_1 = (u,v_1), \ldots , e_n =  (u,v_n)$ to the OCRS. We will be group the last $m$ vertices from $v_k$ to $v_n$ together ($k = n- m+1$), and think about them as if they were one vertex. It will be useful to define:
\[A_{end}= \sum_{i=k}^n A_{v_i}, \mathcal{B} = \{\cup_{j =k}^{n} \{ X_{e_j}A_{v_j} =1\}\}, \text{ and } \mathcal{C} = \{M_u(e_k) = 0\}.\]
Here $\mathcal{B}$ represents the event that there is something for the OCRS to select amongst the last $m$ edges, and $\mathcal{C}$ represents the event that the OCRS even reaches the last $m$ vertices, i.e., it has not already made a selection when it reaches $e_k$. Note that we have ensured that $\Var(A_{end}) = o(m^2)$.

Since the OCRS must select each of the last $m$ edges $e_k, \ldots, e_n$ with probability $\frac{c}{n}$, and it can only do this if both the events $\mathcal{B}$ and $\mathcal{C}$ happen, we conclude that:
\begin{align*}
\frac{cm}{n}\leq \Pr[\mathcal{B} \cap \mathcal{C}] &= \Pr[\mathcal{B} \mid \mathcal{C}] \Pr[\mathcal{C}] =\Pr[\mathcal{B} \mid \mathcal{C}] \cdot \left(1-\frac{c(n-m)}{n}\right),
\end{align*}
where the last equality follows from the fact that the algorithm selects each of first $k-1$ edges with probability exactly $\frac{c}{n}$. We have the following lemma:

\begin{lemma}\label{lem:concentrated_ocrs_analysis}
Suppose $\Pr[A_{end} > T] \leq \eps$. Then, 
\[\Pr[\mathcal{B} \mid \mathcal{C}] \cdot \left(1-\frac{c(n-m)}{n}\right) \leq \left(\frac{T(1-c)(n-m)}{n^2} + \eps\right)\cdot \frac{(1-c)n + cm}{(1-c)(n -m)}.\]
\end{lemma}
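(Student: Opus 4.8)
The plan is to bound $\Pr[\scr{B}\cap\scr{C}]$ directly and then divide by $\Pr[\scr{C}]=1-\tfrac{c(n-m)}{n}$. The starting point is a conditional-independence observation: each edge $e_j$ with $j\ge k$ is processed strictly after $e_k$, so the event $\scr{C}=\{M_u(e_k)=0\}$ — together with all of phase~1 and the edges $e_1,\dots,e_{k-1}$, hence also the values $A_{v_k},\dots,A_{v_n}$ — is measurable with respect to a $\sigma$-field that is independent of the fresh edge states $(X_{e_j})_{j=k}^{n}$. Conditioning on that $\sigma$-field, the event $\scr{B}=\bigcup_{j=k}^{n}\{X_{e_j}A_{v_j}=1\}$ therefore has conditional probability exactly $1-\prod_{j=k}^{n}(1-A_{v_j}/n)$, since the $X_{e_j}$ are independent $\Ber(1/n)$ variables. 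Integrating, $\Pr[\scr{B}\cap\scr{C}]=\E\!\big[\bm{1}_{\scr{C}}\,(1-\prod_{j=k}^{n}(1-A_{v_j}/n))\big]$.

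Next I would apply the union bound $1-\prod_{j=k}^{n}(1-A_{v_j}/n)\le \tfrac1n\sum_{j=k}^{n}A_{v_j}=A_{end}/n$, giving $\Pr[\scr{B}\cap\scr{C}]\le \tfrac1n\,\E[\bm{1}_{\scr{C}}A_{end}]$. To bring in the hypothesis $\Pr[A_{end}>T]\le\eps$ I would truncate: since $0\le A_{end}\le m$ deterministically, $A_{end}\le T+m\,\bm{1}_{[A_{end}>T]}$, hence $\E[\bm{1}_{\scr{C}}A_{end}]\le T\Pr[\scr{C}]+m\Pr[A_{end}>T]\le T\Pr[\scr{C}]+m\eps$. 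Substituting $\Pr[\scr{C}]=\tfrac{(1-c)n+cm}{n}$ then yields
\[
\Pr[\scr{B}\mid\scr{C}]\Big(1-\tfrac{c(n-m)}{n}\Big)=\Pr[\scr{B}\cap\scr{C}]\le \frac{T\,((1-c)n+cm)}{n^2}+\frac{m\eps}{n}.
\]

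Finally I would check that this bound is dominated by the claimed right-hand side: expanding the latter gives $\tfrac{T((1-c)n+cm)}{n^2}+\tfrac{\eps((1-c)n+cm)}{(1-c)(n-m)}$, so it suffices to verify $\tfrac{m}{n}\le\tfrac{(1-c)n+cm}{(1-c)(n-m)}$, i.e.\ $m(1-c)(n-m)\le n((1-c)n+cm)$, which is immediate from $n\ge m$ (the left side is at most $(1-c)mn$, the right side at least $(1-c)n^2\ge(1-c)mn$). I do not expect a genuine obstacle here; the one point requiring care is the first step — confirming that $\scr{C}$ is determined before the edge states $(X_{e_j})_{j\ge k}$ are revealed, so that the conditional probability of $\scr{B}$ factors as a product — and, in the truncation step, remembering to use the crude bound $A_{end}\le m$ (paying $m\eps$) on the low-probability event $\{A_{end}>T\}$ rather than only the tail probability $\eps$.
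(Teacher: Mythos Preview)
Your proof is correct and follows essentially the same approach as the paper: both exploit the conditional independence of the fresh edge states $(X_{e_j})_{j\ge k}$ from everything determined earlier (phase~1 and $e_1,\dots,e_{k-1}$), bound $1-\prod(1-A_{v_j}/n)$ by $A_{end}/n$ (the paper via $1-(1-1/n)^\ell\le \ell/n$), and truncate at $T$ using the hypothesis $\Pr[A_{end}>T]\le\eps$. The only cosmetic difference is that the paper works with $\Pr[\scr{B}\mid\scr{C}]$, splits into the cases $A_{end}\le T$ and $A_{end}>T$, and handles the tail via $\Pr[A_{end}>T\mid\scr{C}]\le \eps/\Pr[\scr{C}]$ together with $\Pr[\scr{C}]\ge (1-c)(n-m)/n$, whereas you work with $\Pr[\scr{B}\cap\scr{C}]$ and use the pointwise truncation $A_{end}\le T+m\,\bm{1}_{[A_{end}>T]}$, arriving at a slightly sharper intermediate bound that you then check is dominated by the stated right-hand side.
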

\begin{proof}[Proof of \Cref{lem:concentrated_ocrs_analysis}]
Note that:
\begin{align*}
\Pr[\mathcal{B} \mid \mathcal{C}] &= \sum_{\ell = 0}^m \Pr[\mathcal{B} \mid \mathcal{C}, A_{end} = \ell]\cdot\Pr[A_{end} = \ell \mid \mathcal{C}] \\
&= \sum_{\ell = 0}^m \left(1 - \left(1-\frac{1}{n}\right)^\ell\right)\cdot\Pr[A_{end} = \ell \mid \mathcal{C}],
\end{align*}
since if we know that if $A_{end} = \ell$, then $\mathcal{B}$ occurs exactly when $X_{u,v}=1$ for some $v$ with $X_{u,v} =1$. Fixing a threshold $T$, with $\Pr[A_{end} > T] \leq \eps,$ it follows that:

\begin{align*}
\Pr[\mathcal{B} \mid \mathcal{C}] &= \sum_{\ell = 0}^m \left(1 - \left(1-\frac{1}{n}\right)^\ell \right)\cdot\Pr[A_{end} = \ell \mid\mathcal{C}] \\
&\leq \left(1 - \left(1-\frac{1}{n}\right)^T\right)\cdot\Pr[A_{end} \leq T \mid\mathcal{C}] + \Pr[A_{end} > T \mid\mathcal{C}] \\
&\leq \frac{T}{n}\cdot\Pr[A_{end} \leq T \mid\mathcal{C}] + \frac{\eps}{\Pr[\mathcal{C}]}\\
&\leq \frac{T}{n} + \frac{\eps n}{(1-c)(n-m)}.
\end{align*}
We conclude that 
\begin{eqnarray*}
&&\Pr[\mathcal{B} \mid \mathcal{C}] \cdot \left(1-\frac{c(n-m)}{n}\right) \\
&\leq& \left(\frac{T}{n} + \frac{\eps n}{(1-c)(n-m)}\right)\cdot \left(1-\frac{c(n-m)}{n}\right)\cdot \frac{(1-c)(n-m)}{(1-c)(n-m)}\\
&=& \left(\frac{T(1-c)(n-m)}{n^2} + \eps\right)\cdot \frac{(1-c)n + cm}{(1-c)(n -m)}
\end{eqnarray*}
\end{proof}
We are now ready to prove \Cref{thm:negative_OCRS_trees_concentrated}.
\begin{proof}[Proof of \Cref{thm:negative_OCRS_trees_concentrated}]
Note that we assumed $m = \theta_n^{1/4}n$. Plugging this into the inequality, and using \Cref{lem:concentrated_ocrs_analysis} we get that
\begin{align*}
c\theta_n^{1/4} \leq \left(\frac{T(1-c)(1-\theta_n^{1/4})}{n} + \eps\right)\cdot \frac{1-c + c\theta_n^{1/4}}{(1-c)(1-\theta_n^{1/4}).}
\end{align*}
If we let $T = (1-c)m + \theta_n^{1/4}m$, Chebyshev's inequality tells us that $\Pr[A_{end} > T]\leq \frac{3\theta_nm^2}{\theta_n^{1/2}m^2} = 3\theta_n^{1/2}.$ Therefore, 
\begin{eqnarray*}
&&c \leq \left(((1-c) + \theta_n^{1/4})(1-c)(1-\theta_n^{1/4}) + 3\theta_n^{1/4}\right)\cdot \frac{1-c + c\theta_n^{1/4}}{(1-c)(1-\theta_n^{1/4}).} 
\end{eqnarray*}
Taking $n \to \infty$ yields the desired result.
\end{proof}
We next consider when the OCRS is not necessarily concentrated on $(G,\bm{x})$. The proof here follows
closely to the proof of \Cref{thm:negative_OCRS_input}, with the main difference being that we focus on the average covariance of $(A_{v_i})_{i=1}^n$, as opposed to the minimum covariance. 
\begin{theorem} \label{thm:negative_OCRS_trees}
Any OCRS is at most $(\gamma + o(1))$-selectable on $(G,\bm{x})$ where $\gamma$ is the unique solution to the equation: 
$0 = 1 - 2 \gamma- \gamma\exp(\gamma -1),$
and $\gamma \approx 0.3929$.
\end{theorem}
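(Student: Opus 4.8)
The plan is to reuse the machinery of the ``large covariance regime'' (\Cref{lem:large_variance}) from the proof of \Cref{thm:negative_OCRS_input} almost verbatim, but to replace the appeal to the \emph{minimum} pairwise covariance $\theta$ with a bound on a single, judiciously chosen \emph{sum} of covariances, obtained by fixing the arrival order of $\partial(u)$ after inspecting the scheme. As usual I would first assume without loss of generality that the OCRS is exactly $c$-selectable on $(G,\bm{x})$, let $M_{v_i}$ (resp.\ $A_{v_i}=1-M_{v_i}$) indicate that $v_i$ is matched (resp.\ unmatched) after the first phase, and record that $\mb{E}[M_{v_i}]=\sum_{w\in W_i}c\,x_{v_i,w}=c(1-1/n)$ since each $v_i$ has fractional degree $(n-1)/n$ toward $W_i$. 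I may also assume $c\ge\gamma/2$, as otherwise there is nothing to prove; this keeps $\Pr[A_{v_n}=0]=\Pr[M_{v_n}=1]=c(1-1/n)$ bounded away from $0$, which is needed below.

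The one genuinely new ingredient is the choice of second-phase order. Since $\sum_{i,j}\Cov[A_{v_i},A_{v_j}]=\Var\bigl[\sum_{i=1}^{n}A_{v_i}\bigr]\ge0$, averaging over $i$ produces an index---which, after relabeling, we call $n$---with $\sum_{j=1}^{n}\Cov[A_{v_n},A_{v_j}]\ge0$, and hence $\sum_{j=1}^{n-1}\Cov[A_{v_n},A_{v_j}]\ge-\Var[A_{v_n}]\ge-\tfrac14$. I would then have the adversary present the edges of $\partial(u)$ so that $e_n=(u,v_n)$ arrives last (a valid ordering, since the adversary knows the scheme in advance). From here the argument mirrors \Cref{lem:large_variance}: conditioning on $X_{e_n}=1$ and using independence from the earlier phases gives $c\le\Pr[M_{v_n}=0,\,M_u(e_n)=0]$, and expanding together with $\Pr[M_u(e_n)=1]=\sum_{i=1}^{n-1}\Pr[(u,v_i)\in\scr{M}]=c(1-1/n)$ yields the analogue of \eqref{eqn:simple_upper_bound}, namely $2c+\Pr[M_{v_n}=1]\bigl(1-\Pr[M_u(e_n)=1\mid M_{v_n}=1]\bigr)\le1+o(1)$. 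To control $\Pr[M_u(e_n)=1\mid A_{v_n}=0]$ I would note that in this tree $M_u(e_n)=1$ forces some $(u,v_j)$ with $j\le n-1$, $A_{v_j}=1$, and $X_{u,v_j}=1$, exploit the independence of $(X_{u,v_j})_j$ from $(A_{v_i})_i$, condition on the realized $(A_{v_i})_{i<n}$, and apply Jensen's inequality to the concave map $z\mapsto1-e^{-z}$, exactly as in the derivation of \eqref{eqn:expectation_matching_upper_bound}. This leaves the key estimate $\sum_{i=1}^{n-1}\mb{E}[A_{v_i}\mid A_{v_n}=0]\le(1+o(1))\,n(1-c)$, which follows by summing the identity $\mb{E}[A_{v_i}\mid A_{v_n}=0]\Pr[A_{v_n}=0]=\mb{E}[A_{v_i}]\Pr[A_{v_n}=0]-\Cov[A_{v_i},A_{v_n}]$ over $i\le n-1$ and invoking $\sum_{j=1}^{n-1}\Cov[A_{v_n},A_{v_j}]\ge-\tfrac14$ together with $\Pr[A_{v_n}=0]$ bounded away from $0$; in effect this is \eqref{eqn:average_conditional_availability} with $\theta$ set to $0$.

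Combining these and letting $n\to\infty$ gives $1\ge2c+c\exp(c-1)$, i.e.\ $1-2c-c\exp(c-1)\ge0$; since $z\mapsto1-2z-z\exp(z-1)$ is strictly decreasing on $[0,1]$ and vanishes at $\gamma\approx0.3929$, this forces $c\le\gamma+o(1)$. I expect the main obstacle to be bookkeeping rather than conceptual: one must check that the various $o(1)$ terms---the $1/n$ corrections to the matching probabilities, the contribution $-\tfrac14=o(n)$ of the covariance bound after dividing by $n$, and the errors in the Jensen step and in the product-to-exponential approximation $\prod_j(1-A_{v_j}/n)=(1+o(1))\exp(-\sum_jA_{v_j}/n)$---are uniform over the scheme and collapse cleanly as $n\to\infty$. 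The only real novelty is the averaging step that singles out $v_n$; note that, unlike in the general-graph case, there is no ``small covariance'' branch to fall back on, so the entire bound must be extracted from this one carefully placed last edge.
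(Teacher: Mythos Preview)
Your proposal is correct and follows essentially the same approach as the paper: both arguments use the averaging identity $\sum_{i,j}\Cov[A_{v_i},A_{v_j}]=\Var[\sum_i A_{v_i}]\ge 0$ to single out an index whose row of covariances is nonnegative on average, place that vertex last, and then rerun the \Cref{lem:large_variance} computation verbatim to obtain $2c+c\exp(c-1)\le 1+o(1)$. The only cosmetic difference is that the paper carries an auxiliary parameter $\theta$ (a lower bound on the variance) through the equivalences and sets $\theta=0$ at the end, whereas you set $\theta=0$ from the outset and absorb the leftover $-\Var[A_{v_n}]\ge -\tfrac14$ as an $o(n)$ correction; the resulting bounds are identical.
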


\begin{proof}[Proof of \Cref{thm:negative_OCRS_trees}]

The proof in this setting will, broadly speaking, resemble the proof of \Cref{lem:large_variance}, with some key differences. We will work with the same sequence of input graphs useful in the proof of \Cref{thm:negative_OCRS_trees_concentrated}, and in the first phase, the edges are presented in the same order. However, we will choose a different order for the edges in $\partial(u)$.

Once again, let us fix an OCRS for $(G,\bm{x})$ which
returns a matching $\scr{M}$, and is $c$-selectable for $c \in (0,1)$, i.e., $\Pr[e \in \scr{M} \mid X_e =1] = c$ for each edge $e$ of $G$. We will assume that $\Var[\sum_{i = 1}^n A_{v_i}] \geq \theta n^2$, for some $\theta \geq 0$. Before we explain the order in which the edges of $\partial(u)$ are presented, we note that since

\[\sum_{j = 1}^n\left(\E [A_{v_j}\sum_{i = 1}^n A_{v_i}] -\E [A_{v_j}] \E[\sum_{i = 1}^n A_{v_i}]\right) = \Var[\sum_{i = 1}^n A_{v_i}] \geq \theta n^2, \]
we can find $j$ such that 
\[\E [A_{v_j}\sum_{i = 1}^n A_{v_i}] -\E [A_{v_j}] \E[\sum_{i = 1}^n A_{v_i}]\geq \theta n. \]
But note that
\begin{eqnarray*}
 && \E [A_{v_j}\sum_{i = 1}^n A_{v_i}] -\E [A_{v_j}] \E[\sum_{i = 1}^n A_{v_i}]\geq \theta n \\
 &\iff& \E [\sum_{i = 1}^n A_{v_i} \mid A_{v_j} = 1]\Pr[A_{v_j} = 1] -(1-c)^2n\geq \theta n \\
  &\iff& \E [\sum_{i = 1}^n A_{v_i}] - \E [\sum_{i = 1}^n A_{v_i}  A_{v_j} = 0]\Pr[A_{v_j} = 0] -(1-c)^2n\geq \theta n \\
    &\iff& (1-c)n - \E [\sum_{i = 1}^n A_{v_i} \mid A_{v_j} = 0]c -(1-c)^2n\geq \theta n \\
        &\iff& \E [\sum_{i = 1}^n A_{v_i} \mid A_{v_j} = 0]c\leq c(1-c)n - \theta n \\
            &\iff& (1-c)n - \E [\sum_{i = 1}^n A_{v_i} \mid A_{v_j} = 0]c -(1-c)^2n\geq \theta n \\
        &\iff& \sum_{i \neq j}\E [ A_{v_i} \mid A_{v_j} = 0]\leq n\left((1-c) - \frac{\theta}{c}\right). \\
\end{eqnarray*}
We conclude that we can find $j$ such that $\sum_{i = 1}^{n-1}\E [ A_{v_i} \mid A_{v_j} = 0]\leq n\left((1-c) - \frac{\theta}{c}\right).$ Let us assume that in phase 2, the edges of $\partial(u)$ are presented in an order that ensures that $v_j$ comes last.

For convenience, let us immediately relabel the vertices so that the edges of $\partial(u)$ are presented in order $e_1 = (u,v_1), \ldots , e_n =  (u,v_n)$ to the OCRS. Notice that by choosing the order of the edges in $\partial(u)$ correctly, we have already established equation \eqref{eqn:average_conditional_availability} from the proof of Lemma \ref{lem:large_variance}. But this equation was the only requirement in the proof of Lemma \ref{lem:large_variance} to establish its conclusion that:
$$2c + c \exp(c +\theta/c -1) \le 1 + o(1).$$
%\PN{Calum, I've cheked it and it is true, but could you possibly restructure the other proof slightly to make this clearer? Let me know if you would like me to do it.}
Note that the $\theta$ here has a different meaning than in the proof of \Cref{lem:large_variance}---here, $\theta$ is just a number that satisfies the inequality $\Var[\sum_{i = 1}^n A_{v_i}] \geq \theta n^2$. Clearly this inequality is always true for $\theta = 0$, so we conclude that:
\[2c + c\exp(c-1) \leq 1 + o(1),\]
as was needed.
\end{proof}

\section{RO-CRS and FO-CRS Positive Results} \label{sec:rcrs_pos}

We now investigate contention resolution schemes that assume that the edges of the input $(G,\bm{x}) =(G(n), \bm{x}(n))$ arrive in a random order drawn u.a.r., or in an order chosen by the CRS. In the latter case, the edge-ordering is a \textit{non-uniform} random ordering. For both arrival orders, we study the \textsc{greedy-CRS} which matches an arriving edge $e=(u,v)$ provided $e$ is active,
and $u$ and $v$ were previously unmatched.

For general inputs with vanishing edge values, \citet{brubach2021offline} showed
that \textsc{greedy-CRS} is $(1-e^{-2})/2 \approx 0.432$-selectable,
and it is not hard to see that their analysis is tight\footnote{Consider $G=(V,E)$ with $V=\{u_i,v_i\}_{i=0}^n$ and $E=\{(u_0,v_0)\}\cup\{(u_0,u_i),(v_0,v_i)\}_{i=1}^n$ The probability $x_e$ of every $e\in E$ is $1/(n+1)$. When $(u_0,v_0)$ is active, \textsc{greedy-CRS} selects $(u_0,v_0)$ with probability \textit{exactly} $(1-e^{-2})/2$ as $n\to\infty$.}. In order to surpass this bound, one approach is to
ensure that $(G,\bm{x})$ is $1$-\textit{regular}; that is, $\sum_{e \in \partial(v)} x_e =1$
for each $v \in V$. \citet{Fu2021} were the first to prove a reduction to this setting in the related vertex arrival model. \citet{macrury2023random} applied this reduction for random-order edge arrivals, and \citet{nuti2023towards} applied it in the offline setting. \citet{nuti2023towards} also observed that this reduction preserves vanishing edge values, no matter the arrival order of the edges.
We therefore make this assumption throughout the section. 
\begin{theorem} \label{thm:ROCRS_positive}
For $1$-regular inputs with vanishing edge values, \textsc{greedy-CRS} is $\frac{1}{2}$-selectable when the edges arrive u.a.r.
\end{theorem}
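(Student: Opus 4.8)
The plan is to carry out the Galton--Watson reduction sketched in \Cref{sec:technical}. Fix an edge $e=(u,v)\in E$, condition on $X_e=1$, and let $S_e$ be the connected component of $e$ in the subgraph of active edges. The first step is to establish (as a standalone lemma, \Cref{thm:reductiontogw}) that in the vanishing regime $\Pr[e\in\scr{M}\mid X_e=1]$ converges to the probability that greedy matching on a random tree $\scr{T}$ selects $e$ when the edges of $\scr{T}$ arrive in uniformly random order, where $\scr{T}=\T_u\cup\T_v\cup\{e\}$ with $\T_u,\T_v$ independent Galton--Watson trees rooted at $u$ and $v$ with $\Pois(1)$ offspring. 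The justification combines: (i) $S_e$ is cycle-free w.h.p.\ \citep{nuti2023towards}; (ii) whether greedy accepts the active edge $e$ depends only on $S_e$ and the relative order of its edges; (iii) that relative order is itself uniformly random. The $1$-regularity hypothesis makes the offspring mean exactly $1$, so $\scr{T}$ is finite almost surely.

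Next I would analyze greedy matching on $\scr{T}$ with independent $\mathrm{Unif}[0,1]$ arrival times. Define $q(t)$ to be the probability that the root $u$ of $\T_u$ is unmatched in the greedy matching on $\T_u$ just before time $t$; by the $u\leftrightarrow v$ symmetry this is also the probability $v$ is unmatched. Since $\T_u,\T_v$ are independent and $e$ is accepted iff both $u$ and $v$ are unmatched when $e$ arrives, conditioning on the (uniform) arrival time of $e$ gives
\begin{equation*}
\Pr[e\in\scr{M}\mid X_e=1]\ \longrightarrow\ \int_0^1 q(t)^2\,dt .
\end{equation*}
To obtain a recursion, note $u$ has $\Pois(1)$ children in $\T_u$; a child $w_i$ is joined to $u$ by an edge with uniform arrival time $s_i$, and the subtree rooted at $w_i$ is again an independent Galton--Watson tree, so $w_i$ is matched within its own subtree by time $s_i$ with probability $1-q(s_i)$. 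The root $u$ is unmatched just before $t$ iff for every child $i$ either $s_i\ge t$ or $w_i$ is already matched by time $s_i$; given the number of children $k$, these events are independent, each of probability $1-\int_0^t q(s)\,ds$. Averaging over $k\sim\Pois(1)$ yields
\begin{equation*}
q(t)=\exp\!\Big(-\!\int_0^t q(s)\,ds\Big),\qquad q(0)=1 .
\end{equation*}

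The last step solves this equation. Differentiating gives $q'(t)=-q(t)^2$ with $q(0)=1$, whose unique solution is $q(t)=1/(t+1)$; I would additionally verify uniqueness directly from the integral form (a Gr\"onwall-type monotonicity argument), since that is the form the reduction delivers. Then
\begin{equation*}
\int_0^1 q(t)^2\,dt=\int_0^1\frac{dt}{(1+t)^2}=1-\tfrac12=\tfrac12 ,
\end{equation*}
so \textsc{greedy-CRS} selects every edge $e$ with probability $\tfrac12-o(1)$, giving $\tfrac12$-selectability in the vanishing regime (tightness then follows from the known upper bound of \citet{macrury2023random}).

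I expect the main obstacle to be the reduction itself: controlling the $o(1)$ error \emph{uniformly} over all $1$-regular vanishing inputs rather than just Erd\H{o}s--R\'enyi graphs, and checking that after conditioning on $X_e=1$ the child subtrees remain distributed as independent Galton--Watson trees, so that the self-similarity underpinning the recursion for $q$ genuinely holds. A secondary technical point is justifying the interchange of the $n\to\infty$ limit with the integration over arrival times, and identifying the function class in which the integral equation for $q$ has a unique solution.
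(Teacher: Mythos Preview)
Your proposal is correct and follows essentially the same route as the paper: reduce via \Cref{thm:reductiontogw} to greedy matching on the Galton--Watson tree $\scr{T}$, derive the integral equation $q(t)=\exp\bigl(-\int_0^t q(s)\,ds\bigr)$ by conditioning on the Poisson number of children, solve the resulting ODE $q'=-q^2$ to get $q(t)=1/(t+1)$, and integrate $q^2$ to obtain $1/2$. The only cosmetic difference is that the paper evaluates $\int_0^1 q(t)^2\,dt$ as $-\int_0^1 q'(t)\,dt=q(0)-q(1)$ rather than computing the antiderivative directly.
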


Recall that it is impossible for an RCRS to be better than $\frac{1}{2}$-selectable even for vanishing edge values \citep{macrury2023random}.
We next show that it is possible to surpass this impossibility result in the free-order setting, where we now analyze \textsc{Greedy-CRS} in a \textit{non-uniform} random order.

\begin{theorem} \label{thm:FOCRS_positive}
For $1$-regular inputs with vanishing edge values, \textsc{greedy-CRS} is $1-\ln\left(2-\frac{1}{e}\right) \approx 0.510$-selectable when the edges arrive in a certain non-uniform random order.
\end{theorem}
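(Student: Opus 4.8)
The plan is to reuse the Galton--Watson reduction behind \Cref{thm:ROCRS_positive}, now for the lexicographical random ordering, and then to analyze greedy matching on the limiting tree. Fix a $1$-regular input $(G,\bm x)$ with vanishing edge values, an edge $e=(u,v)$, and condition on $X_e=1$. In the lexicographical ordering each vertex $w$ carries an independent seed $t_w\sim U[0,1]$, and the only feature of the induced order that I would use is that for edges sharing an endpoint $u$, the edge $(u,w)$ is processed before $(u,w')$ precisely when $t_w<t_{w'}$. Since the seeds are independent of the graph, of the edge states, and of everything else, the reduction of \Cref{thm:reductiontogw} goes through verbatim: as $n\to\infty$ the active component $S_e$, \emph{together with its seeds}, converges to $\T=\T_u\cup\T_v\cup\{e\}$, where $\T_u,\T_v$ are independent Galton--Watson trees with $\mathrm{Pois}(1)$ offspring, each vertex of $\T$ gets an independent $U[0,1]$ seed, and the edges of $\T$ are presented in the induced lexicographical order. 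As the greedy CRS's decision on an active edge $e$ depends only on $S_e$ and the order of its edges, it suffices to lower bound the probability that greedy on $\T$ matches $e$.

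For $s,t\in[0,1]$, let $q(s,t)$ be the probability that a vertex $w$ of seed $s$, carrying an independent $\mathrm{Pois}(1)$ Galton--Watson subtree, is still unmatched by greedy at the moment the edge from $w$ to a ``parent'' of seed $t$ is processed. The children of $w$ form a $\mathrm{Pois}(1)$ collection with i.i.d.\ $U[0,1]$ seeds, and only those of seed $<t$ have their edge to $w$ processed before the parent-edge; by Poisson thinning there are $\mathrm{Pois}(t)$ of them, with i.i.d.\ $U[0,t]$ seeds $S_1,S_2,\dots$. One checks that $w$ survives to its parent-edge exactly when every such child $w_i$ has already been matched within its own subtree by the time $(w,w_i)$ is reached --- the deadline governing $w_i$'s subtree being the seed $s$ of $w$ --- an event of probability $1-q(S_i,s)$, independently across the $w_i$. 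The branching-process generating function then gives
\begin{equation}\label{eqn:fo_int_eq}
q(s,t)=\exp\!\left(-\int_0^{t}q(\sigma,s)\,d\sigma\right),\qquad q(s,0)=1 .
\end{equation}
Moreover, conditioning on the seeds $(t_u,t_v)=(t_1,t_2)$, the events ``$u$ is unmatched when $e$ is processed'' and ``$v$ is unmatched when $e$ is processed'' are functions of the independent subtrees $\T_u$ and $\T_v$ (and the fixed seeds), hence conditionally independent, with probabilities $q(t_1,t_2)$ and $q(t_2,t_1)$, so
\begin{equation}\label{eqn:fo_double}
\Pr[e\in\scr{M}\mid X_e=1]=\int_0^1\!\!\int_0^1 q(t_1,t_2)\,q(t_2,t_1)\,dt_1\,dt_2 .
\end{equation}

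To evaluate \eqref{eqn:fo_double} I would first record that $q$ is continuous (a routine bootstrap off $0\le q\le1$, starting from $q(s,t)\in[e^{-t},1]$), so \eqref{eqn:fo_int_eq} may be differentiated in its second argument: $\partial_t q(s,t)=-q(s,t)\,q(t,s)$, equivalently $q(s,t)\,q(t,s)=-\partial_t q(s,t)$. Hence the inner integral of \eqref{eqn:fo_double} telescopes to $q(t_1,0)-q(t_1,1)=1-q(t_1,1)$. Writing $R(t):=\int_0^1 q(\sigma,t)\,d\sigma$, equation \eqref{eqn:fo_int_eq} gives $q(t,1)=e^{-R(t)}$ with $R(0)=1$, and differentiating under the integral and applying $q(\sigma,t)q(t,\sigma)=-\partial_\sigma q(t,\sigma)$ once more,
\[
R'(t)=\int_0^1 \partial_t q(\sigma,t)\,d\sigma=-\int_0^1 q(\sigma,t)\,q(t,\sigma)\,d\sigma=q(t,1)-q(t,0)=e^{-R(t)}-1 .
\]
This ODE with $R(0)=1$ separates to the unique solution $R(t)=\ln\!\bigl(1+(e-1)e^{-t}\bigr)$, so $R(1)=\ln\!\bigl(1+(e-1)/e\bigr)=\ln\!\bigl(2-\tfrac1e\bigr)$. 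Plugging back into \eqref{eqn:fo_double},
\[
\Pr[e\in\scr{M}\mid X_e=1]=1-\int_0^1 q(t_1,1)\,dt_1=1-R(1)=1-\ln\!\Bigl(2-\tfrac1e\Bigr),
\]
and since this holds for every edge (up to the $o(1)$ error of the reduction, uniform over $e$, exactly as in \Cref{thm:ROCRS_positive}), \textsc{greedy-CRS} is $\bigl(1-\ln(2-1/e)\bigr)$-selectable in the vanishing regime under this ordering.

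The hard part is not the ODE, which is short, but the two middle steps: making the transfer to $\T$ fully rigorous in the \emph{free-order} model, so that the seeds on $\T$ really are i.i.d.\ uniform and the order induced on $S_e$ really is the lexicographical one (not merely a uniformly random arrival order), and then establishing the recursion \eqref{eqn:fo_int_eq}. The subtle points in \eqref{eqn:fo_int_eq} are (i) identifying the correct ``deadline'' for each child subtree --- the \emph{seed of the parent}, not the arrival time of the parent-edge, which is precisely why $q$ needs two arguments rather than one; and (ii) combining the conditional independence of the child subtrees with the Poisson-thinning bookkeeping. One also needs the regularity of $q$ recorded above in order to differentiate \eqref{eqn:fo_int_eq} legitimately and to differentiate under the integral sign when computing $R'$.
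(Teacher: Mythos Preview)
Your proposal is correct and follows the paper's approach closely: the same reduction to the Galton--Watson tree $\T$ with i.i.d.\ vertex seeds, the same two-variable recursion $q(s,t)=\exp\bigl(-\int_0^t q(\sigma,s)\,d\sigma\bigr)$, and the same telescoping via $\partial_t q(s,t)=-q(s,t)q(t,s)$ to reduce the double integral to $1-\int_0^1 q(t_1,1)\,dt_1$.

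The one genuine deviation is in how you finish. The paper computes $\partial_x q(x,y)=q(x,y)(1-q(x,y))$ and solves this logistic ODE (with $q(0,y)=e^{-y}$) to obtain the closed form $q(x,y)=e^x/(e^x+e^y-1)$, then integrates $q(x,1)$ directly. You instead set $R(t)=\int_0^1 q(\sigma,t)\,d\sigma$, derive $R'(t)=e^{-R(t)}-1$, and read off the answer as $1-R(1)$ without ever finding $q$ itself. Both routes are short; yours is arguably a touch slicker in that it bypasses the explicit formula, while the paper's has the side benefit of exhibiting $q$ in closed form.
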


It will be convenient to assume for the proof of Theorem~\ref{thm:ROCRS_positive} that each edge $e$ arrives at a time that is chosen independently and u.a.r. from $[0, 1]$. Furthermore, for the proof of Theorem~\ref{thm:FOCRS_positive}, we will assume that we generate for each vertex an independent seed u.a.r. from $[0, 1]$, and that the edges then arrive in lexicographic order, i.e., each edge is associated with the seeds of its vertices $\{x,y\}$, and it arrives at time $\min\{x,y\} + \epsilon \max\{x,y\}$, where $\epsilon$ is any number less than the distance between any two vertex seeds. The primary property we will need for our ordering is that an edge $(u, w)$ is processed before $(u, v)$ exactly when $w$'s seed is less than $v$'s seed, so our proof will also work for orderings other than the lexicographic one. The proofs of Theorems~\ref{thm:ROCRS_positive} and \ref{thm:FOCRS_positive} both first reduce to the random tree case, and so we treat them in a unified manner.

\subsection{Reduction to the Random Tree Case}
This section follows \citet{nuti2023towards} very closely; however for completeness
we present all the proofs in \Cref{sec:RCRS_additions}.
Consider a simple method to construct a random tree $\T$ that will estimate $S_e$ as $n \rightarrow \infty$. Let $u$ and $v$ be two special vertices connected by an edge, and generate the random tree $\T$ by generating two independent Galton-Watson processes at $u$ and $v$. Explicitly, construct $\T$ as follows: Start with the two special vertices $u$ and $v$ unmarked and connected by an edge, and add them into $\T$. Then, for each unmarked vertex $w$ added into $\T$, obtain an independent sample $L_w$ of a Poisson random variable with mean 1, add $L_w$ new children to $w$, and then mark $w$. End the construction when all vertices of $\T$ are marked. Standard results about Galton-Watson processes immediately tell us that $\T$ is finite with probability $1$ (e.g., see Theorem 6.1 in \citet{branchingprocessestext}).

After we generate $\T$, clearly it is possible to run a greedy matching algorithm on $\T$ (either after generating random times of arrival for each edge $e \in \T$, or random seeds for each vertex $w \in \T$ as the case may be). This produces a matching $\scr{M}_{\T}$. We claim the following relation between the execution of the greedy CRS on $G(n)$ and the greedy matching algorithm on $\T$:

\begin{theorem}\label{thm:reductiontogw}
There exists a function $f$ satisfying $\lim_{\eps \to 0}f(\eps) = 0$ such that for any $1$-regular input $(G(n), \bm{x}(n))$ with vanishing edge values, and for all $e \in G(n)$, we have:
$$|\Pr[e\in \scr{M}\mid X_e = 1] - \Pr[(u, v) \in  \scr{M}_{\T}]| \leq f(\eps(n)).$$
\end{theorem}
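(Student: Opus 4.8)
The plan is to couple the execution of the greedy CRS on $(G(n),\bm{x}(n))$ with the greedy matching algorithm on $\T$ by exploiting the three essential observations mentioned in the techniques section. First I would argue that whether $e=(u,v)$ is selected by the greedy CRS, conditioned on $X_e=1$, depends only on the active connected component $S_e$ of $e$ together with the arrival times (or seeds) of the edges of $S_e$: since greedy only ever matches active edges, and an edge outside $S_e$ can never block an edge inside $S_e$, the decision on $e$ is a deterministic function of $(S_e, \text{times})$. So it suffices to compare the distribution of $(S_e,\text{times})$ with that of $(\T,\text{times})$ and show the total variation distance is $f(\eps(n))\to 0$, then couple and note that the greedy decision is the same deterministic function on both sides.

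The second step is to control $\mathrm{d}_{\mathrm{TV}}(S_e, \T)$. Here I would invoke exactly the local-weak-convergence / Galton--Watson estimates of \citet{nuti2023towards}: for a $1$-regular input with all $x_f\le\eps$, the component $S_e$ explored from the active edge $e$ is, up to an error vanishing with $\eps$, distributed as $\T_u\cup\T_v\cup\{e\}$ with $\T_u,\T_v$ independent Poisson($1$) Galton--Watson trees. Concretely one reveals $S_e$ by breadth-first exploration: a vertex $w$ already in $S_e$ with fractional degree summing to $\le 1$ (minus the edge back to its parent) has a number of new active neighbors that is a sum of independent $\Ber(x_f)$'s with $\sum x_f\le 1$; since each $x_f\le\eps$, this is within $O(\eps)$ in total variation of Poisson($1$) (Le Cam's inequality), and with high probability no two exploration steps collide (the probability of a cycle in the explored neighborhood is $o(1)$, cf.\ \Cref{lem:cycle_upper_bound}), so the tree structure is genuinely a Galton--Watson tree. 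Summing the Le Cam errors over the (a.s.\ finite, and with overwhelming probability bounded) number of explored vertices, plus the cycle-collision probability, gives $\mathrm{d}_{\mathrm{TV}}(S_e,\T)\le f(\eps(n))$ for a suitable $f$.

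The third step is the decoration by times/seeds. Conditioned on the (unordered) edge set of $S_e$ — equivalently of $\T$ — the u.a.r.\ arrival times of those edges are i.i.d.\ uniform on $[0,1]$ in both models, and likewise the vertex seeds are i.i.d.\ uniform; this is immediate since the global times/seeds are drawn independently of the edge states, and restricting an i.i.d.\ family to a ($X$-measurable, hence independent-of-times) subset leaves it i.i.d.\ uniform. Hence the coupling of $S_e$ with $\T$ extends to a coupling of the decorated structures with the same total-variation bound. Applying the deterministic selection function on both sides yields
\[
\bigl|\Pr[e\in\scr{M}\mid X_e=1]-\Pr[(u,v)\in\scr{M}_{\T}]\bigr|\le \mathrm{d}_{\mathrm{TV}}\bigl((S_e,\text{times}),(\T,\text{times})\bigr)\le f(\eps(n)),
\]
as claimed.

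I expect the main obstacle to be making the total-variation comparison of Step~2 fully rigorous and \emph{uniform over $e$}: one must handle the fact that the exploration of $S_e$ is not literally a clean Galton--Watson process (the available fractional degree at a vertex is $\le 1$ rather than exactly $1$, edges can in principle revisit previously-seen vertices, and the number of explored vertices is itself random and a priori unbounded), and bound the accumulated error by something depending only on $\eps(n)$ and not on $n$ or on which edge $e$ was chosen. The standard remedy — truncating the exploration at depth/size $R=R(\eps)\to\infty$ slowly, bounding the $\mathrm{d}_{\mathrm{TV}}$ contribution of the truncated part by the Galton--Watson tail (which is $o(1)$ since $\T$ is a.s.\ finite), and bounding the per-step Le Cam error and the cycle probability on the truncated part — is exactly the argument carried out in \citet{nuti2023towards}, so the cleanest route is to cite their reduction essentially verbatim and only re-derive the pieces that differ (namely that the greedy \emph{CRS} decision, under an arbitrary but time-decorated order, is the claimed deterministic functional of $S_e$).
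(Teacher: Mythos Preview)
Your proposal is correct and follows essentially the same approach as the paper: both arguments rest on the fact that the greedy decision for $e$ is a deterministic function of $(S_e,\text{times})$, and then compare the law of $S_e$ with that of $\T$ via a truncation argument (work on a finite family $F$ of trees with $\Pr[\T\notin F]$ small, then let $\eps\to 0$). The only presentational difference is that the paper takes the per-tree estimate $|\Pr[S_e=T_0\mid X_e=1]-\Pr[\T=T_0]|\le 3\eps|T_0|^2$ directly from \citet{nuti2023towards} as a black box and sums it over $F$, whereas you sketch how one would recover such a bound from Le Cam plus a cycle-probability argument before (rightly) concluding that citing that lemma is the cleanest route; your additional remarks about the time/seed decoration being i.i.d.\ conditional on $S_e$ are correct and are implicitly absorbed in the paper into the single line $\Pr[e\in\scr{M}\mid X_e=1,S_e=T_i]=\Pr[(u,v)\in\scr{M}_\T\mid \T=T_i]$.
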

Clearly, establishing \Cref{thm:reductiontogw} reduces the proofs of Theorems \ref{thm:ROCRS_positive} and \ref{thm:FOCRS_positive} to the calculation of $\Pr[(u, v) \in \scr{M}_{\T}]$ in two different random orderings of the edges in $\T$. 

\subsection{Proving Theorems~\ref{thm:ROCRS_positive} and \ref{thm:FOCRS_positive}}
Before we begin the proofs of these theorems, we will establish some common notions between the two proofs. First, we say that a vertex $w \in \T$ has been matched if an edge adjacent to it has been matched by the greedy matching algorithm.
Second, in order for $u$ to be unmatched when $(u, v)$ arrives, we need every edge $(u, w)$ from $u$ to one of its children $w$ to satisfy:
\begin{enumerate}
    \item It is processed after edge $(u,v)$, \textit{or}
    \item $w$ is already matched when $(u,w)$ is processed.
\end{enumerate}
Let us say that a child $w$ of $u$ is \textit{bad} if it satisfies neither of these criteria, i.e., $(u,w)$ is processed before $(u,v)$ and $w$ is unmatched at this point. Observe that $(u, v)$ is matched by the greedy algorithm precisely when no child of $u$ or $v$ is bad.

\begin{proof}[Proof of \Cref{thm:ROCRS_positive}]
We aim to calculate $\Pr[(u, v) \in \scr{M}_{\T}]$ when each edge $e$ of the random tree $\T$ arrives at a time that is chosen independently and uniformly at random from $[0, 1]$. 

Let us define for any $x \in [0,1]$, the function $\q(x)$ which denotes the probability that the special vertex $u$ is \textit{unmatched} at the time $x$ when $(u, v)$ arrives. The probability is taken over all possible realizations of $\T$, and all possible arrival times of the edges in $\T$ (other than $(u, v)$).  Note briefly that $q$ is continuous---indeed, \[|\q(x)-q(x+h)| \leq \Pr[\text{some edge arrives in between times } x \text{ and } x + h].\]

The probability that $(u,v)$ is selected by the greedy algorithm is
$\int_0^1 \q(x)^2 dx.$
This is because:
\begin{enumerate}
    \item The symmetry between $u$ and $v$ ensures $\q(x)$ is also the conditional probability that $v$ is unmatched.
    \item Whether $u$ or $v$ are matched before the edge $(u, v)$ arrives are independent events.
    \item $(u, v)$ is added into the matching exactly if neither $u$ nor $v$ are matched when it arrives.
\end{enumerate}
It turns out that we can actually determine what $\q(x)$ is using a recursion.  Critically, for an edge $(u,w)$ arriving at time $z$, the probability (taken over all possible realizations of how the process looks below $w$) that $w$ is unmatched when $(u,w)$ arrives is exactly $\q(z)$, and so, the probability that $w$ is bad is exactly $\int_0^x\q(z) dz$. It follows that 
\begin{eqnarray*}
\q(x) = \sum_{k = 0}^{\infty}\Pr[u\text{ has $k$ children and none of them are bad}]
&= & \sum_{k = 0}^{\infty}\frac{e^{-1}}{k!}\left(1-\int_0^x\q(z) dz\right)^k\\
&= & \exp{\left(-\int_0^x\q(z) dz\right)}.\\
\end{eqnarray*}
To solve this equation, note that $q$ is differentiable in $x$ (since $\int_0^x\q(z) dz$ is differentiable in $x$) and $\frac{dq}{dx}= -\q(x)^2$. We know that $\q(0) = 1$, so this is a differential equation we can solve using separation of variables. We conclude that $\q(x) = \frac{1}{x+1}.$
Finally, note that 
\begin{eqnarray*}
\int_0^1 \q(x)^2 dx& = &-\int_0^1 \frac{d q}{dx} dx =  \q(0)-\q(1)=  \frac{1}{2}.
\end{eqnarray*}
This concludes the proof of Theorem~\ref{thm:ROCRS_positive}.
\end{proof}

The argument to prove Theorem~\ref{thm:FOCRS_positive} is similar to the previous proof, but we need to be slightly more careful in how we define $\q$. 

\begin{proof}[Proof of \Cref{thm:FOCRS_positive}]
 Once more, we aim to calculate $\Pr[(u, v) \in \scr{M}_{\T}],$ but this time each edge $e$ of the random tree $\T$ arrives at a time determined by the random seeds of the edge's associated vertices. 

Let us define for any $x, y \in [0,1]$, the function $\q(x, y)$ which denotes the probability that the special vertex $u$ is unmatched when $(u, v)$ arrives given $u$ has the seed $x$ and $v$ has the seed $y$. The probability that $(u,v)$ is selected by the greedy algorithm is now:
$$\int_0^1 \int_0^1 \q(x,y)\q(y,x) dx dy.$$ 
Once again, it turns out that we can establish a recurrence for $q$. Indeed, for an edge $(u,w)$ in which $w$ has seed $z$, the probability (taken over all possible realizations of how the process looks below $w$) that $w$ is unmatched when $(u,w)$ arrives is exactly $\q(z, x)$, and so, the probability that $(u, w)$ is bad is exactly $\int_0^y\q(z, x) dz$ (since the edge $(u, w)$ is processed before $(u, v)$ exactly if $z \leq y$).
It follows that 
\begin{eqnarray*}
\q(x, y)& = &\sum_{k = 0}^{\infty}\Pr[u\text{ has } k \text{ children and none of them are bad}]\\
&= & \sum_{k = 0}^{\infty}\frac{e^{-1}}{k!}\left(1-\int_0^y\q(z, x) dz\right)^k\\
&= & \exp{\left(-\int_0^y\q(z, x) dz\right)}.\\
\end{eqnarray*}
To solve this equation, note that $q$ is differentiable in $y$ (since $\int_0^y\q(z, x) dz$ is differentiable in $y$) and 
$$\frac{\partial q}{\partial y}= -\q(x, y)\q(y, x).$$ Since $\q(x, y)$ is differentiable in $y$,  $\int_0^y\q(z, x) dz$ is differentiable in $x$, and so $\q$ is differentiable in $x$ as well. Then, 
\begin{eqnarray*}
\frac{\partial q}{\partial x}& = &\q(x,y)\left(-\int_0^y \frac{\partial \q(z, x)}{\partial x} dz \right)\\
&= & \q(x,y)\left(\int_0^y \q(z, x)\q(x, z) dz \right)\\
&= & \q(x,y)\left(-\int_0^y \frac{\partial \q(x, z)}{\partial z} dz \right)\\
&=&\q(x, y)\left(1-\q(x,y)\right),
\end{eqnarray*}
since $\q(x, 0) = 1$. We know that $\q(0, y) = e^{-y}$, so this is a differential equation we can solve for each value of $y$ separately. It follows using variable separation that 
$$\q(x, y) = \frac{e^x}{e^x+e^y-1}.$$
Finally, note that 
\begin{eqnarray*}
\int_0^1 \int_0^1 \q(x,y)\q(y,x) dx dy& = &-\int_0^1\int_0^1 \frac{\partial q}{\partial y} dy dx\\
&= & \int_0^1 \q(x, 0)-\q(x, 1)dx\\
&= & 1-\int_0^1 \frac{e^x}{e^x+e-1}dx\\
&=&1-\ln \left(2-\frac{1}{e}\right),
\end{eqnarray*}
where we computed the last integral by substituting $u = e^x$.
\end{proof}

\subsection*{Acknowledgements}
This work was done in part while the authors were
visiting the Simons Institute for the Theory of Computing for the programs
on \textit{Data-Driven Decision Processes} and \textit{Graph Limits and Processes on Networks: From Epidemics to Misinformation}.
The authors thank anonymous reviewers from the Twenty-Fifth ACM Conference on \textit{Economics and Computation (EC'24)} for useful comments about positioning.

\bibliographystyle{abbrvnat}
\bibliography{refs}

\appendix

\section{Additions to Section \ref{sec:ocrs_positive}} \label{sec:OCRS_pos_additions}

\begin{definition} \label{def:strongly_online}
Given a deterministic OCRS $\psi$, we say that it is \textit{strongly online},
provided the following holds for every input $(G,\bm{x})$ and ordering $e_1, \ldots ,e_m$ of the edges $E$ of $G$:
\begin{itemize}
    \item For each $1 \le i \le m$, the event ``$e_i$ is matched by $\psi$'' is a function of $(X_{e_j}, x_{e_j})_{j=1}^{i}$.
\end{itemize}
We say a randomized CRS is strongly online, provided it is supported on
deterministic CRS's which are each strongly online.
\end{definition}

\begin{proof}[Proof of \Cref{lem:tree_reduction_game_part_1}]
We roughly follow the argument of \citet{kulkarni2022}. For the tree $T$, define the witness tree $W(T)$. An edge $f$ of $T \cup \partial T$ is in $W(T)$ if it is observed before every edge $f'$ that lies above it in $T$.\footnote{$W(T)$ is not literally a tree because it might have edges between $N^{\ell}(e)$ and $N^{\ell+1}(e)$ which create cycles. Here, we follow the notation in \citet{kulkarni2022}.}   By definition, the edges of $W(T)$ are observed in a bottom-up fashion. Note that whether the \Cref{alg:tree_aom} matches $e$ is a function of the edges in $W(T)$, and the decision it makes on those edges. Thus, so as far as bounding $\Pr[e \in \scr{M} \mid X_e =1]$ is concerned, we can assume that $T \cup \partial T = W(T)$. 

A few notes about this identification: identifying $T \cup \partial T$ with $W(T)$ might replace some of the $x_e$ values with $0$. For instance, if a vertex in $W(T)$ is a leaf, we think about it as a leaf of $T \cup \partial T$ after the identification. This might reduce the values of some of the $x_e$, but this is safe, since all we ever assume about the $x_e$ in our arguments is that $\sum_{e \in \partial (u)}x_{e} \leq 1$, and this constraint remains satisfied even after reducing some $x_e$.

Furthermore, if $W(T)$ does not have an edge that goes from $N^{\ell}(e)$ to $N^{\ell+1}(e)$, there is nothing to prove, since in this case, the neighborhood of $e$ is effectively a tree, so $\Pr[e \in \scr{M} \mid X_e =1] \geq \alpha$. We will therefore assume that $T$ still has depth $\ell$ even after $T \cup \partial T$'s identification with $W(T)$. This establishes the lemma.
\end{proof}

\begin{proof}[Proof of \Cref{lem:tree_reduction_game_part_2}]
Once again, we roughly follow the argument of \citet{kulkarni2022}. The recurrence \eqref{eqn:matched_tree_recurrence_game} implies $q_u$ and $q_v$ are linear functions in $Q_s$. As we go up a level of the tree calculating various $q_r$ using \eqref{eqn:matched_tree_recurrence_game}, the sign in front of $Q_s$ flips. Thus, $q_u$ and $q_v$ are decreasing functions of $Q_s$ when $\ell$ is odd, and increasing functions of $Q_s$ when $\ell$ is even. 

Since $\Pr[e \in \scr{M} \mid X_e =1]$ is proportional to $q_uq_v$, and in the worst case, we seek to minimize $\Pr[e \in \scr{M} \mid X_e =1]$, we should make $Q_s = 1$ when $\ell$ is odd, and $Q_s = 0$ when $\ell$ is even. This is true for any vertex $s$, so this establishes the lemma.
\end{proof}

\begin{proof}[Proof of \Cref{lem:simple_optimization}] Let us concentrate first on establishing the maximizing version of the lemma. We prove the statement by induction on $k$. First, let us show that the statement is true for $k = 2$. Note that:
\begin{align*}
  &\qquad& \left(1 + \frac{\eps \alpha x_1}{1 - \alpha x_1} \right)\cdot\left(1 + \frac{\eps \alpha x_2}{1 - \alpha (x_1+x_2)} \right) &\leq 1 + \frac{ \eps \alpha (x_1+x_2)}{1 - \alpha (x_1+x_2)} \\
 \iff&& 1 + \frac{\eps \alpha x_1}{1 - \alpha x_1} + \frac{\eps \alpha x_2}{1 - \alpha (x_1+x_2)} + \frac{\eps^2 \alpha^2 x_1x_2}{(1 - \alpha x_1)(1 - \alpha (x_1+x_2))} &\leq 1 + \frac{ \eps \alpha (x_1+x_2)}{1 - \alpha (x_1+x_2)}  \\
\iff&& \frac{\eps \alpha x_1}{1 - \alpha x_1} + \frac{\eps^2 \alpha^2 x_1x_2}{(1 - \alpha x_1)(1 - \alpha (x_1+x_2))} &\leq \frac{ \eps \alpha x_1}{1 - \alpha (x_1+x_2)} \\
\iff&& \eps \alpha x_1(1 - \alpha (x_1+x_2)) + \eps^2 \alpha^2 x_1x_2 &\leq \eps \alpha x_1 (1 - \alpha x_1) \\
\iff&&  - \eps \alpha^2 x_1x_2 + \eps^2\alpha^2 x_1x_2 &\leq   0 \\
\iff&&   \eps \alpha^2 x_1x_2 (\eps -1) &\leq 0.
\end{align*}
It follows immediately that the inequality we started with is true, and furthermore, that the inequality is in fact an equality when $x_1$ or $x_2$ is equal to 0. Hence the statement is true for $k = 2$, and the base case is established.

Suppose we have established the statement for $k = r$, and we are attempting to establish the statement for $k = r+1$. Then, notice that:

\begin{align*}
  \prod_{i=1}^{r+1} \left(1 + \frac{\eps \alpha x_i}{1 - \alpha \sum_{j \le i}x_j} \right) &\leq \left(1 + \frac{ \eps \alpha (x_1+x_2)}{1 - \alpha (x_1+x_2)}\right)\prod_{i=3}^{r+1} \left(1 + \frac{\eps \alpha x_i}{1 - \alpha \sum_{j \le i}x_j} \right) \\
 &\leq 1 + \frac{ \eps \alpha \bar{x}}{1 - \alpha \bar{x}}.
\end{align*}
The first inequality follows from an application of the case $k = 2$, and the second follows from the inductive hypothesis. Furthermore, the inequality is tight when $x_2 = x_3 = \ldots = x_n = 0$. The desired result follows.

The proof of the minimizing version of the lemma is almost entirely identical, but we include a proof for completness. We prove the statement by induction on $k$. First, let us show that the statement is true for $k = 2$. Note that:

\begin{align*}
  &\qquad& \left(1 - \frac{\eps \alpha x_1}{1 - \alpha x_1} \right)\cdot\left(1 - \frac{\eps \alpha x_2}{1 - \alpha (x_1+x_2)} \right) &\geq 1 - \frac{ \eps \alpha (x_1+x_2)}{1 - \alpha (x_1+x_2)} \\
 \iff&& 1 - \frac{\eps \alpha x_1}{1 - \alpha x_1} - \frac{\eps \alpha x_2}{1 - \alpha (x_1+x_2)} + \frac{\eps^2 \alpha^2 x_1x_2}{(1 - \alpha x_1)(1 - \alpha (x_1+x_2))} &\geq 1 - \frac{ \eps \alpha (x_1+x_2)}{1 - \alpha (x_1+x_2)}  \\
\iff&& -\frac{\eps \alpha x_1}{1 - \alpha x_1} + \frac{\eps^2 \alpha^2 x_1x_2}{(1 - \alpha x_1)(1 - \alpha (x_1+x_2))} &\geq -\frac{ \eps \alpha x_1}{1 - \alpha (x_1+x_2)} \\
\iff&& -\eps \alpha x_1(1 - \alpha (x_1+x_2)) + \eps^2 \alpha^2 x_1x_2 &\geq -\eps \alpha x_1 (1 - \alpha x_1) \\
\iff&&  \eps \alpha^2 x_1x_2 + \eps^2\alpha^2 x_1x_2 &\geq   0.
\end{align*}
It follows immediately that the inequality we started with is true, and furthermore, that the inequality is in fact an equality when $x_1$ or $x_2$ is equal to 0. Hence the statement is true for $k = 2$, and the base case is established.

Suppose we have established the statement for $k = r$, and we are attempting to establish the statement for $k = r+1$. Then, notice that:

\begin{align*}
  \prod_{i=1}^{r+1} \left(1 - \frac{\eps \alpha x_i}{1 - \alpha \sum_{j \le i}x_j} \right) &\geq \left(1 - \frac{ \eps \alpha (x_1+x_2)}{1 - \alpha (x_1+x_2)}\right)\prod_{i=3}^{r+1} \left(1 - \frac{\eps \alpha x_i}{1 - \alpha \sum_{j \le i}x_j} \right) \\
 &\geq 1 - \frac{ \eps \alpha \bar{x}}{1 - \alpha \bar{x}}.
\end{align*}
The first inequality follows from an application of the case $k = 2$, and the second follows from the inductive hypothesis. Furthermore, the inequality is tight when $x_2 = x_3 = \ldots = x_n = 0$. The desired result follows.
\end{proof}

\begin{proof}[Proof of \Cref{prop:two_round_coupling}]
We first handle the case when $x = yz$. In this case, clearly $y, z \ge x$.

If $X = 1$, then set $Y = Z =1$. 
Otherwise, if $X=0$, there are three outcomes to assign probability
mass to: $Y =1$ and $Z=0$, $Y=0$ and $Z=1$, and $Y=0$ and $Z=0$.
Denote the amount assigned to each by $a_{1,0}, a_{0,1}$ and $a_{0,0}$, respectively. We wish
to determine $a_{i,j} \in [0,1]$ for which the following hold:
\begin{enumerate}
    \item $a_{0,0} + a_{0,1} + a_{1,0} =1$
    \item $(1 - x) a_{1,0} + x = y$
    \item $(1 - x) a_{0,1} +  x = z$
\end{enumerate}
Observe that $a_{1,0} = (y -x)/(1 -x)$ and $a_{0,1} = (z -x)/(1- x)$. Clearly each
is within $[0,1]$, as $y-x, z-x \ge 0$. Moreover, $a_{0,0} + a_{0,1} + a_{1,0} = (y + z - 2x)/(1 - x) + a_{0,0}$.
So it suffices to show that $(y + z - 2x)/(1 - x) \le 1$ to conclude $a_{0,0} \in [0,1]$. But this is equivalent
to $y + z \le 1 + x = 1 + yz$. Since $y + z \le 1 + yz$ for any $y, z \in [0,1]$,
$a_{0,0}= 1 - a_{0,1} - a_{1,0} \in [0,1]$.

Finally, observe that $Y$ and $Z$ are independent. This is because
$\Pr[Y=1, Z=1] = \Pr[X=1] = x = y z$.

Let us now suppose that $x \ge y z$. We first couple $X' \sim \Ber(yz)$
and such that $X \ge X'$. By applying the previous construction
to $X'$, we get $Y$ and $Z$ which satisfy $X \ge X' = Y Z$.
\end{proof}

\section{Additions to Section \ref{sec:rcrs_pos}} \label{sec:RCRS_additions}
In order to prove \Cref{thm:reductiontogw}, we first restate the following result proven
by \citet{nuti2023towards} (see Lemma $1$ in this work). 

\begin{lemma}[\citealp{nuti2023towards}]\label{lemma:tree}
Let $T_0$  be any finite realization of $\T$. For any edge $e$ of $G$:
$$| \Pr[S_e =  T_0 \mid X_e = 1] - \Pr[\T=T_0]| \leq  3\varepsilon(n) |T_0|^2.$$
\end{lemma}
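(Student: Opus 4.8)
The plan is to run, in parallel, a breadth-first exploration of the active component $S_e$ inside $G$ and the breadth-first generation of the Galton--Watson tree $\T$, and to show that the two processes agree ``step by step'' up to an $O(\eps)$ error per vertex; summing these errors over the $|T_0|$ vertices of the target tree gives the bound. Fix a finite rooted tree $T_0$ with vertex set $w_1 = u, w_2 = v, w_3, \dots, w_m$ listed in breadth-first, left-to-right order, where $c_i$ denotes the number of children of $w_i$ in $T_0$. I will treat $S_e$, $\T$, and $T_0$ as \emph{ordered} rooted trees (children of each vertex ordered: for $\T$ by the order children are added, for $S_e$ by a fixed enumeration of each vertex's $G$-edges), which loses nothing since the downstream matching computations ignore the order. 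Because an ordered rooted tree is determined by its breadth-first child-count sequence, generating $\T$ by sampling $\Pois(1)$ offspring at $w_1, w_2, \dots$ in this order gives
$$ \Pr[\T = T_0] \;=\; \prod_{i=1}^{m} \Pr[\Pois(1) = c_i] \;=\; \prod_{i=1}^{m} \frac{e^{-1}}{c_i!}. $$
Likewise, exploring the active component of $e$ in the same order --- process $w_i$ by revealing all of its $G$-edges not yet revealed, declare the active ones leading to undiscovered vertices to be its children, and declare failure if some active edge leads to an already-discovered vertex (which is exactly the event that a cycle appears in $S_e$) --- yields $\Pr[S_e = T_0 \mid X_e = 1] = \prod_{i=1}^m a_i$, where $a_i$ is the conditional probability, given that steps $1, \dots, i-1$ produced the prescribed prefix of $T_0$, that processing $w_i$ produces exactly $c_i$ new children and no ``back/cross'' edge. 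Since $a_i, b_i := e^{-1}/c_i! \in [0,1]$, the telescoping inequality $\bigl|\prod a_i - \prod b_i\bigr| \le \sum_i |a_i - b_i|$ reduces the lemma to proving an $O(m\eps)$ bound on $|a_i - b_i|$ for each $i$, with the right constant.

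To estimate $a_i$, observe that when $w_i$ is processed, the only $G$-edge at $w_i$ already revealed and active is the edge to its parent, while every other $G$-edge from $w_i$ to a previously processed vertex has been revealed inactive (an active such edge would be a back/cross edge at an earlier step, contradicting the conditioning); the remaining, still-unrevealed edges at $w_i$ split into those going to currently discovered-but-unprocessed vertices (``internal'') and those going to undiscovered vertices (``external''). Step $i$ succeeds iff all internal edges are inactive and exactly $c_i$ external edges are active, and these are independent events given the history, so $a_i = \bigl(\prod_{f\ \mathrm{internal}}(1-x_f)\bigr)\cdot \Pr[B_i = c_i]$, where $B_i$ is a sum of independent $\Ber(x_f)$ over the external edges. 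There are at most $m$ internal edges, each with $x_f \le \eps$, giving $\bigl|\prod_{f\ \mathrm{internal}}(1-x_f) - 1\bigr| \le m\eps$; and, crucially using $1$-regularity ($\sum_{f \in \partial_G(w_i)} x_f = 1$), the external edges carry total mass $\lambda_i := \sum_{f\ \mathrm{ext}} x_f \in [1 - m\eps, 1]$, since the mass lost to the parent edge and to the at most $m-2$ edges to already-seen vertices is at most $(m-1)\eps$. Now Le Cam's inequality gives $\|\mathrm{law}(B_i) - \Pois(\lambda_i)\|_{\mathrm{TV}} \le \sum_{f\ \mathrm{ext}} x_f^2 \le \eps\lambda_i \le \eps$, and the standard coupling bound $\|\Pois(\lambda) - \Pois(\mu)\|_{\mathrm{TV}} \le |\lambda-\mu|$ gives $\|\Pois(\lambda_i) - \Pois(1)\|_{\mathrm{TV}} \le m\eps$; combining, $|\Pr[B_i = c_i] - b_i| \le (m+1)\eps$. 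Therefore $|a_i - b_i| \le m\eps \cdot 1 + (m+1)\eps = (2m+1)\eps \le 3m\eps$, and summing over $i$ yields $|\Pr[S_e = T_0 \mid X_e = 1] - \Pr[\T = T_0]| \le 3\eps(n)\,m^2 = 3\eps(n)|T_0|^2$.

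I expect the main obstacle to be not the probabilistic estimates --- Le Cam's inequality and the Poisson mean-comparison are off the shelf --- but making the step-by-step decomposition fully rigorous: one must fix the ordered-tree and processing-order conventions carefully so that the event $\{S_e = T_0\}$ factors as a genuine product of conditional probabilities without any hidden over- or under-counting of automorphic copies, and one must keep precise track, at each step, of which edges incident to $w_i$ have already been exposed (so that the internal and external edges at $w_i$ can legitimately be treated as fresh independent coins given the history) and of the back-/cross-edge bookkeeping that equates ``every step succeeds'' with ``$S_e$ is a tree equal to $T_0$''. It is worth highlighting where $1$-regularity enters: it is exactly what pins each local offspring mean $\lambda_i$ to within $O(m\eps)$ of $1$, so that the comparison is against $\Pois(1)$ rather than a Poisson of uncontrolled mean --- for a general fractional matching the statement would be false.
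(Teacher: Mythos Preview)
The paper does not prove this lemma; it is quoted from \citet{nuti2023towards} (their Lemma~1), so there is no in-paper argument to compare against. Your approach --- a parallel BFS exploration of $S_e$ and the Galton--Watson generation of $\T$, factorized step by step, with Le~Cam's inequality and a Poisson mean-shift bound controlling each step --- is the standard way such a local-weak-convergence statement is proved, and your bookkeeping is correct: $1$-regularity enters exactly where it must (to pin each $\lambda_i$ to within $(m-1)\eps$ of $1$), the per-step error $(2m+1)\eps\le 3m\eps$ delivers the stated constant, and you correctly anticipate the delicate points (which edges at $w_i$ are already exposed when it is processed, and the ordered-tree convention needed for the product factorization to be literal). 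One small caveat worth making explicit: you prove the bound for \emph{ordered} rooted trees, whereas the paper's note immediately following the lemma interprets ``$S_e=T_0$'' via graph isomorphism (unordered); the unordered version does not follow from the ordered one with the same constant by naive summation over orderings, but --- as you observe --- the ordered version is all that Proposition~\ref{prop:lemmacorrs} and Theorem~\ref{thm:reductiontogw} actually require, so the substitution is harmless for the paper's purposes.
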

Note that $S_e = T_0$ means that there is an isomorphism from the graphs $S_e$ to $T_0$ which, if $e = (u_0, v_0)$, sends $u_0$ to the special vertex $u \in T_0$ and $v_0$ to the special vertex $v \in T_0$.

We next establish the key proposition needed to prove \Cref{thm:reductiontogw}:
\begin{proposition}\label{prop:lemmacorrs} 
Fix any set of finite trees $F = \{T_1, T_2, \ldots, T_m\}$, the following hold:
\begin{enumerate}
    \item $|\Pr[S_e \notin F\mid X_e = 1]-\Pr[\T \notin F]|\leq 3\varepsilon \sum|T_i|^2.$
    \item $|\Pr[e \in \scr{M}, S_e \in F\mid X_e = 1] - \Pr[(u, v) \in \scr{M}_{\T}, \T\in F]| \leq 3\varepsilon \sum|T_i|^2.$
\end{enumerate}
\end{proposition}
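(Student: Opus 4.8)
The plan is to deduce Proposition~\ref{prop:lemmacorrs} from Lemma~\ref{lemma:tree} by summing that estimate over the (finite) family $F$. Recall that $S_e$, conditioned on $X_e = 1$, is a random finite graph, and $\T$ is a random finite tree; Lemma~\ref{lemma:tree} controls the total-variation-type discrepancy between their laws, tree by tree, with an error that is quadratic in the size of the tree. Both parts of the proposition are statements about events that depend only on the isomorphism type of $S_e$ (together with, in part 2, the randomness of the arrival order), so the natural approach is to condition on the realized shape.

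First I would prove part 1. Write $\Pr[S_e \in F \mid X_e = 1] = \sum_{i=1}^m \Pr[S_e = T_i \mid X_e = 1]$ and $\Pr[\T \in F] = \sum_{i=1}^m \Pr[\T = T_i]$; here I should be a little careful since the $T_i$ need not be pairwise non-isomorphic, but we may assume they are, or else note that $S_e = T_i$ and $\T = T_i$ are interpreted up to the rooted isomorphism described after Lemma~\ref{lemma:tree}, so the identities still hold with the sums ranging over distinct isomorphism types. Then by the triangle inequality and Lemma~\ref{lemma:tree},
\begin{align*}
|\Pr[S_e \in F \mid X_e = 1] - \Pr[\T \in F]| &\le \sum_{i=1}^m |\Pr[S_e = T_i \mid X_e = 1] - \Pr[\T = T_i]| \\
&\le \sum_{i=1}^m 3\varepsilon |T_i|^2 = 3\varepsilon \sum_{i=1}^m |T_i|^2,
\end{align*}
and taking complements gives the claimed bound for $\Pr[S_e \notin F \mid X_e = 1]$ versus $\Pr[\T \notin F]$ with the same right-hand side.

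For part 2 the key observation is the second of the three bullet points in Section~\ref{sec:rcrs_pos}: whether the greedy CRS selects the active edge $e$ depends \emph{only} on $S_e$ together with the order in which the edges of $S_e$ are presented, and that order is itself uniformly random (respectively, a lexicographical random order) among orderings of $S_e$'s edges, independently of the shape $S_e$ given $X_e=1$. Consequently, conditioned on $S_e = T_i$, the probability that $e \in \scr{M}$ equals exactly $\Pr[(u,v) \in \scr{M}_{\T} \mid \T = T_i]$ — call this common value $p_i \in [0,1]$. Therefore $\Pr[e \in \scr{M},\, S_e \in F \mid X_e = 1] = \sum_{i=1}^m p_i \Pr[S_e = T_i \mid X_e = 1]$ and $\Pr[(u,v)\in\scr{M}_{\T},\, \T \in F] = \sum_{i=1}^m p_i \Pr[\T = T_i]$, so
\begin{align*}
|\Pr[e \in \scr{M},\, S_e \in F \mid X_e = 1] - \Pr[(u,v)\in\scr{M}_{\T},\, \T \in F]| &\le \sum_{i=1}^m p_i\, |\Pr[S_e = T_i \mid X_e=1] - \Pr[\T = T_i]| \\
&\le \sum_{i=1}^m |\Pr[S_e = T_i \mid X_e=1] - \Pr[\T = T_i]| \le 3\varepsilon \sum_{i=1}^m |T_i|^2,
\end{align*}
using $p_i \le 1$ and Lemma~\ref{lemma:tree} termwise.

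The only genuinely delicate point — and the step I expect to need the most care — is justifying the factorization $\Pr[e \in \scr{M} \mid S_e = T_i, X_e = 1] = \Pr[(u,v)\in\scr{M}_{\T}\mid \T=T_i]$: one must argue that (i) the event $\{e\in\scr{M}\}$ really is measurable with respect to $S_e$ and the internal arrival order of $S_e$ (inactive edges outside $S_e$, and the arrival times of edges not in $S_e$, are irrelevant because greedy never matches an inactive edge and an edge of $G\setminus S_e$ incident to $S_e$ is inactive hence ignored), and (ii) conditioned on $S_e = T_i$ and $X_e=1$, the induced order on $S_e$'s edges has exactly the same distribution as the order used to build $\scr{M}_{\T}$ from $\T = T_i$ — uniform over all orderings in the RCRS case, and the seed-induced lexicographical order in the FO-CRS case, where one additionally checks that the vertex seeds restricted to $S_e$ are i.i.d.\ uniform and independent of the shape. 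These are exactly the structural facts isolated in the three numbered observations of Section~\ref{sec:rcrs_pos}; once they are in hand, the rest is the routine termwise summation above.
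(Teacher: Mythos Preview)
Your proof is correct and follows essentially the same approach as the paper: summing Lemma~\ref{lemma:tree} termwise over $F$ for part~1, and for part~2 using that $\Pr[e\in\scr{M}\mid S_e=T_i,X_e=1]=\Pr[(u,v)\in\scr{M}_{\T}\mid\T=T_i]$ (which the paper justifies in one line by noting the greedy CRS depends only on active edges) before the same termwise bound. Your added discussion of why this conditional-probability identity holds is more careful than the paper's, but the argument is otherwise identical.
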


\begin{proof}[Proof of \Cref{prop:lemmacorrs}]
For the first result, note that \Cref{lemma:tree} implies
\begin{equation*}
|\Pr[S_e \in F\mid X_e = 1]-\Pr[\T \in F]| \leq 3\varepsilon \sum|T_i|^2
\end{equation*}
and hence it follows that
\begin{equation*}
|\Pr[S_e \notin F\mid X_e = 1]-\Pr[\T \notin F]|\leq 3\varepsilon \sum|T_i|^2.
\end{equation*}
For the second result, notice that since \textsc{Greedy-CRS} only depends on
the active edges of $G$,
\begin{equation*}
\Pr[e \in \scr{M}\mid X_e = 1, S_e = T_i] = \Pr[(u, v) \in \scr{M}_{\T}\mid \T = T_i].   
\end{equation*}
Hence, it follows that
\begin{eqnarray*}
     &&|\Pr[e \in \scr{M}, S_e \in F\mid X_e = 1] - \Pr[(u, v) \in \scr{M}_{\T}, \T\in F]|\\
    & =& \left|\sum_{i=1}^m\Pr[e \in \scr{M}\mid S_e = T_i, X_e = 1]\Pr[S_e  = T_i \mid X_e = 1] - \sum_{i=1}^m\Pr[(u, v) \in \scr{M}_{\T}|\T= T_i]\Pr[\T = T_i]\right|\\
    & =& \left|\sum_{i=1}^m \Pr[(u, v) \in \scr{M}_{\T}|\T= T_i](\Pr[S_e  = T_i \mid X_e = 1] - \Pr[\T = T_i])\right|\\
    & \leq & \sum_{i=1}^m |\Pr[S_e  = T_i \mid X_e = 1] - \Pr[\T = T_i]|\\
    & \leq& 3\varepsilon \sum|T_i|^2.\\
\end{eqnarray*}
\end{proof}
\begin{proof}[Proof of \Cref{thm:reductiontogw}]
First of all, note that the triangle inequality implies
\begin{eqnarray*}
     &&|\Pr[e \in \scr{M}\mid X_e = 1] - \Pr[(u, v) \in \scr{M}_{\T}]|\\
    & \leq& |\Pr[e \in \scr{M}, S_e \notin F \mid X_e = 1] - \Pr[(u, v) \in \scr{M}_{\T}, \T \notin F]| \\ &&+ |\Pr[e \in \scr{M}, S_e \in F \mid X_e = 1] - \Pr[(u, v) \in \scr{M}_{\T}, \T \in F]|\\
    & \leq& \Pr[S_e \notin F \mid X_e = 1] + \Pr[\T \notin F] \\&& + |\Pr[e \in \scr{M}, S_e \in F \mid X_e = 1] - \Pr[(u, v) \in \scr{M}_{\T}, \T \in F]|,\\
\end{eqnarray*}
It follows from \Cref{prop:lemmacorrs} that: 
\begin{equation*}
|\Pr[e \in \scr{M}\mid X_e = 1] - \Pr[(u, v) \in \scr{M}_{\T}]| \leq 2\Pr[\T \notin F] + 6\varepsilon \sum|T_i|^2,
\end{equation*}
and the inequality is true for every $F$. Now fix a $\delta > 0$, and consider an $F$ with the property that $\Pr[\T \notin F] < \delta/3.$ Such an $F$ exists since $\T$ is finite with probability 1. Then, find an $\varepsilon$ so small that $6\varepsilon \sum|T_i|^2 < \delta/3$. For such an $\varepsilon$, we must have 
\begin{equation*}
|\Pr[e \in \scr{M}\mid X_e = 1] - \Pr[(u, v) \in \scr{M}_{\T}]| \leq 2\Pr[\T \notin F] + 6\varepsilon \sum|T_i|^2 < \delta,
\end{equation*}
and Theorem~\ref{thm:reductiontogw} immediately follows. \end{proof}

\end{document}